\DeclareMathAlphabet{\pazocal}{OMS}{zplm}{m}{n}
\providecommand{\keywords}[1]{\textbf{\textit{Index terms---}} #1}
\theoremstyle{plain}
\newtheorem{theorem}{Theorem}[section]
\newtheorem{lemma}[theorem]{Lemma}
\newtheorem*{assumption*}{Assumption}
\newtheorem{proposition}[theorem]{Proposition}
\newtheorem{remark}[theorem]{Remark}
\theoremstyle{definition}
\newtheorem*{remark*}{Remark}
\theoremstyle{remark}
\numberwithin{equation}{section}
\newcommand{\ep}{\varepsilon}
\newcommand{\ffi}{\varphi}
\newcommand{\R}{\mathbb{R}}
\newcommand{\PP}{\mathbb{P}}
\newcommand{\EE}{\mathbb{E}}
\newcommand{\T}{\mathcal{T}}
\newcommand{\LL}{\mathcal{L}}
\newcommand{\mbf}{\mathbf}
\newcommand{\supp}{\mathrm{supp}\,}
\newcommand{\h}{h_{\ep,\gamma}}
\newcommand{\norm}[1]{\Vert #1\Vert}
\date{\today}
\title[]{Derivation of the linear Landau equation and linear Boltzmann equation from the Lorentz model with magnetic field}
\author[M. Marcozzi]{M. Marcozzi}
\address[Matteo Marcozzi]{University of Helsinki,
Department of Mathematics and Statistics
FI-00014 Helsingin yliopisto, Finland}
\email[M. Marcozzi]{matteo.marcozzi@helsinki.fi}
\author[A. Nota]
{A. Nota}
\address[Alessia Nota]{Institute for Applied Mathematics,
University of Bonn, Endenicher Allee 60,
D-53115 Bonn, Germany }
\email[A. Nota]{nota@iam.uni-bonn.de}
\begin{document}
\maketitle

\begin{abstract}
We consider a test particle moving in a random distribution of obstacles in the plane, under the action of a uniform magnetic field, orthogonal to the plane. We show that, in a weak coupling limit, the particle distribution behaves according to the linear Landau equation with a magnetic transport term. Moreover, we show that, in a low density regime, 
when each obstacle generates an inverse power law potential, the particle distribution behaves according to the linear Boltzmann equation with a magnetic transport term. We provide an explicit control of the error in the kinetic limit by estimating the contributions of the configurations which prevent the Markovianity. 
We compare these results with those ones obtained for a system of hard disks in \cite{BMHH}, %, \cite{BMHH1}, \cite{BHPH} 
which show instead that the memory effects are not negligible in the Boltzmann-Grad limit. 
\end{abstract}

\vspace*{2mm}
\keywords{{\it Keywords}: Lorentz gas; magnetic field; linear Boltzmann equation; linear Landau equation; low density limit; weak coupling limit.}

\tableofcontents

\section{Introduction}
Consider a point particle of mass $ m=1$ in $\R^d$, $d=2,3$ moving in a random distribution of fixed scatterers, whose centers are denoted by $(c_1,\dots,c_N)$.\\
We assume that the scatterers are distributed according to a Poisson distribution of parameter $\mu>0$. The equations of motion are
\begin{equation}\label{eq:motion_no_B}
\left\{\begin{array}{ll}
\dot{x}=v&\\
\dot{v}=-\sum_{i}\nabla\phi(|x-c_i|)&,
\end{array}\right.
\end{equation}
here $(x,v)$ denote position and velocity of the test particle, $t$ the time and $\dot{A}=\frac{\,dA}{\,dt}$ for any time dependent variable $A$.

To outline a kinetic behavior it is usually introduced a scaling of the space-time variables and the density of the scatterer distribution.
For this model, it is more physically intuitive to transfer the scaling to the background medium. 
More precisely, let $\ep>0$ be a parameter indicating the ratio between the macroscopic and microscopic variables, 
we keep time and space fixed and rescale the range of the interaction and the density of the scatterers, i.e.
\begin{equation}\label{eq:scaling}
\left.\begin{array}{ll}\vspace{1.5mm}
\phi_\ep(x)=\ep^{\alpha}\phi(\frac{x}{\ep})&\\
\mu_{\varepsilon}=\mu\,\varepsilon^{-(d-1+2\alpha)}%\mu_{\ep}=\ep^{-d+1}\mu&.
\end{array}\right.
\end{equation} 
where $d=2,3$ is the dimension of the physical space and $\alpha\in[0,\frac{1}{2}]$ is a suitable parameter.  
This means that the probability of finding $N$ obstacles in a bounded measurable set $\Lambda\subset\R^d$ is given by
\begin{equation}\label{poisson}
\PP_{\ep}(\,d\mbf{c}_{N})=e^{-\mu_{\ep}|\Lambda|}\frac{\mu_{\varepsilon}^{N}}{N!}\,dc_1\cdots\,dc_N
\end{equation}
where $\mbf{c}_{N}=c_1,\dots,c_N$ and $|\Lambda|=\text{meas}(\Lambda)$.
Consequently, the equation of motion \eqref{eq:motion_no_B} becomes
\begin{equation}\label{eq:scaled_no_B}
\left\{\begin{array}{ll}
\dot{x}=v&\\
\dot{v}=-\ep^{\alpha-1}\sum_{i}\nabla\phi(\frac{|x-c_i|}{\ep})&.
\end{array}\right.
\end{equation}
Now let $T^t_{\mbf{c}_N}(x,v)$ be the Hamiltonian flow solution to equation \eqref{eq:scaled_no_B} with initial datum $(x,v)$ in a given sample 
of obstacles (skipping the $\ep$ dependence for notational simplicity) and, for a given probability distribution $f_0=f_0(x,v)$, consider the quantity
\begin{equation}\label{expect_density}
f_{\ep}(x,v,t)=\EE_{\ep}[f_0(T^{-t}_{\mbf{c}_N}(x,v))]
\end{equation}
where $\EE_{\ep}$ is the expectation with respect to the measure $\PP_{\ep}$ given by \eqref{poisson}.

In the limit $\ep\rightarrow 0$ we expect that the probability distribution \eqref{expect_density} will solve a linear kinetic equation depending on the value of $\alpha$. 
If $\alpha=0$ the limit corresponding to such a scaling is called low-density (or Boltzmann-Grad) limit. In this case $f_{\ep}$ converges to the solution of a linear Boltzmann equation. See \cite{G,S,BBS,DP}.
On the other hand, if $\alpha=\frac{1}{2}$ the corresponding limit, called weak-coupling limit, yields the linear Landau equation, as proven in 
\cite{KP,DGL,K}.
The intermediate scaling, namely $\alpha\in(0,\frac{1}{2})$, although refers to a low-density regime, leads to the linear Landau equation again, see \cite{DR,K}.

%We want to remark that in %\cite{S}, \cite{BBS}, \cite{KP}, \cite{DGL}, \cite{K} the authors make use of abstract techniques while in 
%\cite{DP}, \cite{DR} the original constructive idea due to Gallavotti (see \cite{G}) for the Boltzmann-Grad limit is followed. This method is based on a suitable change of variables which can be implemented outside a set $ E$ of pathological events (like recollisions) whose probability $\PP_{\ep}(E)$ is vanishing as $ \ep$ tends to 0. 
We want to remark that in \cite{DP,DR} the authors exploit the original constructive idea due to Gallavotti (see \cite{G}) for the Boltzmann-Grad limit. This method is based on a suitable change of variables which can be implemented outside a set $ E$ of pathological events which prevent the Markov property of the limit (such as the set of configurations yielding recollisions, i.e. when the test particle recollides with a given obstacle after having
suffered collisions with other different obstacles). The probability $\PP_{\ep}(E)$ is vanishing as $ \ep$ tends to 0. The main difference is that in \cite{DP} the range of the potential is infinite in the limit, therefore the test particle interacts with infinitely many obstacles. As for the case of the long range potential considered in \cite{DP} also in \cite{DR} there is a lack of the semi explicit form of the solution of the limit equation. This requires explicit estimates for the set of bad configurations of obstacles. For a short range potential, like in the case of the hard-sphere potential considered in \cite{G}, a simple dimensional argument is sufficient. For an explicit control of the error in the kinetic limit for the hard-sphere potential see for instance \cite{BNPP}.
%To be more precise consider, as in \cite{G}, the hard-sphere potential. Since the Boltzmann equation is the forward equation of a Markov jump process in the velocity variable, events like recollisions must be negligible in the limit. 
%For an explicit control of the error in the kinetic limit see for instance \cite{BNPP}.
Moreover, in \cite{DR,K} it was proven that even if $\alpha>0$, but sufficiently small, the recollisions are still negligible. Incidentally we note that, if $\alpha$ is close to $1/2$, this is not true anymore and it would be interesting to derive the Landau equation in this regime, by means of an explicit constructive approach.

\begin{figure}[t] 
   \centering
  \includegraphics[scale=0.4]{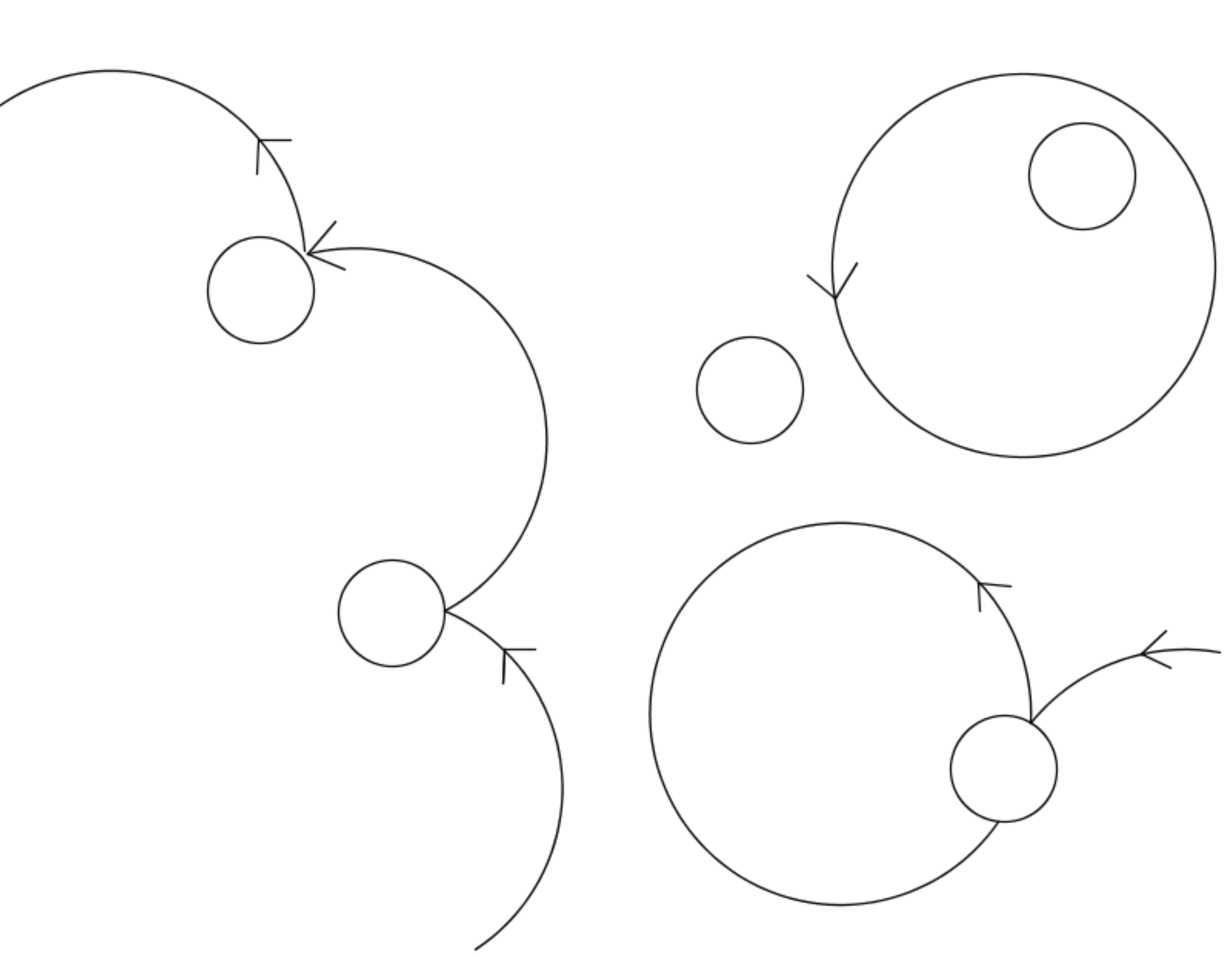} 
   \caption{Typical paths of the charged test particle when the obstacles are hard disks: due to the magnetic field, it performs arcs of circle between two consecutive collisions.}
   \label{fig:hardSphere}
\end{figure}

Furthermore it has been observed that the presence of a given external field, in the two dimensional Lorentz model, strongly affects the derivation of the linear Boltzmann equation in the Boltzmann-Grad limit. Bobylev et al, in \cite{BMHH} and later in \cite{BMHH1, BHPH} (see also \cite{KS} for further readings), showed that the set of pathological configurations is no longer negligible when the test particle moves in a plane with a Poisson distribution of hard disks and a uniform and constant magnetic field perpendicular to the plane. See Figure \ref{fig:hardSphere} for a pictorial representation of the light particle's motion. 

The following simple computation turns out to give a good heuristic argument explaining these results: consider the probability $\PP_{R_L}$ of performing an entire Larmor circle without hitting any obstacle, $R_L$ being the Larmor radius. From equation (\ref{poisson}) one easily gets  
$$\PP_{R_L}\simeq e^{-\mu_{\ep} Area(\mathcal{A}_{\ep})} \simeq e^{- 2\pi R_L\mu},$$ 
where $ \mathcal{A}_{\ep}(R_L)$ is the annulus of radius $ R_L$ and width $ \ep$. Hence, $\PP_{R_L} $ is not vanishing in the limit $\ep\to 0$ and the Markovianity of the limit system can not be attained. In fact, in \cite{BMHH1,BHPH}, a kinetic equation with memory is derived, i.e. a generalized Boltzmann equation, taking into account those effects: 
\begin{equation}\label{N:GBE}
\begin{split}
\frac{D}{Dt}f^{G}(x,v,t)=&%na\sum
\mu_{\ep}\ep\sum_{k=0}^{[t/T_L]}e^{-\nu k T_L}\int_{S_1}dn\,(v\cdot n)\\&\quad\, [\chi(v\cdot n)b_n+\chi(-v\cdot n)]f^{G}(x,S_0^{-k}v,t-kT_L),
\end{split}
\end{equation}
where $f(x,v,t)$ is the probability density of finding the moving particle at time $t$ at position $x$ with velocity $v$ and
\begin{equation}
f^{G}(x,v,t)=\left\{\begin{array}{ll}
f(x,v,t)\quad\quad\quad\quad\quad\quad\quad \text{if}\;0<t<T_L&\\
(1-e^{-\nu T_L})f(x,v,t)\quad\quad \text{if}\;t>T_L.&
\end{array}\right.
\end{equation}
Here $\nu=2|v|\mu_{\ep}\ep$ is the collision frequency and $T_L= 2 \pi /\Omega$ is the cyclotron period where $ \Omega= qB /m$ is the frequency, being $q$ the charge and $m$ the mass. Furthermore, note that
$$\frac{D}{Dt}=\left(\partial_{t}+v\cdot\nabla_{x}+(v\times B)\cdot\nabla_{v}\right)$$
is the generator of the free cyclotron motion with frequency $\Omega$ and $[t/T_L]$ the number of cyclotron periods $T_L$ completed before time $t$. The angular integration over the unit vector $n$ in \eqref{N:GBE} is over the entire unit sphere $S_{1}$ centered at the origin. In the gain term the operator $b_n$ is defined by
$$b_n\phi(v)=\phi(v-2(v\cdot n) n) $$
where $\phi(v)$ is an arbitrary function of $v$. %
The precollisional velocity $v'=v-2(v\cdot n)n$ becomes $v$ after the elastic collision with the hard disk. Note that $v'\cdot n<0$. In the loss term, the precollisional velocity $v$ is also from the hemisphere $v\cdot n<0$. 
Finally, the shift operator $S_0^{-k}$, when acting on $v$, 
rotates the velocity through the angle $-k\theta$, where $\theta$ is the scattering angle (from $v'$ to $v$).

For further readings in this direction we refer to \cite{DR1,DR2}, where the authors consider a stochastic Lorentz model with a smooth external force field $ F(x,t)$ and with absorbing obstacles, i.e. the interaction between the obstacles and the test particle is such that the test particle disappears whenever it enters an obstacle. It is proved that the kinetic equation associated to this model in the Boltzmann-Grad limit is non-Markovian and that the Markovianity can be recovered by introducing an additional stochasticity in the velocity distribution of the obstacles.

% For further readings in this direction we refer to \cite{DR1,DR2}, where a suitable stochastic Lorentz model with
%an external force field is considered. They show that certain fields prevent the limit process from being Markovian and they prove that the Markovianity of the limit can be recovered by introducing an additional stochasticity in the velocity
%distribution of the obstacles.
\vspace{2mm} 
In this paper we consider the case of a random distribution of scatterers in $\R^2$ where each obstacle generates a smooth positive and short-range potential $\phi$, with $\alpha>0$ and sufficiently small. We show that, in this case, the solution of the microscopic dynamics converges, in the intermediate limit (when $\alpha\in (0,1/8)$), to the solution of the linear Landau equation with an additional transport term due to the magnetic field. From the heuristic point of view, this result is suggested by the observation that in this case the probability $\PP_{R_L}$ of performing an entire Larmor circle without hitting any obstacle is given by
\begin{equation*}
\mathbb{P}_{R_L} \simeq e^{-\mu_{\varepsilon}  2 \pi \varepsilon R_L} \simeq e^{-2 \pi R_L \mu \varepsilon^{-2 \alpha}}
\end{equation*} 
which vanishes as $ \varepsilon \to 0$. This computation shows that one family of the pathological events preventing the Markovianity is negligible in this setting. We stress that this rough argument is not sufficient to conclude that we can recover the Markovianity in the limit. Indeed, to prove this, we need to show that all the other bad configurations of obstacles defining the set $E$ are negligible in the limit, as we will see in Section \ref{N:pat}.
%Roughly speaking, the heuristic motivation is that, in this case,
%\begin{equation*}
%\PP_{R_L}\simeq e^{-\mu_{\varepsilon} 2 \pi \ep R_L} \simeq e^{- 2\pi R_L \mu \ep^{-2\alpha}}
%\end{equation*} 
%which vanishes as $\ep\rightarrow 0$. Therefore, we recover the Markovianity in the limit.

Furthermore, we observe that even if we consider a long range inverse power law interaction potential, truncated at distance $\ep^{\gamma-1}$ with $\gamma\in (0,1)$ suitably large, in the low density regime $\alpha=0$, we can prove that the memory is lost in the limit. More precisely, we prove that the microscopic solution converges to the solution of the uncutoffed linear Boltzmann equation with a magnetic transport term. With the same purpose of the rough argument presented above, we observe that %irit This relies on the fact that 
the probability $\PP_{R_L}$ of performing a complete Larmor circle without hitting any obstacle is approximatively given by 
\begin{equation*}
\PP_{R_L}\simeq 
e^{-2\pi R_L \mu\,\ep^{\gamma-1} }\,
\end{equation*}
which vanishes as $\ep\to 0$ when $\gamma<1$. Also in this case this represents only one example of bad configuration of scatterers. It is essential to prove that the contribution of the whole set of pathological events is negligible in the limit, as we will show in Section \ref{bounds_Boltzmann}. Moreover, from the technical point of view, we observe that the parameter $\gamma$ has to be chosen close to $1$ as dictated by the explicit control of the memory effects.%Moreover, we note that the explicit control of the memory effects determine the range of variation of the parameter $\gamma$. It results that $\gamma$ should be close to one.

Thus, as we pointed out with the heuristic motivations above, the non-Markovian behaviour of the limit process, 
discussed in \cite{BMHH}, % \cite{BMHH2}, 
disappears as soon as we slightly modify the microscopic model given by the two dimensional Lorentz Gas. 

The purpose of this paper is to provide a rigorous validation of the linear Landau equation and the linear Boltzmann equation respectively with magnetic field by using the constructive strategy due to Gallavotti. We remark that, as in \cite{DP,DR,BNP,BNPP}, we need explicit estimates of the error in the kinetic limit and this is the crucial part. Moreover, as a future target, it could be interesting to understand if a rigorous derivation of the generalized Boltzmann equation proposed in \cite{BMHH} %\cite{BMHH1, BHPH} 
can be achieved by using the same constructive techniques. 

The plan of the paper is the following: in the next Section we establish the model and formulate the results; in Section \ref{proofs} we present the strategy of the proofs, whereas Sections \ref{N:sec:ST} and \ref{bounds_Boltzmann} are dedicated to the %bounds for the non-markovian contributions, which is the technical core of this paper. 
nontrivial analysis and explicit estimates of the sets of bad configurations producing memory effects, which is the technical core of this paper.
\section{The Model and main results}\label{N:sec2}
\subsection{The Lorentz Model with short range interactions}
We consider the system \eqref{eq:scaled_no_B} in the plane ($d=2$) under the action of a uniform, constant, magnetic field orthogonal to the plane. The equations of motion are 
\begin{equation}\label{N:magnetic}
\left\{\begin{array}{ll}
\dot{x}=v&\\
\dot{v}=Bv^{\perp}-\ep^{\alpha-1}\sum_{i}\nabla\phi(\frac{|x-c_i|}{\ep})&,
\end{array}\right.
\end{equation}
where $B$ is the magnitude of the magnetic field and $v^{\perp}=(v_2, -v_1)$. We assume that the potential $\phi:\R^{+}\to\R^{+}$ is smooth and of range $1$ i.e. $\phi(r)=0$ if $r>1$. Therefore the particle is influenced by the scatterer $c_i$ if $|x-c_i|<\ep$.

Starting 
from the initial position $x$ with initial velocity $v$, the particle moves under the action of the Lorentz force $Bv^{\perp}$. Suppose that the particle has unitary mass and unitary charge, namely $m,q=1$, hence 
between two consecutive scatterers, the particle moves with constant angular velocity $\Omega=qB /{m}=B$ and performs an arc of circle of radius $R_L=|v|/B$. $R_L$ is the Larmor radius, i.e. the radius of the cyclotron orbit whose center is situated at the point
$$x_c=x+\frac{\mathcal{R}(\frac{\pi}{2})\cdot v}{\Omega},$$
where the tensor $\mathcal{R}(\varphi)$ denotes the rotation of angle $\varphi$. Without loss of generality we assume from now on that $ |v|=1$, therefore $ R_L = 1/B$. Moreover, we will denote by $ S_1$ the kinetic energy sphere with unitary radius.

The precise assumptions on the potential are the following:
\begin{itemize}
\item[$A1)$] $\phi\in C^2([0,\infty))$;
\item[$A2)$] $\phi\geq 0$, $\phi'\leq 0$ in $(0,1)$;
\item[$A3)$] $\supp\phi\subset [0,1]$.
\end{itemize}
On $f_0$ we assume that
\begin{itemize}
\item[$A4)$] $f_0\in C_0(\R^2\times\R^2)$ is a continuous, compactly supported initial probability density. Suppose also that $|D_{x}^kf_0|\leq C$, where $D_x$ is any partial derivative with respect to $x$ and $k=1,2$. 
\end{itemize}
Moreover, we assume that
\begin{itemize}
\item[$A5)$] The scatterers are distributed according to a Poisson distribution \eqref{poisson} of intensity $\mu_{\ep}=\mu\ep^{-\delta}$ with $\delta=1+2\alpha$, $\alpha\in(0,\frac{1}{8})$.
\end{itemize}
Next we define the Hamiltonian flow $T^t_{\mbf{c}_{N}}(x,v)$ associated to the initial datum $(x,v)$, solution of \eqref{N:magnetic} for a given configuration $\mbf{c}_{N}$ of scatterers, and we set
\begin{equation}\label{eq:microLandau}
f_{\ep}(x,v,t)=\EE_{\ep}[f_0(T^{-t}_{\mbf{c}_{N}}(x,v))]
\end{equation}
where $\EE_{\ep}$ denotes the expectation with respect to the Poisson distribution.\\
The first result of the present paper is summarized in the following theorem.
\begin{theorem}\label{N:th:main}
Let $f_{\ep}$ be defined in \eqref{eq:microLandau}.
Under assumption $A1)-A5)$, for all $t\in[0,T],\, T>0$, 
\begin{equation*}
\lim_{\ep\to 0}f_{\ep}(\cdot;t)=g(\cdot;t)%\quad\in C([0,T];L^2(\R^2\times S_{1}))
\end{equation*}
where $g$ is the unique %weak 
solution to the Landau equation with magnetic field 
\begin{equation}\label{N:Landau}
\left\{\begin{array}{ll}\vspace{2mm}
(\partial_{t}+v\cdot\nabla_{x}+B\,v^{\perp}\cdot\nabla_{v})g(x,v,t)=\xi\,\Delta_{S_{1}} g(x,v,t)&\\
g(x,v,0)=f_0(x,v)&,
\end{array}\right.
\end{equation}
where $\Delta_{S_{1}}$ is the Laplace-Beltrami operator on the circle $S_{1}$  
and $\xi>0$. The convergence is in $L^2(\R^2\times S_{1}).$
\end{theorem}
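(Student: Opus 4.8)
The plan is to follow Gallavotti's constructive scheme, adapting the expansion used for the field-free intermediate limit (as in \cite{DR,K}) to the presence of the magnetic transport term, and then to isolate and estimate the pathological configurations. First I would expand the expectation \eqref{eq:microLandau} over the Poisson distribution, writing $f_\ep(x,v,t)$ as a series indexed by the number $N$ of obstacles the backward trajectory $T^{-t}_{\mbf{c}_N}(x,v)$ actually meets, so that the integrand depends only on finitely many scatterers. Following Gallavotti, the key device is the change of variables from scatterer centers $c_i$ to "impact parameters and entrance times" along the perturbed trajectory; between consecutive interactions the particle now travels along an arc of the Larmor circle of radius $R_L=1/B$ rather than a straight segment, so the Jacobian of this change of variables and the geometry of the tube swept out by the interaction disk must be recomputed with the cyclotron flow $\frac{D}{Dt}=\partial_t+v\cdot\nabla_x+Bv^\perp\cdot\nabla_v$ replacing free transport. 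This produces a "Markovian" main term $f_\ep^{\mathrm{M}}$ whose limit I would identify, by the usual grazing-collision asymptotics of the single-scatterer scattering map as $\alpha\to0$ with the density scaling $\mu_\ep=\mu\ep^{-(1+2\alpha)}$, with the semigroup generated by $\partial_t+v\cdot\nabla_x+Bv^\perp\cdot\nabla_v-\xi\,\Delta_{S_1}$; the diffusion coefficient $\xi>0$ arises as $\xi=\tfrac14\mu\int_{\R}\big(\int_{\R}\partial_\rho\psi(\rho,t)\,dt\big)^2 d\rho$ (or the analogous expression in terms of the scattering angle), exactly as in the non-magnetic case, since the collisions are localized at scale $\ep$ and the magnetic field does not affect the inner scattering problem to leading order.

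Next I would define the error term $f_\ep-f_\ep^{\mathrm{M}}$ and bound it by the measure of the set $E$ of bad configurations: those producing recollisions (the trajectory re-enters a disk it has already visited after interacting with others), interferences/overlapping tubes, and — the feature special to this problem — trajectories that nearly close up a Larmor circle. The heuristic in the introduction shows the last family has probability $\sim e^{-2\pi R_L\mu\ep^{-2\alpha}}\to0$; the real work is to show the \emph{entire} set $E$ has vanishing measure with an explicit rate in $\ep$, which is why $\alpha<1/8$ is needed. This is deferred in the paper to Section \ref{N:sec:ST}, so here I would simply invoke those estimates: $\PP_\ep(E)\le C\ep^{\beta}$ for some $\beta=\beta(\alpha)>0$.

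The main obstacle is precisely the recollision and geometric control on the Larmor arcs. With straight-line transport a recollision forces two legs of the trajectory to be almost aligned, and a crude dimensional count suffices; with circular arcs of radius $R_L$ the trajectory is \emph{globally} self-approaching — after one cyclotron period it returns near its starting point — so recollisions are no longer rare for purely geometric reasons and one must use the dispersion in velocities produced by the (many, grazing) collisions to beat the periodicity. Quantifying this trade-off between the number of collisions needed to rotate the velocity enough to avoid self-intersection and the smallness of each deflection is what constrains $\alpha$ to the range $(0,1/8)$. Finally, once the error is controlled in, say, $L^\infty$ or $L^1$ uniformly on $[0,T]$ together with the uniform bounds on $f_\ep$ inherited from assumption $A4)$, I would upgrade to $L^2(\R^2\times S_1)$ convergence by interpolation (the mass is conserved and the trajectories stay in a bounded region on $[0,T]$, so there is no loss of compactness), and conclude by uniqueness for the linear Landau equation \eqref{N:Landau}, which follows from standard energy estimates for the hypoelliptic operator $\partial_t+v\cdot\nabla_x+Bv^\perp\cdot\nabla_v-\xi\Delta_{S_1}$.
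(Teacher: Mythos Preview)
Your proposal is correct and follows essentially the same route as the paper: Gallavotti's expansion, the change of variables to impact parameters and entrance times along the cyclotron flow, the isolation of pathological events (overlappings, recollisions, interferences, and the new Larmor-circle closures), and the grazing-collision identification of the diffusion coefficient $\xi$. One structural point worth noting: the paper makes the argument more modular by inserting an explicit \emph{intermediate linear Boltzmann equation} $h_\ep$ (with the magnetic transport on the left and the $\ep$-dependent collision operator $\text{L}_\ep$ on the right), so that the proof splits cleanly into (i) $\|f_\ep-h_\ep\|_{L^1}\to 0$ via the pathological-set estimates of Section~\ref{N:sec:ST}, and (ii) $\|h_\ep-g\|_{L^2}\to 0$ via a pure PDE energy estimate for $\text{L}_\ep-\xi\Delta_{S_1}$; your ``Markovian main term'' $f_\ep^{\mathrm{M}}$ is essentially this $h_\ep$, but naming it as the solution of an explicit equation makes step (ii) a clean Gr\"onwall argument. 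A second point: your assertion that ``the magnetic field does not affect the inner scattering problem to leading order'' is exactly Lemma~\ref{N:theta}, and it requires the collision-time bound $\tau\le C\ep$ (Appendix~\ref{app:CTime}) together with a Gr\"onwall comparison of the in-disk dynamics with and without $B$; this is the one place where the magnetic field enters the scattering analysis and deserves to be stated as a separate lemma. Finally, for the recollision control the paper does not rely directly on ``dispersion in velocities'' but introduces the auxiliary event $\chi_{arc}^{(\nu)}$ (no collision during a time $T_L\ep^\nu$), shows it is negligible for $\nu<2\alpha$, and then on its complement every free arc subtends an angle $O(\ep^\nu)$, which reduces the recollision geometry to the field-free situation of \cite{DR,BNP}; this is the concrete mechanism behind the trade-off you describe and is what produces the threshold $\alpha<1/8$.
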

The constant $\xi$ is the diffusion coefficient and its explicit expression will be given below in (\ref{N:coefficiente diffusione}) and in remark \ref{rm:diff_coeff}.
%\\\vspace{4mm}

\subsection{The Lorentz Model with long range interactions}
We consider now the case in which each obstacle %of radius $\ep$ 
generates a potential of the form $$\check{\psi}_{\ep}(|x-c|)=\psi_{\ep}\big(\frac{|x-c|}{\ep}\big)$$ where the  unrescaled potential $\psi_{\ep}$ is an inverse power law potential truncated at large distances. More precisely we assume the following:

\begin{itemize}
\item[$B1)$]  $\displaystyle \qquad\quad \psi_{\ep}(x)= \left\{\begin{array}{ll}
\frac{1}{|x|^s}\qquad\quad |x|<\ep^{\gamma-1}&\\
\ep^{-s(\gamma-1)} \quad |x|\geq\ep^{\gamma-1}&
\end{array}\right.$
%\end{equation*}
with $\gamma\in (0,1)$ and $s>2$. \\\vspace{0.5mm}
\end{itemize}
We point out that it could be challenging to consider directly the untruncated long range potential $ \psi(|x|)=|x|^{-s}$. In fact, this problem presents deep additional difficulties as noted in Remark 2.3 in \cite{DP} and new ideas and techniques are necessary.

Moreover, we assume that %\vspace{0.5mm}
\begin{itemize}
\item[$B2)$] The scatterers are distributed according to a Poisson law \eqref{poisson} of intensity $\mu_{\ep}=\ep^{-1}\mu$, $\mu>0$. 
%\vspace{0.5mm}
\end{itemize}
The equation of motion in macroscopic variables reads 
\begin{equation}\label{N:scaled2}
\left\{\begin{array}{ll}
\dot{x}=v&\\
\dot{v}=Bv^{\perp}-\ep^{-1}\sum_{i}\nabla\psi_{\ep}(\frac{|x-c_i|}{\ep})&
\end{array}\right.
\end{equation}
 with $\psi_{\ep}$ given in Assumption $B1)$. % \eqref{N:cutpot}.

Let $T^t_{\mbf{c}_N}(x,v)$ be the Hamiltonian flow solution to equation \eqref{N:scaled2} with initial datum $(x,v)$ in a given sample of obstacles. Let be $f_0=f_0(x,v)$ be the initial probability distribution. On $f_0$ we assume
\begin{itemize}
\item[$B3)$] $f_0\in L^1\cap W^{1,\infty}(\R^2\times\R^2)$, $\quad f_0\geq 0$, $\quad \displaystyle \int f_0\,dx\,dv=1$. 
%\vspace{0.5mm}
%f_0\in L^1\cap L^{\infty}\cap C(\R^2\times\R^2)
\end{itemize}
 We consider the quantity
\begin{equation}\label{N:valatt2}
f_{\ep}(x,v,t)=\EE_{\ep}[f_0(T^{-t}_{\mbf{c}_N}(x,v))]
\end{equation}
where $\EE_{\ep}$ is the expectation with respect to the measure $\PP_{\ep}$ given by \eqref{poisson}.
The second result of this paper is summarized in the following theorem.

\begin{theorem}\label{N:th:main2} 
Let $f_{\ep}$ be defined in \eqref{N:valatt2}.
Under assumption $B1)-B3)$ with $ \gamma \in (6/7,1)$, for all $t\in [0,T],\, T>0$,
\begin{equation*}
\lim_{\ep\to 0}f_{\ep}(\cdot;t)=f(\cdot;t)
\end{equation*}
where $f$ is the unique %weak 
solution to the linear Boltzmann equation with magnetic field 
\begin{equation}\label{N:ucBoltz}
\left\{\begin{array}{ll}\vspace{2mm}
(\partial_{t}+v\cdot\nabla_{x}+B\,v^{\perp}\cdot\nabla_{v})f(t,x,v)= \text{L} f(t,x,v)&\\
f(x,v,0)=f_0(x,v),&
\end{array}\right.
\end{equation}
with
%$$\displaystyle \LL f(v)=\mu \int_{-1}^{1} f(v')-f(v)\,d\rho.$$
$$\displaystyle \text{L} f(v)=\mu \int_{-\pi}^{\pi}\Gamma(\Theta)\left\{f(\mathcal{R}\big(\Theta\big)v)-f(v)\right\}d\Theta\,,$$
$ \Gamma(\Theta)$ is the differential cross section associated to the long range potential $ \psi(|x|)=|x|^{-s}$ and the operator $ \mathcal{R}(\Theta)$ rotates the velocity $ v$ by the angle $ \Theta$. The convergence is in $ \mathcal{D}'(\R^2 \times S_1)$.
\end{theorem}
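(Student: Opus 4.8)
The plan is to follow the constructive scheme of Gallavotti, in the form developed for long‑range potentials by Desvillettes and Pulvirenti \cite{DP} and for the intermediate regime by Desvillettes and Ricci \cite{DR}, with all the geometric estimates re‑derived for the cyclotron (curved) free flight; the strategy is sketched in Section \ref{proofs} and the hard estimates are postponed to Section \ref{bounds_Boltzmann}. Since the target convergence is in $\mathcal{D}'(\R^2\times S_1)$, it suffices to test \eqref{N:valatt2} against $g\in C_c^\infty$ and to compare the resulting quantity, term by term, with the Duhamel expansion of \eqref{N:ucBoltz}.

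\emph{Step 1: series expansion and collision trees.} Starting from \eqref{N:valatt2} and \eqref{poisson}, I would expand $\EE_\ep$ as a sum over the number $N$ of obstacles and an integral over their positions, and then split the obstacles into those whose $\ep^{\gamma}$‑neighbourhood is reached by the backward trajectory $s\mapsto T^{-s}_{\mbf c_N}(x,v)$, $s\in[0,t]$, and those that are not (note that the macroscopic interaction range is $\ep\cdot\ep^{\gamma-1}=\ep^{\gamma}$). The external obstacles recombine with the normalization $e^{-\mu_\ep|\Lambda|}$, producing the usual series $f_\ep(x,v,t)=\sum_{n\ge 0}Q_\ep^{(n)}(x,v,t)$, where $Q_\ep^{(n)}$ integrates over the positions of $n$ obstacles lying along the perturbed trajectory, weighted by a damping factor $e^{-\mu_\ep(\text{tube volume})}$. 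For each such obstacle I would change variables from its center to the pair (entrance time, impact parameter) relative to the incoming Larmor arc; since $\psi_\ep$ is radial and $|v|=1$ is conserved, this turns $Q_\ep^{(n)}$ into an $n$‑fold alternation of free cyclotron transport and a "collision" that rotates the velocity by $\mathcal{R}(\Theta)$, with $\Theta$ the deflection produced by the truncated potential as a function of the impact parameter.

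\emph{Step 2: separating effective and grazing collisions, and removing constraints.} Because $\mu_\ep=\mu\ep^{-1}$ and the tube has width $\ep^{\gamma}$, the trajectory meets $\sim\ep^{\gamma-1}\to\infty$ obstacles, but almost all deflect it by a negligible angle. Fixing a small threshold $\rho$ and calling a collision \emph{effective} if $|\Theta|>\rho$ and \emph{grazing} otherwise, I would show, using the standing assumption $s>2$ to make the grazing deflections summable, that the trajectory obtained by ignoring grazing collisions is $o(1)$‑close, in the weak sense above, to the true one, uniformly on $[0,T]$; after this reduction only finitely many (on average $O(1)$) effective collisions survive. I would then introduce the Markovian comparison term $\tilde f_\ep$ obtained from the effective‑collision series by dropping all geometric constraints (collisions allowed to occur anywhere, overlaps and recollisions permitted) and replacing the damping factors by $1$; this $\tilde f_\ep$ is exactly the Duhamel iteration of the Boltzmann equation with kernel $\Gamma_\ep\,\mathbf{1}_{\{|\Theta|>\rho\}}$. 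The difference $f_\ep-\tilde f_\ep$ is then bounded, term by term, by the $\PP_\ep$‑measure of the bad set $E$: initial overlaps, overlaps between scatterers, two scatterers acting on the particle simultaneously (interferences), and recollisions — with the new feature that the free flight is an arc of radius $R_L$, so $E$ also contains the "Larmor loop" events described in the Introduction.

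\emph{Step 3: bad‑set estimates and passage to the limit.} The core of the proof, carried out in Section \ref{bounds_Boltzmann}, is to show $\PP_\ep(E)\to 0$: here the wide interaction tube makes interferences and recollisions far more delicate than for hard disks, and the curved geometry forces a careful classification of the ways a Larmor arc can revisit an obstacle; it is precisely optimizing these bounds that imposes $\gamma$ close to $1$, yielding the threshold $\gamma\in(6/7,1)$. Granting $f_\ep-\tilde f_\ep\to 0$, I would pass to the limit in each term of the finite‑collision series for $\tilde f_\ep$: the damping factors tend to $1$, the constrained collision integrals converge to the free ones, and $\Gamma_\ep\to\Gamma$ on $\{|\Theta|>\rho\}$ by the scattering theory of $\psi(|x|)=|x|^{-s}$ (the truncation at $\ep^{\gamma-1}$ disappearing in the limit). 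Summing the series, which converges uniformly in $\ep$ because the effective cross section on $\{|\Theta|>\rho\}$ is finite, gives $f_\ep\to f^\rho$, the solution of the Boltzmann equation with kernel $\Gamma\,\mathbf{1}_{\{|\Theta|>\rho\}}$. Finally, letting $\rho\to 0$ and using $\int|\Theta|\,\Gamma(\Theta)\,d\Theta<\infty$ for $s>2$ (which is exactly what makes $\mathrm{L}$ well defined on $W^{1,\infty}$ and accounts for the weak $\mathcal{D}'$ topology in the statement), $f^\rho\to f$, the solution of \eqref{N:ucBoltz}. The main obstacle is Step 3 — the explicit control of the memory‑producing configurations in the presence of the magnetic field.
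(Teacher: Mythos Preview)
Your overall architecture (Gallavotti expansion, internal/external splitting, change of variables to hitting times and impact parameters, and control of the memory-producing configurations in Section~\ref{bounds_Boltzmann}) matches the paper. The decomposition, however, is different in one important respect. The paper does \emph{not} introduce an angular cutoff (your effective/grazing split with threshold~$\rho$) at the microscopic stage. Instead it compares $f_\ep$ directly to $h_{\ep,\gamma}$, the solution of the Boltzmann equation \eqref{N:cutBoltz} with the \emph{full} cross section $\Gamma^{(B)}_{\ep,\gamma}$ of the truncated potential~$\psi_\ep$; this Markov process has a diverging mean number of jumps ($\sim\mu\ep^{\gamma-1}t$), and the bad-set bounds of Proposition~\ref{eq:pathologies_long_range} are carried out for that full process. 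The moments $\langle Q^3\rangle\sim\ep^{3(\gamma-1)}$ of the Poisson collision count enter the recollision estimate and, after the $\delta$-optimisation in \eqref{eq:splitchirecLR}--\eqref{eq:ricLR2}, produce exactly the threshold $\gamma>6/7$. The second step, $h_{\ep,\gamma}\to f$, is a pure PDE limit (Proposition~\ref{N:proplr2}), using only $\Gamma^{(B)}_{\ep,\gamma}\to\Gamma$ and the bound $\Gamma^{(B)}_{\ep,\gamma}(\theta)\le C\theta^{-1-1/s}$ from Appendix~\ref{app:CS}; this is where the $\mathcal D'$ topology enters.

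Your Step~2 inserts an extra reduction --- ``the trajectory obtained by ignoring grazing collisions is $o(1)$-close to the true one'' --- that the paper avoids, and as written it is a gap. Removing the grazing obstacles from the \emph{mechanical} flow is not well defined: deleting them changes the trajectory, hence the impact parameters and the grazing/effective status of the remaining scatterers, so you are implicitly comparing two different Hamiltonian flows on a random set of centres that depends on the flow itself. The paper sidesteps this by keeping all collisions on both sides (mechanical and Markov) until the PDE step. If you want to keep your route, the natural fix is to move the grazing/effective split to the Markov side only: prove $f_\ep\to h_{\ep,\gamma}$ as the paper does, and then show $h_{\ep,\gamma}\to f$ by first cutting off small angles in $\tilde{\mathrm L}$ and sending the cutoff to zero --- which is essentially the content of Proposition~A.2 in \cite{DP} that the paper quotes.
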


\section{Proofs}\label{proofs}
\subsection{Proof of Theorem \ref{N:th:main} }
Following \cite{DP,DR,BNP} we split the original problem into two parts. The first one concerns the asymptotic equivalence between $f_{\ep}$ defined in \eqref{expect_density} and $h_{\ep}$, solution of the following Boltzmann equation
\begin{equation}
\label{N:boltzmann}
\left\{\begin{array}{ll}\vspace{2mm}
(\partial_{t}+v\cdot\nabla_{x}+B\,v^{\perp}\cdot\nabla_{v})h_{\ep}(x,v,t)=\text{L}_{\ep}h_{\ep}(x,v,t)&\\
h_{\ep}(x,v,0)=f_0(x,v)&
\end{array}\right.
\end{equation}
where 
\begin{equation}
\text{L}_{\ep}h_{\ep}(v)=\mu_{\ep}\int_{-\ep}^{\ep}\,d\rho[h_{\ep}(v')-h_{\ep}(v)]\,.
\end{equation}
Here $v'=v-2(\omega\cdot v)\omega$ %=\mathcal{R}(\theta_\ep)v$ 
is the outgoing velocity after a scattering with incoming velocity $v$ and impact parameter $\rho\in[-\ep,\ep]$ generated by the potential $\ep^{\alpha}\phi(\frac{r}{\ep})$. Moreover, $\omega=\omega(\rho)$ is the versor bisecting the angle between the incoming and outgoing velocity and $\theta_\ep$ is the scattering angle. % and $\mathcal{R}(\theta_\ep)$ is the rotation of angle $\theta_\ep$. 
The precise result is stated in the following proposition.
\begin{proposition}\label{N:prop1LANDAU}
Under assumption $A1)-A5)$, %let $T>0$ for all $t\in[0,T]$, 
for any $T>0$,
\begin{equation}
\lim _{\ep\to 0}\|f_{\ep}- h_{\ep} \|_{L^{\infty}([0,T]; L^1(\R^2\times S_1))}=0
\end{equation}
where $h_{\ep}$ solves \eqref{N:boltzmann}.
\end{proposition}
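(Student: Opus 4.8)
The plan is to expand both $f_\ep$ and $h_\ep$ into series indexed by the number of collisions suffered by the test particle and to compare them term by term; the difference then turns out to be supported on a set of pathological obstacle configurations whose probability is estimated in Section~\ref{N:sec:ST}. First I would write the Dyson--Duhamel series for $h_\ep$. Denoting by $\Sigma^t$ the magnetic free flow, i.e.\ the semigroup generated by $\partial_t+v\cdot\nabla_x+Bv^\perp\cdot\nabla_v$, which rigidly rotates the trajectory along a cyclotron arc, iteration of Duhamel's formula for \eqref{N:boltzmann} gives $h_\ep(t)=\sum_{n\ge0}h_\ep^{(n)}(t)$, where $h_\ep^{(n)}(t)$ alternates $n$ instantaneous scatterings $v\mapsto v-2(\om\cdot v)\om$ with impact parameters $\rho_1,\dots,\rho_n\in[-\ep,\ep]$ at ordered times $0<t_1<\dots<t_n<t$ and $n+1$ cyclotron free flights, all acting on $f_0$ and weighted by $\mu_\ep^n$. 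Since the loss part of $\text{L}_\ep$ has constant rate $2\mu_\ep\ep$ and the gain part is $L^1$-isometric, one gets $\|h_\ep^{(n)}(t)\|_{L^1}\le (2\mu_\ep\ep t)^n/n!$, so the series converges absolutely and its tail beyond order $N$ is the tail of a Poisson law of mean $2\mu_\ep\ep t=2\mu t\,\ep^{-2\alpha}$; choosing $N=N(\ep)$ somewhat larger than this (divergent) mean makes that tail vanish.

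Next I would produce the analogous series for $f_\ep$. Expanding the Poisson expectation \eqref{eq:microLandau} in the number of obstacles and their positions and reorganising it according to the ordered list of obstacles actually met by the backward trajectory $T^{-t}_{\mbf{c}_N}(x,v)$, I would use that the motion is an exact cyclotron arc as long as the particle stays outside every disk of radius $\ep$ around an obstacle; Gallavotti's change of variables then replaces the positions of the met obstacles by the entrance times $t_1<\dots<t_n$ and the impact parameters $\rho_i$, while the obstacles that are \emph{not} met are integrated out and produce an exponential which is absorbed by the Poisson normalisation. Assumption $A2)$ forces a single simple scattering inside each disk, so the in-disk dynamics reduces to the true scattering map $v\mapsto v'$ up to a time delay and a spatial displacement of order $\ep$. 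This yields $f_\ep(t)=\sum_{n\ge0}\tilde f_\ep^{(n)}(t)$ with the same structure as the $h_\ep$ series, except that the $n$ disks are constrained to be pairwise disjoint and the backward trajectory is forbidden to re-enter a disk it has already left.

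Dropping these two constraints defines the ``Markovian'' term $f_\ep^{M,(n)}$, and a routine argument shows $f_\ep^{M,(n)}=h_\ep^{(n)}$ up to an error of order $n\ep$ coming from the finite duration and the displacement of each interaction (here one uses the Lipschitz bound $A4)$ on $f_0$): once the disks may be placed freely, the collisions making up $\tilde f_\ep^{(n)}$ become precisely a Poisson process of rate $2\mu_\ep\ep$ with impact parameter uniform in $[-\ep,\ep]$, which is exactly the process underlying $h_\ep$. Consequently
\[
\sup_{t\le T}\|f_\ep(t)-h_\ep(t)\|_{L^1}\ \le\ \sum_{n\le N}\|\tilde f_\ep^{(n)}-f_\ep^{M,(n)}\|_{L^1}\ +\ C\,N\ep\ +\ \Big(\text{series tails beyond order }N\Big),
\]
and $\|\tilde f_\ep^{(n)}-f_\ep^{M,(n)}\|_{L^1}\le \|f_0\|_{L^1}\,\PP_\ep(E^{(n)})$, where $E^{(n)}$ collects the $n$-collision configurations producing a recollision or an overlap/interference along the cyclotron trajectory. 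With $N\ep\sim\ep^{1-2\alpha}\to0$ and the Poisson tails controlled by the choice of $N$, everything reduces to showing $\sum_{n\le N}\PP_\ep(E^{(n)})\to0$.

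That last bound is the crux, and it is the content of Section~\ref{N:sec:ST}. This is where the magnetic field enters and where the non-Markovian mechanism of \cite{BMHH} must be defeated: a cyclotron arc is a bounded curve that can return close to an earlier obstacle, but the grazing nature of the collisions ($\theta_\ep\sim\ep^\alpha$) forces a large number of collisions before the trajectory has bent enough to recollide, and the tube swept around any previously met obstacle in which a recollision could occur has area much smaller than the naive $O(\ep)$. One then pays a factor $\mu_\ep\sim\ep^{-1-2\alpha}$ for each extra obstacle against these small areas, sums over $n\le N$ and over the (not necessarily consecutive) pairs of obstacles involved, and checks that the resulting bound still tends to $0$; this threshold computation is precisely what forces $\alpha\in(0,1/8)$ and constrains the admissible $N(\ep)$. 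I expect this classification of recollision and interference configurations for cyclotron trajectories, and the attendant measure estimates, to be the main obstacle; the two Duhamel expansions, the $O(\ep)$ per-collision corrections and the series truncation are routine bookkeeping by comparison.
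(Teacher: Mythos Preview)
Your overall architecture---series expansion, Gallavotti change of variables, reduction to pathological-set estimates---matches the paper's. Two points, however, are handled differently in the paper and deserve attention.

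First, the paper does \emph{not} truncate the series. Instead it builds a pointwise monotone chain
\[
\bar f_\ep\ \le\ \tilde f_\ep\ \le\ \breve f_\ep\ \le\ f_\ep,
\]
obtained by successively inserting the indicators $\chi_1(1-\chi_{circ})$, then replacing $e^{-\mu_\ep|\mathcal T|}$ by $e^{-2\mu_\ep\ep t}$, then inserting $(1-\chi_{ov})(1-\chi_{rec})(1-\chi_{int})$. One then shows $\bar f_\ep\to h_\ep$ in $L^1$ (this is your $\varphi_1$-estimate plus the $O(Q\ep)$ correction), and closes the argument by mass conservation: since $0\le\bar f_\ep\le f_\ep$ and $\|f_\ep\|_{L^1}=\|h_\ep\|_{L^1}$, convergence of $\bar f_\ep$ to $h_\ep$ forces $\|f_\ep-\bar f_\ep\|_{L^1}\to0$. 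This soft step replaces your truncation-plus-tail bookkeeping. Your approach is not wrong in principle, but note that the $f_\ep$ tail is not an honest Poisson tail: the weight is $e^{-\mu_\ep|\mathcal T(\mathbf b_Q)|}$, and $|\mathcal T|$ can be strictly smaller than $2\ep t$ on self-intersecting trajectories, so you cannot simply invoke ``Poisson tails controlled by the choice of $N$'' for $f_\ep$ without first disposing of those trajectories.

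Second, and related, your sentence ``the obstacles that are not met are integrated out and produce an exponential which is absorbed by the Poisson normalisation'' hides the main magnetic novelty. Integrating out the external obstacles leaves $e^{-\mu_\ep|\mathcal T(\mathbf b_Q)|}$, and this matches the Boltzmann weight $e^{-2\mu_\ep\ep t}$ only if $|\mathcal T|=2\ep t$, which fails exactly when the cyclotron trajectory completes a Larmor circle (or recollides with the \emph{same} scatterer with no intermediate collision). The paper isolates this as a separate pathological event $\chi_{circ}$, distinct from the usual recollisions/interferences, and removes it \emph{before} replacing the tube exponential; only then is the bound $|\mathcal T|\le 2\ep t$ available. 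In the same vein, for the recollision estimate the paper first discards the event $\chi_{arc}^{(\nu)}$ that some inter-collision gap exceeds $T_L\ep^\nu$, so that each free arc subtends an angle $O(\ep^\nu)$; your heuristic ``the grazing nature of the collisions forces a large number of collisions before the trajectory has bent enough to recollide'' is correct in spirit, but the actual mechanism controls the \emph{free-flight} bending, not only the scattering angles.
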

The proof of the above Proposition is postponed to Section \ref{N:sec:ST}.

The second step concerns the grazing collision limit. Note that the presence of the magnetic field does not affect the last step. More precisely we have the following 
\begin{lemma}\label{N:theta}
The deflection angle $\theta_\ep(\rho)$ of a particle colliding with impact parameter $\rho$ with a scatterer generating a radial potential $\ep^{\alpha}\phi$ under the action of the Lorentz force $Bv^{\perp}$ satisfies
\begin{equation}\label{N:est:theta}
|\theta_\ep(\rho)|\leq C\ep^{\alpha}.
\end{equation}
\end{lemma}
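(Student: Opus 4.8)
The plan is to reduce the magnetic scattering problem to the field-free one by comparing, over the short interaction time, the trajectory of the charged particle with that of a particle subject only to the potential. First I would set up the collision geometry: fix a scatterer at the origin, let the particle enter the interaction disk $\{|x|\le\ep\}$ with $|v|=1$, and parametrize by the impact parameter $\rho\in[-\ep,\ep]$. The key observation is a separation of scales: the time spent inside the interaction region is $O(\ep)$, while the Lorentz force $Bv^\perp$ has magnitude $O(1)$ and turns the velocity at the bounded rate $B$. Hence over the collision the magnetic field rotates the velocity by only $O(\ep)$, which is already within the claimed bound; the nontrivial part is that the \emph{potential} force has magnitude $\ep^{\alpha-1}|\phi'(\cdot/\ep)|=O(\ep^{\alpha-1})$ acting for time $O(\ep)$, so naively it produces a velocity change of $O(\ep^{\alpha})$ — this is the deflection we must capture, and we must show the magnetic correction does not spoil it.

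The main step is an energy/angular-momentum-type argument adapted to the magnetic case. For the purely radial potential $\ep^\alpha\phi(r/\ep)$ the scattering angle is classically given by an explicit integral (the orbit equation), and the change of variables $r\mapsto r/\ep$ together with the rescaling shows $|\theta_\ep^{(0)}(\rho)|\le C\ep^\alpha$ for the field-free deflection — this is exactly the computation underlying the grazing-collision limit in \cite{DGL,DR}. To incorporate the magnetic field I would write the equation of motion inside the disk in polar coordinates centered at the scatterer and treat $Bv^\perp$ as a perturbation: a Gronwall estimate on the difference between the magnetic trajectory and the field-free one on the time interval $[0,O(\ep)]$ shows that the exit velocities differ by $O(\ep)$, and therefore the exit angles differ by $O(\ep)$. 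Combining, $|\theta_\ep(\rho)|\le|\theta_\ep^{(0)}(\rho)|+C\ep\le C\ep^\alpha$ since $\alpha\le 1/2<1$. Assumptions $A1)$–$A3)$ (in particular $\phi\in C^2$, $\phi'\le 0$, $\supp\phi\subset[0,1]$) guarantee that the field-free orbit is well defined, monotone in $r$, and that the constants are uniform in $\rho$.

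The point I expect to be the main obstacle is making the perturbative comparison uniform in the impact parameter $\rho$, including the near-grazing regime $|\rho|\uparrow\ep$ where the particle barely enters the interaction region, and the (a priori) possibility that inside the disk the trajectory is not a simple transversal crossing. One has to check that the time of sojourn in $\{|x|\le\ep\}$ is genuinely $O(\ep)$ uniformly — this uses that $|v|$ stays close to $1$ (the potential is bounded by $\ep^\alpha\sup\phi=o(1)$, so energy conservation pins $|v|$) together with the fact that the magnetic field only rotates $v$, hence cannot trap the particle — and that the radial motion remains monotone up to an $O(\ep)$ error, so that the orbit integral is not singular. Once the sojourn time and the lower bound on the radial speed are controlled uniformly in $\rho$, the Gronwall step and the explicit field-free estimate close the argument.
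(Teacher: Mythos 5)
Your proposal matches the paper's proof: both invoke the field-free bound $|\tilde\theta_\ep(\rho)|\leq C\ep^{\alpha}$ from \cite{DR}, bound the collision time by $\tau\leq C\ep$ (the paper's Appendix \ref{app:CTime}), and run a Gr\"onwall comparison between the magnetic and field-free velocities showing the magnetic field contributes only an $O(\ep)$ correction, which is then absorbed into $C\ep^{\alpha}$. The only step you elide is the final passage from a bound on $|\sin(\theta_\ep/2)|$ to a bound on $|\theta_\ep|$ itself, which the paper closes with a continuity argument in the impact parameter $\rho$.
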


\begin{proof}
As established in \cite{DR} (Section 3), the estimate \eqref{N:est:theta} holds when the test particle scatters with no external field. 
Hence, we just need to compare the dynamics of the test particle in presence of the constant magnetic field with the free dynamics. Let $(\underline{x}(t),\underline{v}(t))$ be the solution of the following\begin{equation}\label{N:mtFREE}
\left\{\begin{array}{ll}
\dot{\underline{x}}=v&\\
\dot{\underline{v}}=-\ep^{\alpha-1}\nabla\phi(\frac{|\underline{x}-c|}{\ep})&.
\end{array}\right.
\end{equation}
Let $\tau$ be the collision time for the dynamics described by \eqref{N:magnetic}. 
The key observation is that the presence of the magnetic field does not modify the estimate for the collision time related to the dynamics in \eqref{N:mtFREE}.  Indeed also in our case $\tau\leq C\ep,\,C>0$, as in \cite{DR}, see Appendix \ref{app:CTime} for the detailed computations. Therefore we get 
\begin{equation*}
\begin{split}
\vert v(\tau)-\underline{v}(\tau)\vert &=
\left\vert \varepsilon^{\alpha-1}\int_{0}^{\tau}{ds\,\left(F\big(x(s)/\varepsilon\big)-F\big(\underline{x}(s)/\varepsilon\big)\right)}+\int_{0}^{\tau}{ds\,v^{\perp}B}\right\vert \\&
\leq\varepsilon^{\alpha-1}\int_{0}^{\tau}{ds\,\vert F\big(x(s)/\varepsilon\big)-F\big(\underline{x}(s)/\varepsilon\big)\vert}+C_1\varepsilon \\&
\leq\varepsilon^{\alpha-2}C_2\int_{0}^{\tau}{ds\, |x(s)-\underline{x}(s)|}+C_1\varepsilon\\&
\leq \varepsilon^{\alpha-2}C_2\int_{0}^{\tau}{ds{\int_{0}^{s}{dt\, |v(t)-\underline{v}(t)|}}}+C_1\varepsilon \\&
\leq\varepsilon^{\alpha-2}C_2\int_{0}^{\tau}{ds{\int_{0}^{\tau}{dt\, |v(t)-\underline{v}(t)|}}}+C_1\varepsilon,
\end{split}
\end{equation*}
where $ F(x):= - (\nabla \phi)(x)$. By using Gr\"{o}nwall's inequality we obtain
\begin{equation}\label{eq:asymvel}
|v(\tau)-\underline{v}(\tau)|\leq C_1\,\varepsilon\, e^{C_3\varepsilon^{\alpha-1}\tau}\leq C_1\,\varepsilon\, e^{C_3\varepsilon^{\alpha}}
\end{equation}
for $\alpha>0$ and $\ep$ sufficiently small. Hence, the velocities $v$ and $\underline{v}$ are asymptotically equivalent up to an error term of order $\ep$.
We now define $v'$ and $\underline{v}' $ to be the outgoing velocities with and without magnetic field respectively, $v$ the incoming velocity. By using \eqref{eq:asymvel} we have
\begin{align}\label{ineq:vel}
\sqrt{2(1-\cos \theta_{\ep})} = |v' - v| \leq |v' - \underline{v}' |+|\underline{v}' - v | \leq C \ep + \sqrt{2(1-\cos \tilde \theta_{\ep})}
\end{align}
where $\tilde \theta_{\ep}$ is the scattering angle without magnetic field. From \cite{DR} we know that $\tilde \theta_{\ep} \leq C' \ep^{\alpha} $, so from (\ref{ineq:vel}) we obtain 
$$ |\sin \frac{\theta_{\ep}}{2}| \leq C \ep + |\sin \frac{\tilde\theta_{\ep}}{2}| \leq C \ep + C' \ep^{\alpha} \leq C'' \ep^{\alpha}. $$ 
Since $ \theta_{\ep}$ is continuous as a function of the impact parameter $\rho$, it results $ \theta_{\ep} \leq C \ep^{\alpha}$. 
For further details see Proposition \ref{prop:thetaCR} in Appendix \ref{app:CS}.
\end{proof}
The following proposition shows the asymptotic equivalence between the solution of Landau equation and the solution of the previous Boltzmann equation $ h_{\varepsilon}$. 

\begin{proposition}\label{N:prop2LANDAU}
Under the assumptions $A1)-A5)$, $h_{\ep}\to g$ in $L^{\infty}([0,T];L^2(\R^2\times S_{1}))$ where $g$ is the unique %weak 
solution to the Landau equation with magnetic field
\begin{equation}\label{N:Landau2}
\left\{\begin{array}{ll}
(\partial_{t}+v\cdot\nabla_{x}+B\,v^{\perp}\cdot\nabla_{v})g(x,v,t)=\xi\,\Delta_{S_{1}} g(x,v,t)&\\
g(x,v,0)=f_0(x,v)&,
\end{array}\right.
\end{equation}
where 
\begin{equation}\label{N:coefficiente diffusione}
\xi=\lim_{\ep\to 0}\frac{\mu\ep^{-2\alpha}}{2}\int_{-1}^{1}\theta_\ep^2(\rho)\,d\rho
\end{equation}
is the diffusion coefficient.
\end{proposition}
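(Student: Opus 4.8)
The plan is to carry out the grazing-collision (Landau) limit at the level of the generators and then to transfer the convergence to the solutions by a Trotter--Kato stability argument for the associated semigroups. Parametrise the energy sphere $S_1$ by the angle $\varphi$, so that $\Delta_{S_1}=\partial_\varphi^2$ and the rotation $\mathcal{R}(\theta)$ acts as $\varphi\mapsto\varphi+\theta$, and write $A=v\cdot\nabla_x+B\,v^{\perp}\cdot\nabla_v$ for the transport part. The transport vector field is divergence free and tangent to $\{|v|=1\}$, so $A$ is skew-adjoint on $L^2(\R^2\times S_1)$ and $-A$ generates the unitary cyclotron group there. After rescaling the impact parameter to $[-1,1]$ and using $\mu_\ep\,\ep=\mu\,\ep^{-2\alpha}$, the collision operator acting on a smooth $u$ reads $\text{L}_\ep u(x,v)=\mu\ep^{-2\alpha}\int_{-1}^{1}\big[u(x,\mathcal{R}(\theta_\ep(\rho))v)-u(x,v)\big]\,\ud\rho$.

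First I would expand $\text{L}_\ep$. By Lemma \ref{N:theta} we have $|\theta_\ep(\rho)|\le C\ep^{\alpha}$ uniformly in $\rho\in[-1,1]$, so Taylor's formula in the angle, together with the rotation-invariance of the $L^2$ norm on $S_1$ (Minkowski for the Lagrange remainder), gives, for every $u\in C_c^\infty(\R^2\times S_1)$,
\[
\text{L}_\ep u=d_\ep\,\partial_\varphi u+\xi_\ep\,\partial_\varphi^2 u+R_\ep u,\qquad \|R_\ep u\|_{L^2}\le C\ep^{\alpha}\,\|u\|_{H^3},
\]
where
\[
d_\ep=\mu\ep^{-2\alpha}\!\!\int_{-1}^{1}\!\!\theta_\ep(\rho)\,\ud\rho,\qquad
\xi_\ep=\frac{\mu\ep^{-2\alpha}}{2}\!\!\int_{-1}^{1}\!\!\theta_\ep^{2}(\rho)\,\ud\rho .
\]
For a radial potential the field-free deflection angle is odd in the impact parameter, hence it does not contribute to $d_\ep$; by the proof of Lemma \ref{N:theta} the magnetic correction to $\theta_\ep$ is only $O(\ep)$ uniformly in $\rho$, so $d_\ep=O(\ep^{1-2\alpha})\to0$ because $\alpha<\tfrac18<\tfrac12$, and the same correction changes $\int_{-1}^{1}\theta_\ep^{2}$ only by $O(\ep^{1+\alpha})$, so $\xi_\ep$ has the same limit as in the field-free case. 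That this limit exists, is finite and strictly positive, and equals the constant in \eqref{N:coefficiente diffusione}, is exactly the grazing-collision (Born-type) asymptotics of the scattering angle for the potential $\ep^{\alpha}\phi(\cdot/\ep)$ established in \cite{DR} (see also Appendix \ref{app:CS}). Combining these facts, $\|\text{L}_\ep u-\xi\,\Delta_{S_1}u\|_{L^2}\le C\big(\ep^{\alpha}+\ep^{1-2\alpha}+|\xi_\ep-\xi|\big)\|u\|_{H^3}\to0$ for all such $u$.

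Next I would transfer this to the solutions. Writing $\text{L}_\ep=\mu\ep^{-2\alpha}\int_{-1}^{1}(U_\rho-I)\,\ud\rho$ with $U_\rho$ the unitary rotation by $\theta_\ep(\rho)$ on $L^2(S_1)$, one gets $\langle\text{L}_\ep u,u\rangle=\mu\ep^{-2\alpha}\int_{-1}^{1}(\langle U_\rho u,u\rangle-\|u\|^2)\,\ud\rho\le0$, so $\text{L}_\ep$ is bounded and dissipative; hence $-A+\text{L}_\ep$, a dissipative bounded perturbation of the skew-adjoint generator $-A$, generates a $C_0$ contraction semigroup and $h_\ep(t)=e^{t(-A+\text{L}_\ep)}f_0$. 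Likewise $-A+\xi\Delta_{S_1}$ is dissipative and generates a $C_0$ contraction semigroup — this is the well-posedness of \eqref{N:Landau2} already invoked in Theorem \ref{N:th:main} — with $g(t)=e^{t(-A+\xi\Delta_{S_1})}f_0$, and $C_c^\infty(\R^2\times S_1)$ is a core for its generator. Since $(-A+\text{L}_\ep)u\to(-A+\xi\Delta_{S_1})u$ in $L^2$ for every $u$ in this core and all the semigroups are contractions, the Trotter--Kato approximation theorem yields $e^{t(-A+\text{L}_\ep)}\to e^{t(-A+\xi\Delta_{S_1})}$ strongly, uniformly for $t\in[0,T]$; applied to $f_0$ this is precisely $h_\ep\to g$ in $L^\infty([0,T];L^2(\R^2\times S_1))$. (Equivalently, one may run an energy estimate for $h_\ep-g$ on smooth initial data and conclude by density, using the contraction property of both semigroups.)

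The main obstacle is the analytic input to the expansion step: the existence and strict positivity of $\xi$ in \eqref{N:coefficiente diffusione}, which rests on the quantitative grazing-collision analysis of the deflection angle for $\ep^{\alpha}\phi(\cdot/\ep)$ and on checking that the magnetic field perturbs the first and second moments of $\theta_\ep$ only at the harmless orders $O(\ep)$ and $O(\ep^{1+\alpha})$ — harmless because $\alpha<\tfrac18$, so that no spurious drift survives and $\xi$ coincides with its field-free value. The field-free asymptotics are borrowed from \cite{DR}, and the magnetic corrections are controlled as in the proof of Lemma \ref{N:theta} (Appendix \ref{app:CS}); beyond this, the argument uses only standard facts (dissipativity of $\text{L}_\ep$, the core property, Trotter--Kato). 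Finally, it is worth noting that routing the argument through Trotter--Kato rather than a direct energy estimate for $h_\ep-g$ is what lets $f_0$ be merely $L^2$: all the regularity is spent on the fixed smooth core functions, not on $h_\ep$ or $g$.
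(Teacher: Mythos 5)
Your identification of the limiting operator --- Taylor expansion of the collision operator in the scattering angle, oddness of the field--free deflection angle together with the $O(\ep)$ magnetic correction from Lemma~\ref{N:theta} to kill the drift, and $\xi_\ep\to\xi$ --- is essentially the same computation the paper performs inside the proof, and you are in fact slightly more careful than the paper's ``by symmetry arguments'', since you track explicitly why the magnetic perturbation of the first and second angular moments is harmless. Where you genuinely diverge is in how operator convergence is transferred to solutions. The paper differentiates $\|h_\ep-g\|_{L^2}^2$ in time, uses dissipativity of $\mathrm{L}_\ep$ and Cauchy--Schwarz to reduce everything to bounding $\|(\mathrm{L}_\ep-\xi\Delta_{S_1})g\|_{L^2}$, and then spends the regularity of the limit solution $g$ (the bounds of Remark~3.5, in particular $\Delta^2_{S_1}g\in L^2$); this yields a quantitative rate $O(\ep^{2\alpha})$. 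You instead argue at the level of semigroups: $\mathrm{L}_\ep$ is bounded and dissipative, $-A$ is skew--adjoint, both $-A+\mathrm{L}_\ep$ and $-A+\xi\Delta_{S_1}$ generate contraction semigroups, and convergence of the generators on the core $C_c^\infty$ plus Trotter--Kato gives strong convergence of the semigroups, locally uniformly in $t$. This trades the paper's rate for a purely qualitative statement, but shifts all the regularity onto fixed smooth core functions rather than onto $g$ or $h_\ep$, so the parabolic smoothing of the limit equation is never used. Both routes are correct; the only inputs you invoke without proof --- that $-A+\xi\Delta_{S_1}$ generates a $C_0$ contraction semigroup on $L^2(\R^2\times S_1)$ and that $C_c^\infty$ is a core for it --- are standard for this hypoelliptic operator but are not established in the paper either, so a self-contained version would need to supply them (whereas the paper's energy route needs instead to justify the regularity claims of Remark~3.5).
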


\begin{remark}\label{rm:diff_coeff}
As shown in Appendix \ref{app:CS}, $ \theta_{\ep} = \tilde\theta_{\ep} + O(\ep)$, where $ \tilde\theta_{\ep}$ is the scattering angle without any magnetic field, i.e. the scattering angle studied in \cite{DR}. This implies that the explicit expression for the diffusion coefficient obtained in \cite{DR} still holds in our case:
\begin{equation}\label{N:coefficiente diffusione2}
\xi=\frac{\mu}{2}\int_{-1}^{1}\left(\int_{\rho}^{1}{\frac{\rho}{u}\phi'(\frac{\rho}{u})\frac{\,du}{\sqrt{1-u^2}}}\right)^2\,d\rho \,,
\end{equation}
where the integrand is an even function of the impact parameter $\rho$.
\end{remark}

\begin{remark}
The linear Landau equation \eqref{N:Landau2} propagates the regularity of the derivatives with respect to the $x$ variable thanks to the transport operator. Moreover, the presence of the collision operator $\LL:=\Delta_{S_{1}}$ lets the solution gain regularity  with respect to the transverse component of the velocity. Indeed, under the assumption $A4)$ on $f_0$, the solution $g:\,\R^2\times S_{1}\to \R^{+}$ satisfies the bounds
\begin{equation}\label{bounds}
|D_{x}^{k}g |\leq C,\quad |D_{v}^{h}g(x,v)|\leq C\quad\forall k\leq 2,\,h\geq 0,
\end{equation}
$\forall t\in (0,T]$,  where $C=C(f_0,T)$ and $D_{v}$ is the derivative with respect to the transverse component of the velocity. In particular, the solutions of \eqref{N:Landau2} we are considering are classical.
\end{remark}

\begin{proof}
By using the invariance of the scattering angle with respect to the space scale, we rewrite the collision operator in the right hand side of \eqref{N:boltzmann} as
\begin{equation}\label{N:collisionboltzmann}
\text{L}_{\ep}h_{\ep}(v)=\mu_{\ep}\ep\int_{-1}^{1}\,d\rho[h_{\ep}(v')-h_{\ep}(v)].   % \mu\ep^{-2\alpha}|v|\
\end{equation}
We look at the evolution of $h_{\ep}-g$, being $g$ the solution of \eqref{N:Landau2},
namely
\begin{equation}\label{eq:evdif}
\big(\partial_t+v\cdot\nabla_x+B\,v^{\perp}\cdot\nabla_{v} \big)\big(h_{\ep}-g)
=\Big(\text{L}_\ep h_\ep-\LL g \Big),
\end{equation}
where $\LL:=\xi\,\Delta_{S_{1}}$.

Note that $ g \in %L^2(\R^2 \times \R^2) \supset 
L^2(\R^2 \times S_{1})$ because $ f_0 \in L^2(\R^2 \times \R^2)$ and $ h_{\ep} \in L^2(\R^2 \times S_{1})$. Indeed, from Proposition \ref{N:prop1LANDAU}, we know that $ h_{\ep} \in L^1(\R^2 \times S_{1})$ but the hypothesis on the initial state implies that $ h_{\ep} \in L^2(\R^2 \times S_{1}) $.

We now consider the scalar product of equation \eqref{eq:evdif} with $\big(h_{\ep}-g \big)$ in $L^2(\R^2\times S_{1})$ and we obtain 
\begin{equation*}\begin{split}
\frac 1 2 \,\partial_t \|h_{\ep}-g\|_{2}^2= &
- \big(h_{\ep}-g,\,
- \text{L}_\ep\big[h_{\ep}-g \big]  \big)\\
&+
\big(h_{\ep}-g ,\, \big[\text{L}_\ep-\LL \big] g \big).
\end{split}
\end{equation*}
By exploiting the positivity of $-\text{L}_\ep$ and the Cauchy-Schwartz inequality we get
\begin{equation*}
%\frac 1 2 \,
\partial_t \|h_{\ep}-g\|_{2}\leq  \big\|\big(\text{L}_\ep-\LL \big)\\
g\big\|_{2}.
\end{equation*}
We now set
\begin{equation*}\begin{split}
g(v')-g(v)
&=\,  (v'-v)\cdot\nabla_{|_{S_{1}}}g(v)\\
&\quad\,+\frac 1 2  (v'-v)\otimes (v'-v)\nabla_{|_{S_{1}}}\nabla_{|_{S_{1}}}g(v)\\
&\quad\,+\frac 1 6 (v'-v)\otimes (v'-v)\otimes (v'-v)
\nabla_{|_{S_{1}}}\nabla_{|_{S_{1}}}\nabla_{|_{S_{1}}}g(v)
+R_{\ep},\\
\end{split}\end{equation*}
with $R_{\ep}=\mathcal O (|v-v'|^4)$.
Integrating with respect to $\rho$ and using symmetry arguments we obtain 
\begin{equation*}\begin{split}
\text{L}_\ep g=\mu \ep^{-2\alpha}\left\{
\frac 1 2 \Delta_{S_{1}}g\int_{-1}^1 d\rho\,|v'-v|^2
+\int_{-1}^1 d\rho\,R_{\ep}\right\}.
\end{split}
\end{equation*}
Observe that $|v'-v|^2=4\sin^2\frac{\theta_\ep(\rho)}{2}$, then  
by direct computation
\begin{equation*}
\lim_{\ep\to 0}\,
\frac{\mu\ep^{-2\alpha}}{2}
\int_{-1}^1 d\rho\,|v'-v|^2=\lim_{\ep\to 0}\,
\frac{\mu\ep^{-2\alpha}}{2}\int_{-1}^{1}\theta_\ep^2(\rho)\,d\rho=:\xi.%2\frac {\alpha}{|v|^2}.
\end{equation*}
Therefore, thanks to Lemma \ref{N:theta}, we have 
$$\big\|\big(\text{L}_\ep-\LL \big)
g\big\|_{L^2}\leq \ep^{2\alpha}\,\|\Delta^2_{|_{S_{1}}}g\|_{L^2}\leq \ep^{2\alpha}\, C,$$
which vanishes for $\ep\to 0$.

\end{proof}

%%%%%

\begin{remark}
We avoided introducing the cross-section $\displaystyle \Gamma(\theta_{\ep}):=\frac{\,d\rho}{\,d\theta_{\ep}}$ of the problem because the map $\rho\to\theta_{\ep}(\rho)$ is not monotonic in general. \\
Indeed if $\phi$ is bounded and $\ep$ sufficiently small, $\frac{1}{2}v^2>\ep^{\alpha}\phi(0)$ so that $\theta=0$ for $\rho=0$ and $\rho=\pm 1$. As a consequence, $\Gamma(\theta_{\ep})$ is neither single valued nor bounded.
\end{remark}
\vspace{2mm}

\subsection{Proof of Theorem \ref{N:th:main2} } 
The general structure of the proof follows the lines of [DP] where an analogous result has been proven when the magnetic field is zero.
\begin{proposition}\label{N:prop1LR}
Let $f_{\ep}$ be defined in \eqref{N:valatt2}. Then, for any $T>0$,
\begin{equation}
\lim _{\ep\to 0}\|f_{\ep}- \h \|_{L^{\infty}([0,T]; L^1(\R^2\times S_{1}))}=0
\end{equation}
where $\h$ is the unique  
solution of the truncated linear Boltzmann equation with magnetic field
\begin{equation}\label{N:cutBoltz}
\left\{\begin{array}{ll}
(\partial_{t}+v\cdot\nabla_{x}+B\,v^{\perp}\cdot\nabla_{v})\h (t,x,v)=\tilde{\text{L}}\h (t,x,v)&\\
\h(x,v,0)=f_0(x,v)&,
\end{array}\right.
\end{equation}
with
$$\displaystyle \tilde{\text{L}} f(v)=\mu \int_{-\pi}^{\pi}\Gamma^{(B)}_{\ep,\gamma}(\theta)\left\{f(\mathcal{R}\big(\theta\big)v)-f(v)\right\}d\theta.$$
and $ \Gamma^{(B)}_{\ep,\gamma}$ is the differential cross section associated to the unrescaled potential $ \psi_{\ep}$ with magnetic field.
\end{proposition}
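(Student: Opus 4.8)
The plan is to adapt to the magnetic setting the constructive scheme of Gallavotti, in the long--range form developed in \cite{DP}; Proposition~\ref{N:prop1LR} is the exact analogue, for the long--range potential, of Proposition~\ref{N:prop1LANDAU}. First I would expand $f_{\ep}$ by conditioning on the Poisson measure \eqref{poisson}: summing over the total number of scatterers and then over the number $n$ of them that actually interact with the light particle in $[0,t]$, one gets $f_{\ep}=\sum_{n\ge 0}f_{\ep}^{(n)}$, where each $f_{\ep}^{(n)}$ is a ``mechanical'' term built from the true flow $T^{t}_{\mbf c_{N}}$ of \eqref{N:scaled2} and carries the Poisson weight $e^{-\mu_{\ep}|\Lambda|}\mu_{\ep}^{n}/n!$. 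On the other side, iterating the mild (Duhamel) formulation of \eqref{N:cutBoltz} produces the Markovian series $\h=\sum_{n\ge 0}\h^{(n)}$, where $\h^{(n)}$ consists of $n$ instantaneous rotations by angles distributed according to $\mu\,\Gamma^{(B)}_{\ep,\gamma}$, interspersed with free cyclotron arcs generated by $\partial_{t}+v\cdot\nabla_{x}+Bv^{\perp}\cdot\nabla_{v}$; since $\Gamma^{(B)}_{\ep,\gamma}$ is integrable (the potential being truncated at $\ep^{\gamma-1}$), this series converges absolutely for each fixed $\ep$. The aim is then to prove $\sum_{n}\|f_{\ep}^{(n)}-\h^{(n)}\|_{L^{\infty}([0,T];L^{1}(\R^{2}\times S_{1}))}\to 0$.

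The main identity is established on the set of \emph{good} configurations: those for which, during $[0,t]$, the particle is never simultaneously within the interaction range $\ep^{\gamma}$ of two different scatterers, never re-enters the range of a scatterer it has already left (no recollision, including recollisions created by the curvature of the Larmor arcs), and does not start inside a scatterer. On such configurations I would perform the classical change of variables from the centres $c_{1},\dots,c_{n}$ to the collision times and impact parameters, and then from impact parameters to deflection angles, introducing the cross section. The key point is that the Lorentz force only replaces the free rectilinear segments between interactions by arcs of the explicit, measure--preserving cyclotron flow, so the Jacobian of this change of variables is the same as in the field--free case; combined with the Poisson weight it reconstructs exactly the iterated kernel $\mu\,\Gamma^{(B)}_{\ep,\gamma}$ and the exponential loss factor of \eqref{N:cutBoltz}. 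Hence on good configurations $f_{\ep}^{(n)}$ equals $\h^{(n)}$ up to two controllable errors: the replacement of the true interaction time (of order $\ep^{\gamma}/|v|$) by an instantaneous collision, and the difference between the curved trajectory inside the potential and the cyclotron arc through the same entrance point. Both are handled by a Gr\"{o}nwall estimate comparing \eqref{N:scaled2} with the field--free dynamics, in the spirit of Lemma~\ref{N:theta}.

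What remains is to bound the contribution of the bad configurations and of the tail $n$ large; this is the technical core, carried out in Section~\ref{bounds_Boltzmann}. Because the interaction is long--ranged, the particle undergoes a diverging number of (mostly grazing) collisions, so the series cannot be summed naively: as in \cite{DP} one organises it by separating, at each collision, the deflection above a suitable $\ep$--dependent threshold from the grazing part, the latter being treated collectively; with this organisation the number of effective collisions to be tracked in $[0,T]$ is under control, and the dangerous events --- interferences between two scatterers and recollisions --- need only be estimated for these effective collisions. The genuinely new feature with respect to \cite{DP} is that the free motion is the cyclotron motion, so a trajectory can recollide with an already visited scatterer after almost a full Larmor period; I would bound the measure of such configurations by combining the annulus computation sketched in the Introduction with a geometric analysis of the relative position of two cyclotron arcs. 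I expect this magnetic recollision estimate, and its interplay with the long tail of grazing collisions, to be the main obstacle; it is what forces $\gamma$ to be close to $1$ and, together with the remaining error bounds, yields the restriction $\gamma\in(6/7,1)$. Once all the bad measures are shown to be $O(\ep^{b})$ for some $b>0$, summing over $n$ --- using the a priori control on the number of effective collisions and the $W^{1,\infty}$ regularity of $f_{0}$ from $B3)$ for the remainder --- gives $\lim_{\ep\to0}\|f_{\ep}-\h\|_{L^{\infty}([0,T];L^{1}(\R^{2}\times S_{1}))}=0$.
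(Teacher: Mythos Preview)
Your overall framework is the paper's: expand $f_{\ep}$ over internal obstacles, change variables $(b_1,\dots,b_Q)\to(t_i,\rho_i)$, define the monotone chain $f_{\ep}\ge \breve f_{\ep}\ge \tilde f_{\ep}\ge \bar f_{\ep}$, compare $\bar f_{\ep}$ with the Markov series for $\h$, and reduce everything to the pathological--set estimate (Section~\ref{bounds_Boltzmann}). The handling of $\chi_{circ}$ via the annulus argument, the use of Liouville to reduce interferences to recollisions, and the Gr\"{o}nwall comparison inside a collision are all as in the paper. So the architecture is right.

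Where your description diverges from what is actually done --- and where, as written, it would not close --- is the recollision estimate. You describe it as ``separating, at each collision, the deflection above a suitable $\ep$-dependent threshold from the grazing part, the latter being treated collectively; with this organisation the number of effective collisions to be tracked in $[0,T]$ is under control.'' Neither \cite{DP} nor the present paper reduces to a bounded number of ``effective'' collisions; the expected number $Q$ of collisions is $\sim \mu\ep^{\gamma-1}t$, which diverges, and all of them are carried in the sum. What the paper does instead is: (i) introduce $\chi_{arc}^{(M)}$ to guarantee every free arc has angle $\le 2\pi/M$, and then (ii) split $\chi_{rec}^{i,j}$ according to whether the \emph{cumulative} angle $\alpha_{jk}$ between $v_k^{+}$ and the recolliding velocity $v_j^{-}$ satisfies $\sin\alpha_{jk}\le \ep^{\delta}/4$ for all $k$ or not. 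In the first branch one is forced to have a near--backscattering event $|\theta_r+\varphi_r-\pi|<\ep^{\delta}$, and the uniform bound on $\Gamma_{\ep,\gamma}^{(B)}$ away from $0$ gives a factor $\ep^{1+\delta}$; in the second branch the geometric argument of \cite{DP} bounds the $t_{k+1}$-integral by $C\ep^{\gamma-\delta}$. After summing over $Q$ (which costs three powers of $\mu\ep^{\gamma-1}$) and optimising $\delta=\tfrac{3\gamma-2}{2}$, one obtains $\EE_{x,v}[\chi_{rec}]\le C\ep^{(7\gamma-6)/2}$, and this is exactly what produces the restriction $\gamma\in(6/7,1)$. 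Your ``annulus computation with a geometric analysis of the relative position of two cyclotron arcs'' does not identify this mechanism, and the per-collision hard/soft splitting you propose is not the right object here: the obstruction to recollision is a \emph{cumulative} angular constraint, not a single large deflection.
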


The proof of Proposition \ref{N:prop1LR} is in Section \ref{bounds_Boltzmann}. 

This allows to reduce the problem of the transition from the solution of the truncated linear Boltzmann equation to the solution of the untruncated linear Boltzmann equation to a partial differential equation problem. Indeed, as in \cite{DP}, we can prove the following
\begin{proposition}\label{N:proplr2}
Let $\h$ solution of \eqref{N:cutBoltz}. Then, for any $T>0$,
\begin{equation}
\h \to f \quad\text{in}\quad C([0,T];\mathcal{D'})
\end{equation} 
where $f$ is the unique %weak 
solution of \eqref{N:ucBoltz}.
\end{proposition}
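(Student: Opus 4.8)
The plan is a standard weak-compactness argument, in the spirit of \cite{DP} for the field-free case, the magnetic term entering only through the transport part of the two equations. Both \eqref{N:cutBoltz} and \eqref{N:ucBoltz} carry the same transport operator $v\cdot\nabla_x+B\,v^{\perp}\cdot\nabla_v$, whose flow (the free cyclotron motion) is volume preserving; hence for every $\varphi\in C^{\infty}_c(\R^2\times S_1)$ one has
\begin{equation*}
\langle\h(t),\varphi\rangle=\langle f_0,\varphi\rangle+\int_0^t\big\langle\h(s),\,\big(v\cdot\nabla_x+B\,v^{\perp}\cdot\nabla_v\big)\varphi+\tilde{\text{L}}^{T}\varphi\big\rangle\,ds,
\end{equation*}
and likewise for $f$ with $\tilde{\text{L}}$ replaced by $\text{L}$. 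Since the interaction is radial, $\Gamma^{(B)}_{\ep,\gamma}$ and $\Gamma$ are even in $\theta$, so $\tilde{\text{L}}^{T}=\tilde{\text{L}}$ and $\text{L}^{T}=\text{L}$ on smooth functions; thus the whole discrepancy between the two problems is concentrated in the collision kernels.

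\emph{Uniform bounds and compactness.} The operators $\tilde{\text{L}}$ and $\text{L}$ generate velocity jump processes ($v\mapsto\mathcal{R}(\theta)v$) and are conservative, so $\h\ge0$, $\|\h(t)\|_{L^1}=1$ and $\|\h(t)\|_{L^\infty}\le\|f_0\|_{L^\infty}$, uniformly in $\ep$ (the same for $f$). For a fixed test function $\varphi$ the cancellation $|\varphi(\mathcal{R}(\theta)v)-\varphi(v)|\le C|\theta|$, together with the scattering estimates of \cite{DP} (which use $s>2$), gives $\sup_{\ep}\|\tilde{\text{L}}\varphi\|_{\infty}\le\mu\,\|\nabla\varphi\|_{\infty}\sup_{\ep}\int_{-\pi}^{\pi}\Gamma^{(B)}_{\ep,\gamma}(\theta)\,|\theta|\,d\theta<\infty$, so $t\mapsto\langle\h(t),\varphi\rangle$ is equi-Lipschitz. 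By weak-$*$ compactness, the Dunford--Pettis theorem, Arzel\`{a}--Ascoli and separability of the test space, $\{\h\}_{\ep}$ is relatively compact in $C([0,T];\mathcal{D}'(\R^2\times S_1))$, every limit point lying in $L^\infty([0,T];L^1\cap L^\infty)$.

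\emph{Identification of the limit.} Let $h$ be a limit point along a subsequence. In the weak formulation the transport term passes to the limit at once; for the collision term write $\langle\h(s),\tilde{\text{L}}\varphi\rangle=\langle\h(s),\text{L}\varphi\rangle+\langle\h(s),(\tilde{\text{L}}-\text{L})\varphi\rangle$. The first summand converges to $\langle h(s),\text{L}\varphi\rangle$ because $\text{L}\varphi$ is bounded, continuous and compactly supported (again $s>2$), while $|\langle\h(s),(\tilde{\text{L}}-\text{L})\varphi\rangle|\le\|(\tilde{\text{L}}-\text{L})\varphi\|_{\infty}$, which I claim tends to $0$. Splitting
\begin{equation*}
(\tilde{\text{L}}-\text{L})\varphi(v)=\mu\int_{-\pi}^{\pi}\big(\Gamma^{(B)}_{\ep,\gamma}(\theta)-\Gamma(\theta)\big)\,\{\varphi(\mathcal{R}(\theta)v)-\varphi(v)\}\,d\theta
\end{equation*}
at a threshold $|\theta|=\eta$: on $|\theta|<\eta$ the integrand is bounded by $C|\theta|(\Gamma^{(B)}_{\ep,\gamma}+\Gamma)$ and its integral by $C\eta^{\kappa}$ for some $\kappa>0$, uniformly in $\ep$; on $|\theta|\ge\eta$ one uses $\Gamma^{(B)}_{\ep,\gamma}\to\Gamma$ in $L^1(\{|\theta|\ge\eta\})$, obtained by first comparing $\Gamma^{(B)}_{\ep,\gamma}$ with the field-free truncated cross section — the cyclotron force changes the velocity by $O(B\,\tau_{\mathrm{coll}})=O(\ep^{\gamma})$ over a single collision, so a Gr\"{o}nwall estimate exactly as in the proof of Lemma \ref{N:theta} yields an $O(\ep^{\gamma})$ difference of deflection angles — and then invoking the classical convergence of the field-free truncated cross section to $\Gamma$ as the cutoff $\ep^{\gamma-1}\to\infty$, established in \cite{DP}. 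Sending first $\ep\to0$ and then $\eta\to0$ shows that $h$ solves the weak form of \eqref{N:ucBoltz}; since that equation is well posed in $L^\infty([0,T];L^1\cap L^\infty)$ (standard for the mildly singular operator $\text{L}$, cf.\ \cite{DP}) we get $h=f$, and, the limit point being unique, the whole family $\{\h\}_{\ep}$ converges to $f$ in $C([0,T];\mathcal{D}')$, which is the assertion.

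\emph{Expected main obstacle.} The only genuinely nontrivial input is the uniform-in-$\ep$ smallness of $(\tilde{\text{L}}-\text{L})\varphi$ on smooth $\varphi$, and inside it the control of the grazing tail $\int_{|\theta|<\eta}\Gamma^{(B)}_{\ep,\gamma}(\theta)\,|\theta|\,d\theta$ — that is, understanding the behaviour of the truncated-potential cross section at very small deflections, produced by impact parameters up to the cutoff $\ep^{\gamma-1}$ — together with the collision-time estimate that makes the magnetic correction of size $O(\ep^{\gamma})$. Both are quantitative refinements of the scattering analysis of \cite{DP} combined with the comparison already used in the proof of Lemma \ref{N:theta}; any restriction on $\gamma$ needed here (beyond $\gamma<1$, which merely ensures $\ep^{\gamma-1}\to\infty$) would come from forcing these error terms to vanish.
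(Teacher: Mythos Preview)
Your proposal is correct and follows essentially the same route as the paper. The paper's own proof is extremely terse: it records that, by Proposition~\ref{RK:c-s} in Appendix~\ref{app:CS}, the cross section $\Gamma^{(B)}_{\ep,\gamma}$ satisfies the uniform domination $\Gamma^{(B)}_{\ep,\gamma}(\theta)\le C\,|\theta|^{-1-1/s}$ and converges a.e.\ to $\Gamma$, and then declares that the argument of Proposition~A.2 in \cite{DP} goes through verbatim. What you have written is precisely that argument spelled out --- the weak-compactness/Arzel\`a--Ascoli step, the splitting of $(\tilde{\text{L}}-\text{L})\varphi$ at a threshold $\eta$, and the uniqueness appeal --- so the two proofs coincide in substance. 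The only minor difference is in how the scattering input is obtained: you reach the cross-section convergence by a Gr\"onwall comparison of the magnetic and non-magnetic deflections (as in Lemma~\ref{N:theta}) followed by the \cite{DP} cutoff removal, whereas the paper derives both the uniform bound and the convergence directly from the explicit integral formula for $\theta^{(B)}_{\ep,\gamma}$ in Appendix~\ref{app:CS}; either route yields exactly the estimates your argument consumes.
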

\begin{proof} 
In Appendix \ref{app:CS}, Proposition \ref{RK:c-s}, the cross section $ \Gamma^{(B)}_{\varepsilon, \gamma}(\theta)$ is shown to be bounded by $ C \theta^{-1-1/s}$ and to converge to $ \Gamma(\theta)$ almost everywhere as $ \varepsilon \to 0$, where $ \Gamma(\theta)$ is the cross section associated to the truly long range potential $ \Psi(r)=r^{-s}$ without magnetic field. Therefore, the proof of Proposition 3.8 is exactly the same as the one of Proposition A.2 in \cite{DP}.
%The proof is exactly the same as the one of Proposition A.2 in \cite{DP}. The required properties for the differential cross section $ \Gamma_{\ep, \gamma}^{(B)}(\theta)$ are shown in Appendix \ref{app:CS}, Proposition \ref{RK:c-s}. 
\end{proof}

\section{Proof of Proposition \ref{N:prop1LANDAU}}\label{N:sec:ST}
In this Section we prove the asymptotic equivalence of $f_\ep$, defined by \eqref{eq:microLandau}, and $h_\ep$, solution of the linear Boltzmann equation \eqref{N:boltzmann}, that we recall here for the sake of clarity
\begin{equation}
\label{N:Boltzmann epsilon}
(\partial_t+v\cdot\nabla_x+B\,v^{\perp}\cdot\nabla_v)h_{\ep}(x,v,t)=\text{L}_{\ep}h_{\ep}(x,v,t),
\end{equation}
where 
\begin{equation}
\text{L}_{\ep}h_{\ep}(v)=\mu\ep^{-2\alpha}\int_{-1}^{1}\,d\rho\{h_{\ep}(v')-h_{\ep}(v)\}.
\end{equation}
This allows to reduce the problem to the analysis of a Markov process which is an easier task. Indeed, the series expansion defining $h_{\ep}$ (obtained perturbing around the loss term) reads as
\begin{equation}
\label{eq:h_eps}
h_{\ep}(x,v,t)=e^{-2\ep^{-2\alpha}\mu t}\sum_{Q\geq 0}\mu_{\ep}^{Q}\int_{0}^{t}dt_Q\dots\int_{0}^{t_{2}}dt_1
\int_{-\ep}^{\ep}d\rho_1\dots\int_{-\ep}^{\ep}d\rho_Q\, f_0(\gamma^{-t}(x,v)).
\end{equation}
Here $\gamma^{-t}(x,v)=({\xi}_{\ep}(-t),{\eta}_{\ep}(-t))$ where ${\eta}_{\ep}$ is an autonomous jump process and
${\xi}_{\ep}$ is an additive functional of ${\eta}_{\ep}$. Equation \eqref{eq:h_eps} is an evolution equation for the probability density associated to a particle performing random jumps in the velocity space at random Markov times. 

We start the proof by looking at the microscopic solution $f_\ep$ defined by  \eqref{expect_density}. For $(x,v)\in\R^2\times\R^2$, $t>0$, we have
\begin{equation}
\label{N:formula1}
f_{\ep}(x,v,t)=e^{-\mu_{\ep}|\mathcal{B}(x,t)|}\sum_{N\geq 0}\frac{\mu_{\ep}^{N}}{N!}\int_{\mathcal{B}(x,t)^N}d\mbf{c}_{N}\, f_0(T^{-t}_{\mbf{c}_{N}}(x,v)),
\end{equation}
where $T^t_{\mbf{c}_{N}}(x,v)$ is the Hamiltonian flow with initial datum $(x,v)$ and $\mathcal{B}(x,t)$ %:=\mathcal{B}(x,|v|t)$ 
is the disk centered in $x$ with radius $t$.

Given the configuration of obstacles $\mbf{c}_{N}=c_1\dots c_N$, we shall say that $c_i$ is {\it internal} if it influences the motion up to the time $t$, i.e. 
\begin{equation}
\inf_{0\leq s\leq t}|x_{\ep}(-s)-c_i|<\ep,
\end{equation}
while we shall call $c_i$ {\it external} if 
\begin{equation}
\inf_{0\leq s\leq t}|x_{\ep}(-s)-c_i|\geq\ep.
\end{equation}
Here $(x_{\ep}(-s),v_{\ep}(-s))=T_{\mbf{c}_{N}}^{-s}(x,v)$, $s\in [0,t]$.

We can perform the integration over the external obstacles and we get
\begin{equation}
%\begin{split}
\label{N:formula2}
f_{\ep}(x,v,t)=\,\sum_{Q\geq 0}\frac{\mu_{\ep}^{Q}}{Q!}\int_{\mathcal B(x,t)^Q}d\mbf{b}_{Q}\, e^{-\mu_{\ep}|\T(\mbf{b}_{Q})|}%\\&
 \chi(\{\mbf{b}_{Q}\;\text{internal}\})f_0(T^{-t}_{\mbf{b}_{Q}}(x,v)),
%\end{split}
\end{equation}
where $\chi(E)$ is the characteristic function of the event $E$ and $\T(\mbf{b}_{Q})$ is the tube
\begin{equation}
\T(\mbf{b}_{Q})=\{y\in \mathcal B(x,t)\;\text{s.t.}\;\exists s\in(0,t)\;\text{s.t.}\;|y-x_{\ep}(-s)|<\ep\}.
\end{equation}
Note that in the previous integration we are not considering possible overlappings of obstacles. This is legitimate because we shall see that this event is negligible as $ \ep$ tends to 0.

Furthermore, let us restrict to the configurations such that the light particle's trajectory does not start from inside an obstacle and does not end inside an obstacle: in formula
\begin{equation}
\chi_1(\mbf{b}_{Q})=\chi\{\mbf{b}_{Q}\;\text{s.t.}\; b_i\notin \mathcal B(x,\ep)\;\text{and}\; b_i\notin \mathcal B(x_\ep(-t),\ep)\;\text{for}\;\text{all}\; i=1,\dots,Q\}.
\end{equation}
As for the overlappings, this choice is not really restrictive because the contribution related to $ 1-\chi_1$ is going to vanish in the limit, as we shall see.
Moreover, we will now list other events that will turn out to be negligible as $\ep $ approaches 0.
\begin{itemize}
\item[i)] \textbf{Complete cyclotronic orbit}.\\ A first cyclotron orbit is completed without suffering any collisions and a repeated collision occurs with the same scatterer without any collision in the meantime.
\end{itemize}
\begin{figure}
\centering
\includegraphics[scale=0.10]{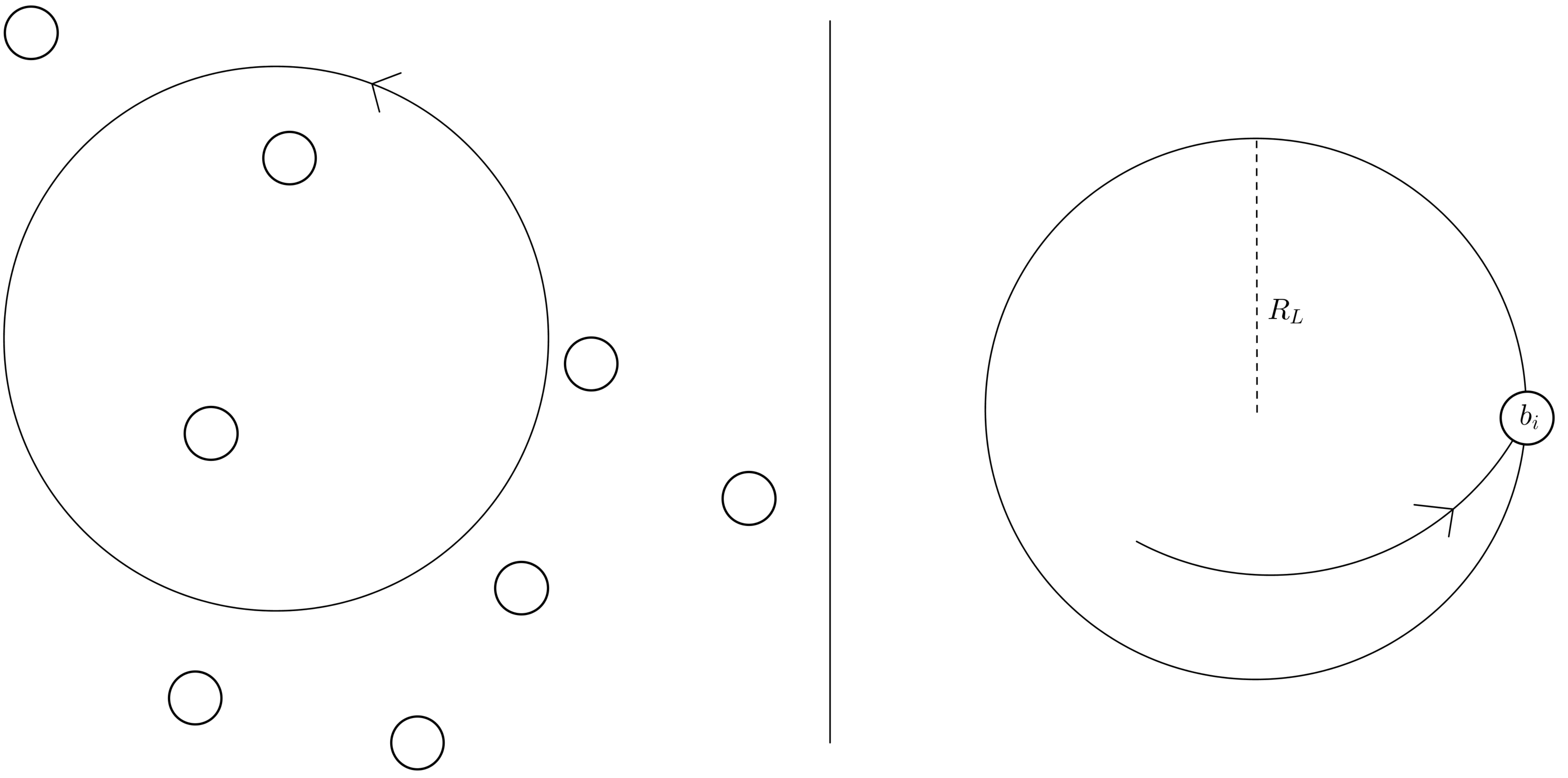}
\caption[size=0.08]{On the left a cyclotron orbit completed without suffering collisions is represented. On the right there is a repeated collision with the same scatterer. %Complete cyclotronic orbit
}\label{N:fig:1M}
\end{figure}
We set
\begin{equation}\label{N:chicirc}
\chi_{circ}(\mbf{b}_{Q})=\chi\left(\left\{\mbf{b}_{Q}\;\,\text{s.t.}\;i)\,\text{is\,realized} \right\}\right).
\end{equation}
For a pictorial representation of the event i) see Figure \ref{N:fig:1M}.
We now define
\vspace{1.7mm}
\begin{equation}
\begin{split}
\label{N:eq:f3}
\breve{f}_{\ep}(x,v,t)=\,&\sum_{Q\geq 0}\frac{\mu_{\ep}^{Q}}{Q!}\int_{\mathcal B_(x,t)^Q}d\mbf{b}_{Q}e^{-\mu_{\ep}|\T(\mbf{b}_{Q})|}\chi(\{\mbf{b}_{Q}\;\text{internal}\})\\&
\quad  (1-\chi_{circ}(\mbf{b}_{Q}))\,\chi_1(\mbf{b}_{Q})f_0(T^{-t}_{\mbf{b}_{Q}}(x,v)).
\end{split}
\end{equation}
Note that $f_{\ep}\geq\breve{f}_{\ep}$. For $t <T_L$ one expects that the approximation with the dynamics of the test particle in absence of the external field is true. The unexpected fact is that even for $t \gtrsim O(T_L) $ this still holds because \eqref{N:chicirc} tends to 0 as $ \ep \to 0$. Hence, for a given configuration $\mbf{b}_{Q}$ such that $\chi_1 [1-\chi_{circ}](\mbf{b}_{Q})=1$, we have that the measure of the tube can be estimated by
\begin{equation}\label{N:estimateT}
|\T(\mbf{b}_{Q})|\leq 2\ep t. 
%\quad \forall t<T_L, 
\end{equation}

At this point we define
\begin{equation}
\begin{split}
\label{N:formula3}
\tilde{f}_{\ep}(x,v,t)=\,&e^{-2\mu\ep^{-2\alpha}t}\sum_{Q\geq 0}\frac{\mu_{\ep}^{Q}}{Q!}\int_{\mathcal B(x,t)^Q}d\mbf{b}_{Q}\chi(\{\mbf{b}_{Q}\;\text{internal}\})\\&
\quad (1-\chi_{circ}(\mbf{b}_{Q}))\,\chi_1(\mbf{b}_{Q})f_0(T^{-t}_{\mbf{b}_{Q}}(x,v)).
\end{split}
\end{equation}
Thanks to \eqref{N:estimateT} we get
\begin{equation}\label{eq:monotonicity1}
f_{\ep}\geq \breve{f}_{\ep}\geq\tilde{f}_{\ep}.
\end{equation}
According to a classical argument introduced in \cite{G} (see also \cite{DP,DR,BNP}), we now want to remove from $\tilde{f}_{\ep}$ all the events that prevent the light particle's trajectory to be the Markov process described by $h_{\ep}$. 

For any fixed initial condition $(x,v)$ we order the obstacles $b_1,\dots,b_N$ according to the scattering sequence. Let $\rho_i$ and $t_i$ be the impact parameter and the backwards entrance time of the light particle in the protection disk around $b_i$, namely $\mathcal B (b_i, \ep)$.
Then we perform the following change of variables
\begin{equation}
\label{N:change var}
b_1,\dots,b_N\rightarrow \rho_1,t_1,\dots,\rho_N,t_N
\end{equation}
with 
\begin{equation*}
0\leq t_N<t_{N-1}<\dots<t_1\leq t.
\end{equation*}
Conversely, fixed the impact parameters $\{\rho_i\}$ and the hitting times $\{t_i\}$ we construct the centers of the obstacles $b_i=b(\rho_i,t_i)$ and a trajectory $\bar\gamma^{-s}(x,v):=(\bar\xi_{\ep}(-s),\bar\eta_{\ep}(-s)),\ s\in[0,t]$ inductively. %Then  we  perform the backward scattering and iterate the procedure to construct a trajectory $\bar\gamma^{-s}(x,v):=(\bar\xi_{\ep}(-s),\bar\eta_{\ep}(-s)),\ s\in[0,t]$.

Suppose that we are able to define the obstacles $b_1, \ldots, b_{i-1}$
and   a   trajectory $\bar\gamma^{-s}(x,v):=(\bar\xi_{\ep}(-s),\bar\eta_{\ep}(-s))$ up to the time $s=t_{i-1}$. We then define the trajectory between times
$t_{i-1}$ and $t_{i}$ as that of the evolution of a particle moving under the action of the Lorentz force and of the potential $\ep^{\alpha}\phi({\ep}^{-1}{|\cdot-b_{i-1}|})$ with initial datum $(\bar\xi_{\ep}(-t_{i-1}),\bar\eta_{\ep}(-t_{i-1})).$ %at time $t_i-1$. Then, %is  defined  to  be  the  first  time  of  exit  of  the  trajectory  from  the protection disk of $b_{i-1}$ and 
Then $b_i$ is defined to be the only point at distance $\ep$ of $\bar\xi_{\ep}(-t_i)$ and algebraic distance $\rho_i$ from the straight line which is tangent to the trajectory at the point $\bar\xi_{\ep}(-t_i)$.

However, $\bar\gamma^{-s}(x,v)=(x_{\ep}(-s),v_{\ep}(-s))$ (therefore the mapping \eqref{N:change var} is one-to-one) only outside the following pathological situations (relative to the backward trajectory). 
\begin{itemize}
\item[ii)] \textbf{Overlapping}.\\ If $b_i$ and $b_j$ are both internal then $\mathcal B(b_i,\ep)\cap \mathcal B(b_j,\ep)\neq\emptyset$.
\item[iii)] \textbf{Recollisions}.\\ There exists $b_i$ such that for $s\in(t_{j+1},t_{j})$, $j>i$, $\xi_{\ep}(-s)\in \mathcal B(b_i,\ep)$. 
\item[iv)] \textbf{Interferences}.\\ There exists $b_i$ such that $\xi_{\ep}(-s)\in \mathcal B(b_j,\ep)$ for $s\in(t_{i+1},t_{i})$, $j>i$.
\end{itemize}

We simply skip such events by setting
\begin{equation*}
\chi_{ov}=\chi(\{\mbf{b}_Q\;\text{s.t.}\;\text{ii)}\;\text{is}\;\text{realized}\}),
\end{equation*}
\begin{equation*}
\chi_{rec}=\chi(\{\mbf{b}_Q\;\text{s.t.}\;\text{iii)}\;\text{is}\;\text{realized}\}),
\end{equation*}
\begin{equation*}
\chi_{int}=\chi(\{\mbf{b}_Q\;\text{s.t.}\;\text{iv)}\;\text{is}\;\text{realized}\}),
\end{equation*}
and defining
\begin{equation}
\begin{split}\displaystyle
\label{N:formula4}
\bar{f}_{\ep}(x,v,t)=\,& e^{-2\ep^{-2\alpha}\mu t}\sum_{Q\geq 0}\mu_{\ep}^{Q}\int_{0}^{t}dt_1\dots\int_{0}^{t_{Q-1}}dt_Q\int_{-\ep}^{\ep}d\rho_1\dots\int_{-\ep}^{\ep}d\rho_Q
\\&\quad
\,\chi_1(1-\chi_{circ}) (1-\chi_{ov})(1-\chi_{rec})(1-\chi_{int})f_0(\bar\gamma^{-t}(x,v)).
\end{split}
\end{equation}
Note that 
\begin{equation}\label{eq:monotonicity2}
\bar{f}_{\ep}\leq\tilde{ f}_{\ep}\leq \breve{f}_{\ep}\leq f_{\ep}.
\end{equation} 
Note also that in \eqref{N:formula4} we have used the change of variables \eqref{N:change var} for which, outside the pathological sets i), ii), iii), iv) $\bar\gamma^{-t}(x,v)=(x_{\ep}(-t),v_{\ep}(-t))$.

Next we remove $\chi_1(1-\chi_{circ})(1-\chi_{ov})(1-\chi_{rec})(1-\chi_{int})$ by setting 
\begin{equation}
\begin{split}
\label{N:formula5}
\bar{h}_{\ep}(x,v,t)=&\,e^{-2\ep^{-2\alpha}\mu t}\sum_{Q\geq 0}\mu_{\ep}^{Q}\int_{0}^{t}dt_1\dots\int_{0}^{t_{Q-1}}dt_Q \\&
\quad\,
\int_{-\ep}^{\ep}d\rho_1\dots\int_{-\ep}^{\ep}d\rho_Q\, f_0(\bar\gamma^{-t}(x,v)).
\end{split}
\end{equation}
We observe that 
\begin{equation}
\label{N:error}
1-\chi_1(1-\chi_{ov})(1-\chi_{cir})(1-\chi_{rec})(1-\chi_{int}) \leq (1-\chi_1)+\chi_{ov}+\chi_{cir}+\chi_{rec}+\chi_{int}.
\end{equation}
Then by \eqref{N:formula4} and \eqref{N:formula5} we obtain
$$|\bar{h}_{\ep}(t)-\bar{f}_{\ep}(t)|\leq\ffi_1(\ep,t)$$
with
\begin{equation}
\begin{split}\displaystyle
\label{N:ffi_1}
\ffi_1(\ep,t)=\,& \|f_0\|_{\infty}\,e^{-2\ep^{-2\alpha}\mu t}\sum_{Q\geq 0} \mu_{\ep}^{Q}\int_{0}^{t}dt_1\dots\int_{0}^{t_{Q-1}}dt_Q\int_{-\ep}^{\ep}d\rho_1\dots\int_{-\ep}^{\ep}d\rho_Q
\\&\quad
\,[(1-\chi_1)+\chi_{ov}+\chi_{cir}+\chi_{rec}+\chi_{int}].
\end{split}
\end{equation}
We can prove that $ \varphi_1$ is negligible in the limit. The precise statement follows.  
\begin{proposition}\label{N:th:prop}
Let $\ffi_1(\ep,t)$ be defined as in \eqref{N:ffi_1}. For any $t\in[0,T]$
\begin{equation*}
\|\ffi_1(\ep,t)\|_{L^1}\to 0
\end{equation*}
as $\ep\to 0$.
\end{proposition}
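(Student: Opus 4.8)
The plan is to estimate separately each of the five contributions appearing in $\ffi_1(\ep,t)$, namely those coming from $1-\chi_1$, $\chi_{ov}$, $\chi_{cir}$, $\chi_{rec}$ and $\chi_{int}$, bound each by a quantity that vanishes as $\ep\to 0$, and then sum. Throughout one uses that the series is dominated term by term: since $\mu_\ep = \mu\ep^{-2\alpha}$, each pair of integrations $\int_0^{t_{k-1}} dt_k\int_{-\ep}^{\ep}d\rho_k$ contributes a factor bounded by $2\ep\mu_\ep\cdot(\text{time factor}) = 2\mu\ep^{1-2\alpha}\cdot(\text{time factor})$, so the bare series (without any of the bad-event characteristic functions) is controlled by $e^{-2\ep^{-2\alpha}\mu t}\sum_Q (2\mu\ep^{1-2\alpha})^Q t^Q/Q! \cdot e^{2\ep^{-2\alpha}\mu t}$-type bookkeeping; more precisely, after integrating out variables that are \emph{not} constrained by the bad event, the leftover free summation reconstructs (a piece of) the exponential $e^{2\ep^{-2\alpha}\mu t}$ which cancels the prefactor. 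This is the standard Gallavotti bookkeeping and I would state it once as a lemma-style remark, then apply it repeatedly.

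First I would dispose of the easy terms. For $1-\chi_1$: the requirement $b_i\in\mathcal B(x,\ep)\cup\mathcal B(x_\ep(-t),\ep)$ forces one obstacle (say the first or last in the scattering order) to lie in a set of measure $O(\ep^2)$; since that obstacle carries a weight $\mu_\ep = \mu\ep^{-2\alpha}$, its contribution is $O(\ep^{2-2\alpha})\to 0$ for $\alpha<1$, a fortiori for $\alpha<1/8$. For $\chi_{ov}$: if $b_i$ and $b_j$ both internal overlap, then fixing $b_i$ (and its entrance time/impact parameter) confines $b_j$ to lie within distance $2\ep$ of the trajectory \emph{and} within distance $2\ep$ of $b_i$, i.e. to a region of area $O(\ep^2)$; again the weight $\mu_\ep$ gives $O(\ep^{2-2\alpha})\to 0$, with an extra harmless combinatorial factor $Q^2$ absorbed by the convergent series. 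For $\chi_{cir}$ (the complete-cyclotron-orbit / immediate recollision event i)): this is exactly the heuristic computation in the introduction — the probability of flying around a full Larmor circle of radius $R_L=1/B$ inside the $\ep$-tube before the next collision is $\lesssim e^{-\mu_\ep 2\pi\ep R_L} = e^{-2\pi R_L\mu\ep^{-2\alpha}}$, and the "repeated collision with the same scatterer" sub-event likewise confines a center to an $O(\ep^2)$ set; either way the term is exponentially (resp. polynomially) small. I would write these three out in a couple of lines each.

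The main obstacle is the recollision term $\chi_{rec}$ (and the structurally similar interference term $\chi_{int}$), because here the presence of the magnetic field genuinely changes the geometry: between collisions the particle travels on arcs of Larmor circles rather than straight lines, so the naive "a recollision confines the pair $(t_i,\rho_i)$ versus $(t_j,\rho_j)$ to a small set" argument from the field-free case must be redone. The approach I would take follows \cite{DR}: fix the "later" obstacle $b_j$ and all data up to time $t_j$; the condition that the arc emanating from $b_j$ re-enters $\mathcal B(b_i,\ep)$ at some time $s\in(t_{j+1},t_j)$ after $b_i$ has already been hit is a codimension-one constraint, so — after using the grazing bound $|\theta_\ep(\rho)|\le C\ep^\alpha$ from Lemma \ref{N:theta} to control how much the deflection at intermediate obstacles can bend the trajectory — the pair of variables associated with the recolliding obstacle is confined to a set whose size is a positive power of $\ep$; one must check this power beats the accumulated weights $\mu_\ep^Q$ and the combinatorial factor (number of ways to choose the recolliding pair), which is where the restriction $\alpha<1/8$ enters. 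The subtle point special to the magnetic case, and the one I would spend the most care on, is that a trajectory can return near a past obstacle simply by completing (a large fraction of) a cyclotron loop even \emph{without} any intervening bending; one must verify that such "geometric" recollisions are already excluded by $1-\chi_{cir}$ together with the fact that after $O(1)$ collisions the velocity has rotated by a definite angle, so that the effective recollision cross-section in terms of the $(t,\rho)$ variables is still small. Granting these geometric estimates — for which I would cite and adapt Section 3 of \cite{DR}, modified by \eqref{eq:asymvel} and Lemma \ref{N:theta} to account for the $O(\ep)$ magnetic correction — each term is bounded by $C\ep^{\kappa}$ for some $\kappa=\kappa(\alpha)>0$ when $\alpha<1/8$, uniformly for $t\in[0,T]$, and summing the five bounds gives $\|\ffi_1(\ep,t)\|_{L^1}\le C(T)\ep^\kappa\to 0$, which is the claim. (The $L^1$ in $(x,v)$ costs only the finite factor $|\supp_x f_0| + tT$-type volume and $|S_1|$, using $A4)$.)
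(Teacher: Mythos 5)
Your decomposition into the five contributions, the disposal of $(1-\chi_1)$, $\chi_{ov}$, and the exponential piece of $\chi_{cir}$, and the identification of recollisions as the locus of the magnetic-field difficulty all match the paper. However, there is a genuine gap in your treatment of $\chi_{rec}$: you propose to rule out geometric recollisions (the particle coming back around a Larmor arc without intervening deflection) using ``$1-\chi_{cir}$ together with the fact that after $O(1)$ collisions the velocity has rotated by a definite angle.'' This does not work, because $1-\chi_{cir}$ only excludes gaps between consecutive collisions of length $\geq T_L - C\ep$, i.e.\ nearly full Larmor orbits; a trajectory covering, say, a quarter or half of a cyclotron loop between two collisions is not excluded, yet such an arc contributes an $O(1)$ angle $\varphi_h$ and ruins any attempt to squeeze the recollision constraint into a small set. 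Moreover, each scattering rotates the velocity only by $O(\ep^\alpha)$ (Lemma \ref{N:theta}), so it takes $O(\ep^{-\alpha})$, not $O(1)$, collisions to accumulate a rotation of order $\pi$ from deflections alone.

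The paper fills this gap by introducing a refined exclusion $\chi_{arc}^{(\nu)}$: the event that some inter-collision gap exceeds $T_L\ep^{\nu} - C\ep$ for a parameter $0<\nu<2\alpha$. By the same combinatorial estimate used for $\chi_{cir}$ (the $I_i$ bookkeeping), $\EE_{x,v}[\chi_{arc}^{(\nu)}]\to 0$, and on the complement $1-\chi_{arc}^{(\nu)}$ every free arc subtends an angle $|\varphi_h|\leq C'\ep^{\nu}$. Only with this much finer control, combined with $|\theta_h|\leq C\ep^{\alpha}$, can one argue that a total rotation $>\pi$ between $b_i$ and the recollision forces an intermediate outgoing velocity $v_k^+$ to be nearly orthogonal to $v_j^+$ up to $O(\ep^{\nu/2})$, which in turn confines the $t_{k+1}$ integration to a window of size $O(\ep)$ and yields the bound $C t^3 \ep^{1-8\alpha}$, hence the condition $\alpha<1/8$. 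Your proposal acknowledges the difficulty but appeals to the wrong exclusion event and thus does not establish the needed small-angle control on the free arcs; you should insert the $\chi_{arc}^{(\nu)}$ step explicitly. On the easier $\chi_{cir}$ term your sketch is also slightly off: the non-exponential piece is not controlled by an $O(\ep^2)$ confinement of a center but by the combinatorial estimate over ordered collision times subject to a gap $\geq T_L-C\ep$, which is the same machinery later reused for $\chi_{arc}^{(\nu)}$.
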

\begin{proof}
See Section \ref{N:pat}.
\end{proof}

To complete the proof of Theorem \ref{N:th:main} we still need to show the asymptotic equivalence of $\bar{h}_{\ep}$ and $h_{\ep}$. Notice that $ h_{\ep}$ is given by (\ref{eq:h_eps}) where the trajectory $ \gamma^{-t}(x,v)=(\xi_{\ep}(-t),\eta_{\ep}(-t))$ is a jump process in the velocity space, i.e. the changes of velocity are instantaneous. 
We compare the trajectory $ \bar\gamma^{-t}(x,v)=(\bar\xi_{\ep}(-t),\bar\eta_{\ep}(-t))$ with $ \gamma^{-t}(x,v)$: being  $t_1,\dots,t_Q$ the sequence of impact times and $ \tau \leq C \ep$ the collision time, the spacial coordinates can differ only inside the interaction disk, while the velocities can differ only if $t \in (t_1, t_1 +\tau) $. 

In formulae we have 
\begin{align}
|\xi_{\ep}(-t)-\bar{\xi_{\ep}}(-t)| &\leq C_1 Q\,\ep \nonumber \\
|\eta_{\ep}(-t)-\bar{\eta_{\ep}}(-t)|& \leq C_2 \ep^{\alpha} \chi(t -t_1 \leq C \ep)\,.   
\end{align}
By exploiting the regularity of the initial condition $ f_0$ we get
\begin{align}
|f_0(\xi_{\ep}(-t),\eta_{\ep}(-t)) - f_0(\bar\xi_{\ep}(-t),\bar\eta_{\ep}(-t))| \leq C'[Q \ep + \ep^{\alpha} \chi(t -t_1 \leq C \ep)]
\end{align}
which implies 
\begin{align}
|h_{\ep}(x,v,t) - \bar{h}_{\ep}(x,v,t) | \leq & \, C' e^{-2\mu t\ep^{-2\alpha}}\sum_{Q\geq 0}\mu_{\ep}^{Q}\int_{0}^{t}dt_1\dots\int_{0}^{t_{Q-1}}dt_Q \nonumber \\
& \times  \int_{-\ep}^{\ep}d\rho_1\dots\int_{-\ep}^{\ep}d\rho_Q\, [Q \ep + \ep^{\alpha} \chi(t -t_1 \leq C \ep)] \nonumber \\
 \leq & \, C(t \ep^{1-2\alpha} + \ep^{1+\alpha})\,.
\end{align}
Hence we obtained $\displaystyle \lim_{\ep \to 0}\norm{h_{\ep} - \bar{h}_{\ep}}_{L^{\infty}([0,T] \times \R^2 \times S_{1})}=0 $.
We observe that the monotonicity argument behind this strategy, see equation \eqref{eq:monotonicity2}, the positivity of the solution $h_{\ep}$ of the Boltzmann equation and the conservation of mass imply that $\bar{f}_{\ep}, \bar{h}_{\ep}$ and $h_{\ep}$ have the same asymptotic behavior in $L^{\infty}\left([0,T]; L^{1}(\R^2 \times S_{1})\right)$ when $\ep\to 0$.

\subsection{Control of the pathological sets: proof of Proposition \ref{N:th:prop}} \label{N:pat}
In this section we prove Proposition \ref{N:th:prop}. This makes rigorous the claim of the heuristic argument presented in the paper's introduction.
For any measurable function $u$ of the backward Markov process $(\xi_{\ep},\eta_{\ep})$ 
we set
\begin{equation*}
\begin{split}
\EE _{x,v}[u]&=e^{-2 \mu_{\ep}\ep t} \sum_{Q\geq 0}\mu_{\ep}^Q\int_{0}^{t}dt_1\dots\int_{0}^{t_{Q-1}}dt_Q\\&
\int_{-\ep}^{\ep}d\rho_1\dots\int_{-\ep}^{\ep}d\rho_Q \,u(\xi_{\ep},\eta_{\ep}).
\end{split}
\end{equation*}
Recalling \eqref{N:error} we have
$$\ffi_1(\ep,t) \leq \|f_0\|_{\infty}\EE _{x,v}[(1-\chi_1)+\chi_{ov}+\chi_{cir}+\chi_{rec}+\chi_{int}]\,.$$
We can skip the estimates of the first two contributions, i.e. $\EE _{x,v}[(1-\chi_1)]$ and $\EE _{x,v}[\chi_{ov}]$, since the presence of the external field does not affect the classical arguments which can be found in \cite{BNP,DR,DP}.  However, the presence of the magnetic field and consequently the circular motion of the test particle strongly affects the explicit estimates of the pathological events ii), iii). Therefore, we need a detailed analysis for $\chi_{cir}$, $\chi_{rec}$ and $\chi_{int}$.

For what concerns the pathological event due to a recollision with the same scatterer (see Figure \ref{N:fig:2M}), we observe that $\chi_{cir}=1$ if there exists an entrance time $t_i$ such that $|t_{i}-t_{i+1}|\geq T_L-\tau\geq T_L-C\varepsilon$ for some $i=0,\dots Q-1$.  Moreover, $ \chi_{circ}=1$ also when a test particle performs an entire Larmor orbit without colliding with any obstacles. As explained in the introduction, the probability of this event is bounded from above by $ C \exp(- \frac{2 \pi \mu}{B}\ep^{-2 \alpha}):= c_{\alpha}(\ep)$. 
Therefore, it results
\begin{figure}
\centering
\includegraphics[scale=0.1]{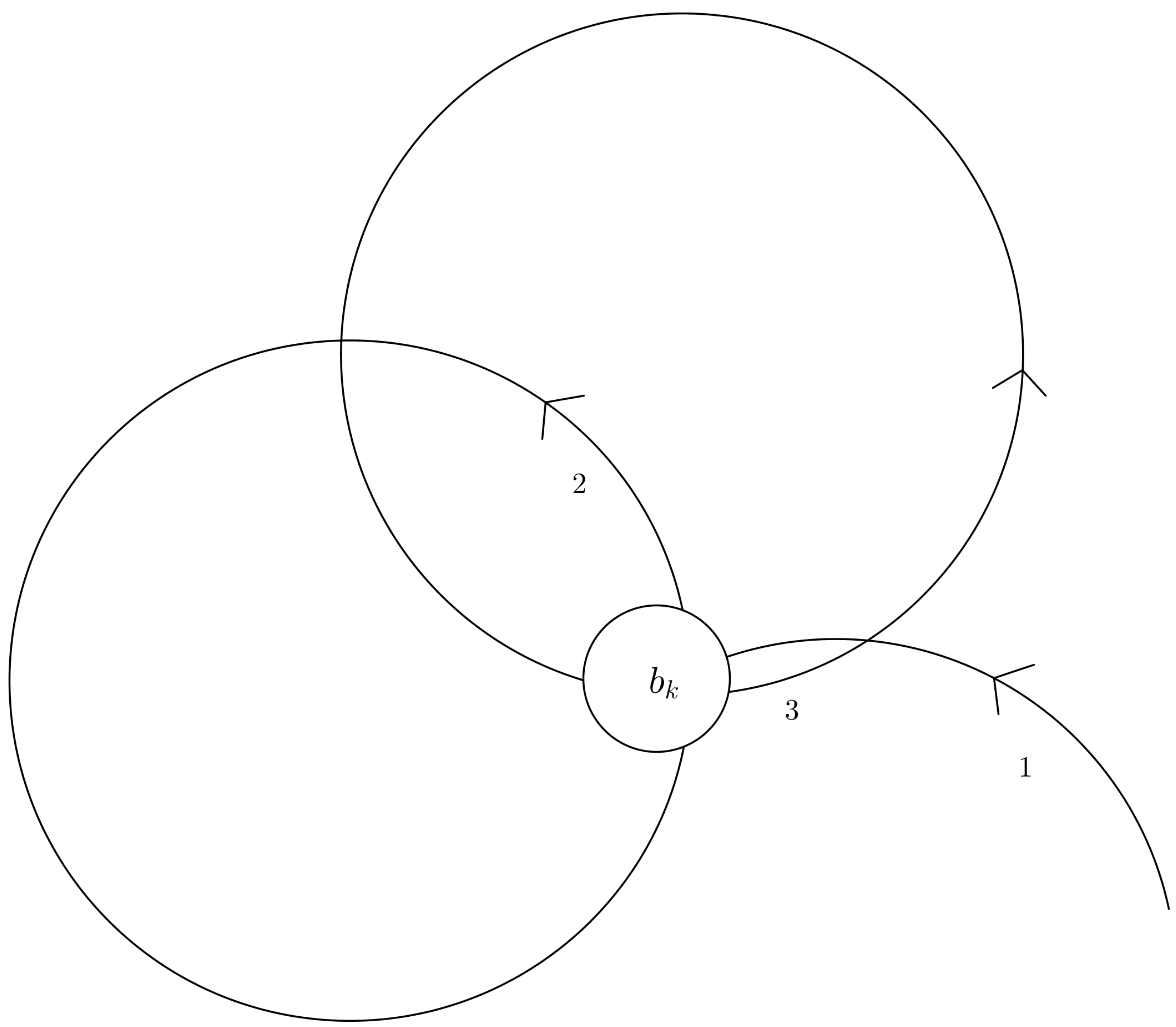}
\caption{Recollision with the same scatter.}
\label{N:fig:2M}
\end{figure}

%%%%

\begin{align}
\EE_{x,v}[\chi_{circ}]&\leq c_{\alpha}(\ep) + e^{-2 \mu_{\ep}\ep t} \sum_{Q\geq 1} \mu_{\ep}^Q\int_{0}^{t}dt_1\dots\int_{0}^{t_{Q-1}}dt_Q 
\int_{-\ep}^{\ep}d\rho_1\dots\int_{-\ep}^{\ep}d\rho_Q \, \sum_{i=1}^{Q}\chi(t_{i}<t_{i-1} - T_L+C\varepsilon) \nonumber \\&
\leq c_{\alpha}(\ep) + e^{-2t \mu_{\varepsilon}\varepsilon }\sum_{Q\geq 1}(2 \mu_{\varepsilon}\varepsilon)^Q \int_0^t dt_1 \cdots \int_0^{t_{Q-1}}dt_Q \sum_{i=1}^{Q}\chi(t_{i}<t_{i-1} - T_L+C\varepsilon) 
\end{align}
where $ \mu_{\varepsilon}=\mu \varepsilon^{-1-2\alpha}$ with $ \alpha \in (0,1/2)$ and $ t_0=t$.

We set 
\begin{align*}
I_i &:= \int_0^t dt_1 \cdots \int_0^{t_{Q-1}}dt_Q \chi(t_{i}<t_{i-1} - T_L+C\varepsilon) \nonumber \\
& =\int_0^t dt_1 \cdots \int_0^{t_{i-1}}dt_i \chi(t_{i}<t_{i-1} - T_L+C\varepsilon)\frac{t_i^{Q-i}}{(Q-i)!} \nonumber \\
&= \int_0^t dt_1 \cdots \int_0^{t_{i-2}}dt_{i-1} \frac{(t_{i-1}-T_L + C\varepsilon)^{Q-(i-1)}}{(Q-(i-1))!}
\end{align*}
then 
\begin{align}
\EE_{x,v}[\chi_{circ}]\leq c(\ep) + e^{-2t \mu_{\varepsilon}\varepsilon }\sum_{Q\geq 1}(2 \mu_{\varepsilon}\ep)^Q \sum_{i=1}^{Q}I_i\,.
\end{align}
Note that 
\begin{align*}
I_{i+1} &=\int_0^t dt_1 \cdots \int_0^{t_{i-1}}dt_{i} \frac{(t_{i} - T_L+C\varepsilon)^{Q-i}}{(Q-i)!}  \nonumber \\
&= \int_0^t dt_1 \cdots \int_0^{t_{i-2}}dt_{i-1} \frac{(t_{i-1}-T_L + C\varepsilon)^{Q-(i-1)}-(-T_L+C\varepsilon)^{Q-(i-1)}}{(Q-(i-1))!} \nonumber \\
&= I_i -\frac{(-T_L + C\varepsilon)^{Q-i+1}}{(Q-i+1)!}\frac{t^{i-1}}{(i-1)!} \nonumber \\
&= I_i - f_{i-1}
\end{align*}
where $ f_i:=\frac{(-T_L + C\varepsilon)^{Q-i}}{(Q-i)!}\frac{t^{i}}{(i)!}$. We now look at\begin{align}\label{eq:Ii}
\sum_{i=1}^Q I_i 
&= QI_1-\sum_{j=0}^{Q-2}(Q-1-j)f_j \nonumber \\
%&= QI_1 - \sum_{j=0}^{Q-2}(Q-1-j)\frac{(-T_L+C\varepsilon)^{Q-j}}{(Q-j)!}\frac{t^j}{j!} \nonumber \\
&=\frac{1}{(Q-1)!}\bigg[(t-T_L + C\varepsilon)^Q - \sum_{j=0}^{Q-2}(Q-1)!(Q-j)\frac{(-T_L+C\varepsilon)^{Q-j}}{(Q-j)!}\frac{t^j}{j!} +\sum_{j=0}^{Q-2}(Q-1)!\frac{(-T_L+C\varepsilon)^{Q-j}}{(Q-j)!}\frac{t^j}{j!}  \bigg] \nonumber \\
%&= \frac{1}{(Q-1)!} \bigg[ (t-T_L + C\varepsilon)^Q -\frac{1}{Q}\sum_{j=0}^{Q-2}(Q-j){Q \choose j} (-T_L+C\varepsilon)^{Q-j}t^j +\frac{1}{Q} \sum_{j=0}^{Q-2}{Q \choose j}(-T_L+C\varepsilon)^{Q-j} t^j \bigg]\nonumber \\
%&= \frac{1}{(Q-1)!} \bigg[ 2(t-T_L + C\varepsilon)^Q -\frac{1}{Q}\sum_{j=0}^{Q-1}(Q-j){Q \choose j} (-T_L+C\varepsilon)^{Q-j}t^j -\frac{t^Q}{Q} \bigg]\nonumber \\
%&=\frac{1}{(Q-1)!} \bigg[ 2(t-T_L + C\varepsilon)^Q +(T_L-C\varepsilon)\sum_{j=0}^{Q-1}{Q -1 \choose j} (-T_L+C\varepsilon)^{Q-1-j}t^j -\frac{t^Q}{Q} \bigg]\nonumber \\
%&=\frac{1}{(Q-1)!} \bigg[ 2(t-T_L + C\varepsilon)^Q +(T_L-C\varepsilon)(t-T_L+C\varepsilon)^{Q-1} -\frac{t^Q}{Q} \bigg]\nonumber \\
&\leq \frac{1}{(Q-1)!} \bigg[ 2(t-T_L + C\varepsilon)^Q +(T_L-C\varepsilon)(t-T_L+C\varepsilon)^{Q-1} \bigg].\nonumber \\
\end{align}
Finally we got
\begin{align}
\EE_{x,v}[\chi_{circ}] \leq & \, c_{\alpha}(\ep) + e^{-2t \mu_{\varepsilon}\varepsilon }\sum_{Q\geq 1}(2 \mu_{\varepsilon}\varepsilon)^Q \bigg[ \frac{2(t-T_L + C\varepsilon)^Q}{(Q-1)!} + \frac{(T_L-C\varepsilon)(t-T_L+C\varepsilon)^{Q-1}}{(Q-1)!}\bigg] \nonumber \\
= & \,c_{\alpha}(\ep) + 2e^{-2t \mu_{\varepsilon}\varepsilon }(2 \mu_{\varepsilon}\varepsilon)(t-T_L + C\varepsilon)\sum_{Q\geq 1}(2 \mu_{\varepsilon}\varepsilon)^{Q-1} \frac{(t-T_L + C\varepsilon)^{Q-1}}{(Q-1)!} \nonumber \\
& + e^{-2t \mu_{\varepsilon}\varepsilon }(T_L-C\varepsilon)(2 \mu_{\varepsilon}\varepsilon)\sum_{Q\geq 1} (2 \mu_{\varepsilon}\varepsilon)^{Q-1}\frac{(t-T_L+C\varepsilon)^{Q-1}}{(Q-1)!} \nonumber \\
= & \,c_{\alpha}(\ep) + 2e^{-2(T_L-C\varepsilon) \mu \varepsilon^{-2 \alpha} }(2 \mu\varepsilon^{-2 \alpha})(t-T_L + C\varepsilon) + e^{-2(T_L-C\varepsilon) \mu \varepsilon^{-2 \alpha} }(T_L-C\varepsilon)(2 \mu \varepsilon^{-2 \alpha}) 
\end{align}
for $\alpha>0$ and $\ep$ sufficiently small. Hence, $\EE_{x,v}[\chi_{circ}]$ vanishes as $ \varepsilon \to 0$. 

We now consider a generalization of $\chi_{cir}$: let be $\chi_{arc}^{(\nu)}$ the characteristic function of the event such that the light particle does not
hit any obstacles in a time interval equal to $T_L\,\ep^{\nu},$ $0<\nu< 1$. 
More precisely $\chi_{arc}^{(\nu)}=1$ if there exists an entrance time $t_i$ such that $|t_{i}-t_{i+1}|\geq T_L\,\ep^{\nu}-\tau\geq T_L\,\ep^{\nu}-C\varepsilon$ for some $i=0,\dots Q-1.$
The same computations as for $\EE_{x,v}[\chi_{circ}]$ show that $\EE_{x,v}[\chi_{arc}^{(\nu)}]$ vanishes as $ \varepsilon \to 0$ when $\nu< 2\alpha.$ 
In other words this shows that the motion of the light particle outside the obstacles covers arcs of circle and corresponding angles of order at most $O(\ep^{\nu})$. 

%%%%

Next we pass to the control of the recollision event. We observe that
\begin{equation}\label{eq:arcrec}
\chi_{rec}=\big(1-\chi_{arc}^{(\nu)}\big)\chi_{rec}+\chi_{rec} \, \chi_{arc}^{(\nu)}\leq  \big(1-\chi_{arc}^{(\nu)}\big)\chi_{rec}+\chi_{arc}^{(\nu)}
\end{equation}
and this implies
$$\EE_{x,v}[\chi_{rec}]\leq \EE_{x,v}[\big(1-\chi_{arc}^{(\nu)}\big)\chi_{rec}]+\EE_{x,v}[\chi_{arc}^{(\nu)}]\,,$$
but $\EE_{x,v}[\chi_{arc}^{(\nu)}]$ is vanishing in the limit $ \varepsilon \to 0$ as we have seen before. Therefore, we can focus on $\EE_{x,v}[\big(1-\chi_{arc}^{(\nu)}\big)\chi_{rec}]$.
Let $t_i$ the first time the light particle hits the $ i$-th scatterer $ b_i$, $v_i^{-}$ the incoming velocity, $v_i^{+}$ the outgoing velocity (with respect to the backwards trajectory) and $t_i^{+}$ the exit time. Moreover, we fix the axis in such a way that $v_i^{+}$ is parallel to the $x$ axis. 
We have
\begin{equation} 
\chi_{rec}\big(1-\chi_{arc}^{(\nu)}\big)\leq \big(1-\chi_{arc}^{(\nu)}\big)\sum_{i=1}^{Q}\sum_{j>1}\chi_{rec}^{i,j},
\end{equation}
where  $\chi_{rec}^{i,j}=1$ if and only if $b_i$ (constructed via the sequence $t_1,\rho_1,\dots,t_i,\rho_i$) is recollided in the time interval $(t_j,t_{j-1})$. Note indeed that a recollision can occur only if the rotation angle $|\sum_{h=i+1}^{j-1}(\theta_h+\varphi_h)|>\pi$ where $\varphi_h$ is the angle covered outside the obstacles in the time interval $(t_{h+1},t_{h}^{+})$, being 
$\theta_h$ the $ h$-th scattering angle. The constraint $\big(1-\chi_{arc}^{(\nu)}\big)$ implies that $|\varphi_h|\leq C'\ep^{\nu}$.\\
Hence, since $|\theta_h+\varphi_h|\leq C\,\ep^{\alpha}+C' \,\ep^{\nu}\leq C''\ep^{\nu/2},\; $ in order to have a recollision there must be an intermediate velocity $v_k$, $k=i+1,\dots,j-1$ such that
\begin{equation}
\label{N:vel.cdz}
|v_{k}^{+}\cdot v_{j}^{+}|\leq C \ep^{\nu/2},
\end{equation} 
namely $v_k^{+}$ is almost orthogonal to $v_j^{+}$ (see Figure \ref{N:fig:Recollision}). 
\begin{figure}[htbp] 
   \centering
  \includegraphics[scale=0.4]{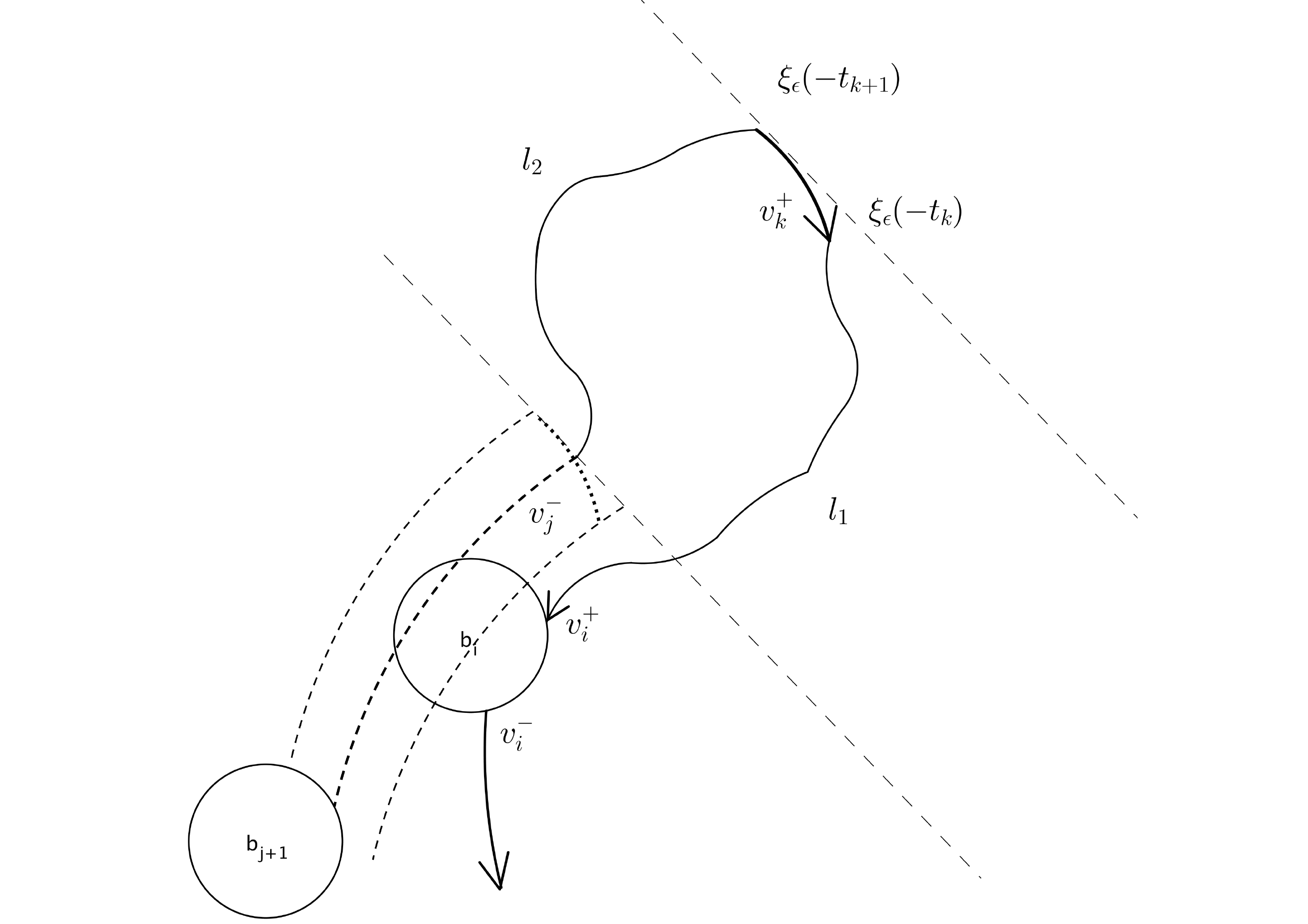} 
   \caption{Backward recollision.}
   \label{N:fig:Recollision}
\end{figure}

Then
\begin{equation} 
\chi_{rec}\big(1-\chi_{arc}^{(\nu)}\big)\leq \big(1-\chi_{arc}^{(\nu)}\big)\sum_{i=1}^{Q}\sum_{j=i+2}^{Q}\sum_{k=i+1}^{j-1}\chi_{rec}^{i,j,k},
\end{equation}
where $\chi_{rec}^{i,j,k}=1$ if and only if $\chi_{rec}^{i,j}=1$ and \eqref{N:vel.cdz} is fulfilled.
Following \cite{BNP}, we fix all the parameters $\rho_1,\dots,\rho_{Q}$, $t_1,\dots,t_Q$ but $t_{k+1}$. The two branches of the trajectory $l_1,l_2$ are rigid so that, when a recollision occurs,
the integration domain with respect to $t_{k+1}$ is restricted to a time interval bounded by 
$$\frac{2\ep}{\cos C\ep^{\nu/2}}\leq {4\ep}.$$ 	
	
Performing all the other integrations and summing over $i,j,k$ we obtain 
\begin{equation}
\begin{split}
\label{N:Markov error rec}
&\EE _{x,v}\left[\big(1-\chi_{arc}\big)\sum_{i=1}^{Q}\sum_{j=1}^{Q}\sum_{k=i+1}^{j-1}{\chi}_{rec}^{i,j,k}\right]\\&
\leq {C\ep}\, e^{-2\mu \ep^{-2\alpha}t} \sum_{Q\geq 3}(Q-1)(Q-2)(Q-3)\frac{(2 \mu \,\ep^{-2\alpha})^Q}{(Q-1)!} t^{Q-1}\\&
\leq C' t^3\ep^{1-8\alpha}, 
%\leq C \ep^{\gamma}t^3,
%\leq {4\ep}e^{-2|v|\ep^{-2\alpha}t} \| f_0\|_{L^{\infty}} \sum_{Q\geq 3}(Q-1)(Q-2)(Q-3)\frac{(2|v|\ep^{-2\alpha})^Q}{(Q-1)!} t^{Q-1}\\&
%\leq C t^3\ep^{1-8\alpha}\leq C|v|^3\ep^{\gamma}t^3,
\end{split}
\end{equation}
which tends to 0 as $ \ep$ goes to 0 for $\alpha<1/8$.

Following the strategy used in \cite{BNP}, since a backward interference is a forward recollision, the estimate for the interference event can be handled by using the Liouville Theorem.  

\section{Proof of Proposition \ref{N:prop1LR}}\label{bounds_Boltzmann}
Our aim is to prove the asymptotic equivalence of $f_\ep$, defined by \eqref{N:valatt2}, and $\h$ solution of the linear Boltzmann equation \eqref{N:cutBoltz}, namely
\begin{equation*} 
\left\{\begin{array}{ll}
(\partial_{t}+v\cdot\nabla_{x}+B\,v^{\perp}\cdot\nabla_{v})\h (t,x,v)=\tilde{\text{L}}\h (t,x,v)&\\
\h(x,v,0)=f_0(x,v)&,
\end{array}\right.
\end{equation*}
which reads
\begin{equation}
%\begin{split}
\label{N:formula6}
\h(x,v,t)=e^{-2\mu\ep^{\gamma-1} t}\sum_{Q\geq 0}\mu_{\ep}^{Q}\int_{0}^{t}dt_Q\dots\int_{0}^{t_{2}}dt_1%\\&
\int_{-\ep^{\gamma}}^{\ep^{\gamma}}d\rho_1\dots\int_{-\ep^{\gamma}}^{\ep^{\gamma}}d\rho_Q\, f_0(\bar{\xi}_{\ep}(-t),\bar{\eta}_{\ep}(-t)).
%\end{split}
\end{equation}
We recall that we can expand $ f_{\ep}$ as follows:
\begin{equation}
f_{\ep}(x,v,t)=e^{-\mu_{\ep}|\mathcal{B}(x,t)|}\sum_{N\geq 0}\frac{\mu_{\ep}^{N}}{N!}\int_{\mathcal{B}(x,t)^N}d\mbf{c}_{N}\, f_0(T^{-t}_{\mbf{c}_{N}}(x,v)),
\end{equation}
where $T^t_{\mbf{c}_{N}}(x,v)$ is the Hamiltonian flow with initial datum $(x,v)$ and $\mathcal{B}(x,t)$ is the disk of center $x$ and radius $t$.
We observe that the proof follows the same strategy of Proposition \ref{N:prop1LANDAU} (see also Section 3 in \cite{DP}). As before the hardest part is the estimate of the non-Markovian contribution which is summarized in the following proposition.
 \begin{proposition}\label{eq:pathologies_long_range}
Let $\ffi_1(\ep,t)$ be defined as in \eqref{N:ffi_1} with the only difference that the radius of the obstacles is now $ \ep^{\gamma}$ instead of $ \ep$ and the collision time $ \tau$ is bounded by $ C \ep^{\gamma}$ instead of $ C \ep$. Then for any $t\in[0,T]$
\begin{equation*}
\|\ffi_1(\ep,t)\|_{L^1}\to 0
\end{equation*}
as $\ep\to 0$.
\end{proposition}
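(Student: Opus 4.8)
\textbf{Proof proposal for Proposition \ref{eq:pathologies_long_range}.}

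The plan is to follow the scheme of Section \ref{N:pat} and split the error functional as $\ffi_1(\ep,t)\le \|f_0\|_\infty\,\EE_{x,v}[(1-\chi_1)+\chi_{ov}+\chi_{cir}+\chi_{rec}+\chi_{int}]$, treating each term separately and exhibiting an explicit power of $\ep$ which vanishes in the limit. For the terms $(1-\chi_1)$ and $\chi_{ov}$, I would argue exactly as in the short-range case and in \cite{DP}: these are controlled by the measure of the initial/final interaction discs, which are now of radius $\ep^\gamma$, against the Poisson intensity $\mu_\ep=\mu\ep^{-1}$; since $\gamma>6/7>1$... wait, $\gamma\in(6/7,1)$ so $\ep^{\gamma}$ decays slower than $\ep$, and $\mu_\ep\,\ep^{2\gamma}=\mu\ep^{2\gamma-1}\to 0$ because $2\gamma-1>5/7>0$. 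So these contributions vanish. Note the collision frequency here is $\nu\sim\mu_\ep\ep^\gamma=\mu\ep^{\gamma-1}\to\infty$, so the weight $e^{-2\mu\ep^{\gamma-1}t}$ plays the same damping role as before; the combinatorial sums are handled by the same $(2\mu\ep^{\gamma-1})^Q t^{Q}/Q!$ bookkeeping.

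For $\chi_{cir}$ I would repeat verbatim the telescoping computation carried out in Section \ref{N:pat} for $\EE_{x,v}[\chi_{circ}]$, with $\mu_\ep\ep$ replaced by $\mu_\ep\ep^\gamma=\mu\ep^{\gamma-1}$: the completed-Larmor-orbit piece is bounded by $c(\ep)=C\exp(-2\pi\mu R_L\,\ep^{\gamma-1})$, which vanishes since $\gamma<1$, and the remaining sum reduces to terms of the form $e^{-2(T_L-C\ep^\gamma)\mu\ep^{\gamma-1}}\,\mu\ep^{\gamma-1}\,(t-T_L+C\ep^\gamma)$, again vanishing. I would then introduce the same auxiliary event $\chi_{arc}^{(\nu)}$ (no collision in a time window $T_L\ep^\nu$) and check that $\EE_{x,v}[\chi_{arc}^{(\nu)}]\to 0$ whenever $\nu<1-\gamma$; consequently, outside $\chi_{arc}^{(\nu)}$, every arc traversed between obstacles subtends an angle $O(\ep^\nu)$, while the scattering angles themselves are controlled by the cross-section estimate of Proposition \ref{RK:c-s} (i.e. $\Gamma^{(B)}_{\ep,\gamma}(\theta)\le C\theta^{-1-1/s}$ on $[\ep^{\text{something}},\pi]$, with a small-angle cutoff coming from the truncation at $\ep^{\gamma-1}$).

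The main obstacle, as in the short-range argument, is $\EE_{x,v}[(1-\chi_{arc}^{(\nu)})\chi_{rec}]$. Here I would decompose $\chi_{rec}(1-\chi_{arc}^{(\nu)})\le (1-\chi_{arc}^{(\nu)})\sum_{i}\sum_{j>i+1}\sum_{k=i+1}^{j-1}\chi_{rec}^{i,j,k}$, where $\chi_{rec}^{i,j,k}$ also imposes that some intermediate outgoing velocity $v_k^+$ is almost orthogonal, $|v_k^+\cdot v_j^+|\le C\ep^{\nu/2}$, to the velocity $v_j^+$ at the recollision — this is forced because a recollision requires a total turning angle exceeding $\pi$, and between obstacles the turning is only $O(\ep^\nu)$ plus scattering angles. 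The new feature compared with Section \ref{N:sec:ST} is the long-range tail of the cross-section: the scattering angles are no longer uniformly $O(\ep^\alpha)$ but are distributed according to $\Gamma^{(B)}_{\ep,\gamma}$, so summing/integrating over the impact parameters $\rho_h\in[-\ep^\gamma,\ep^\gamma]$ picks up the logarithmically-divergent-looking (but $s$-integrable) tail; one must check that this does not spoil the gain. Freezing all parameters except $t_{k+1}$, the rigidity of the two branches restricts the $t_{k+1}$-integration to an interval of length $\le 2\ep^\gamma/\cos(C\ep^{\nu/2})\le 4\ep^\gamma$, so one gains a factor $\ep^\gamma$ relative to the free $t$-integration; performing the remaining $Q-1$ time integrations, the $Q$-fold $\rho$-integrations (each of size $2\ep^\gamma$, combining with $\mu_\ep=\mu\ep^{-1}$ to give the effective rate $\mu\ep^{\gamma-1}$), and summing over the $O(Q^3)$ choices of $i,j,k$ against $e^{-2\mu\ep^{\gamma-1}t}(2\mu\ep^{\gamma-1})^Q t^{Q-1}/(Q-1)!$, the net estimate is of the form $C\,t^3\,\ep^{\gamma}\,(\mu\ep^{\gamma-1})^{3}=C't^3\ep^{4\gamma-3}$ (up to the harmless cross-section tail factor), which tends to $0$ precisely when $\gamma>3/4$; since $\gamma>6/7>3/4$ this is satisfied, and the slightly stronger constraint $\gamma>6/7$ is exactly what is needed to absorb the extra loss coming from the long-range cross-section and from the matching with Proposition \ref{N:proplr2}. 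Finally, the interference term $\chi_{int}$ is handled as in \cite{BNP}: a backward interference is a forward recollision, so Liouville's theorem transfers the bound just obtained, and collecting all the estimates gives $\|\ffi_1(\ep,t)\|_{L^1}\to 0$.
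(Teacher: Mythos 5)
Your proposal follows the general architecture of Section~\ref{N:pat}, but the treatment of the recollision term is not correct, and the gap is precisely where the long-range setting differs from the weak-coupling one.

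In the weak-coupling analysis the scattering angles are uniformly $O(\ep^{\alpha})$ (Lemma~\ref{N:theta}), so imposing small arc angles via $\chi_{arc}^{(\nu)}$ really does force, at a recollision, some intermediate outgoing velocity $v_k^+$ to be nearly orthogonal to $v_j^+$: the turning between collisions is uniformly small, so to accumulate a turning of order $\pi$ the chain of velocities has to pass near the orthogonal direction. That near-orthogonality then lets you bound the $t_{k+1}$-integration by $2\ep/\cos(C\ep^{\nu/2})\le 4\ep$ and close the estimate. In the long-range inverse-power-law case the cross section $\Gamma^{(B)}_{\ep,\gamma}$ is supported on essentially all of $(-\pi,\pi)$, so a \emph{single} scattering can turn the velocity by almost $\pi$ and produce a recollision without any intermediate near-orthogonality at all. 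Your claim that the condition $|v_k^+\cdot v_j^+|\le C\ep^{\nu/2}$ ``is forced'' therefore fails, and the decomposition $\chi_{rec}\le\sum\chi_{rec}^{i,j,k}$ with that constraint does not cover the dangerous events. Your own parenthetical observation — that the long-range tail is new and ``one must check that this does not spoil the gain'' — is exactly the point where the argument has to change, and it cannot be waved away as a ``harmless cross-section tail factor.''

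The paper's proof handles this by a dichotomy that your scheme does not contain. Having introduced $\chi_{arc}^{(M)}$ with a \emph{finite} $M$ (which is all one can afford when the collision frequency scales as $\ep^{\gamma-1}$; the arc angles are bounded, not vanishing), the recollision indicator is split according to whether $\sin\alpha_{jk}\le\ep^{\delta}/4$ for all intermediate $k$ or $\sin\alpha_{jk}\ge\ep^{\delta}/4$ for some $k$, with $\delta>0$ a free parameter. In the first regime a near head-on scattering $|\theta_r+\varphi_r-\pi|<\ep^{\delta}$ is forced somewhere along the chain, and the boundedness of $\Gamma^{(B)}_{\ep,\gamma}$ away from $\theta=0$ (Proposition~\ref{RK:c-s}) turns the $\rho_r$-integration into a factor $C\ep^{1+\delta}$, yielding the bound $C\ep^{2\gamma-2+\delta}$. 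In the second regime the rigidity argument applies, but with denominator $\sin\alpha_{jk}\ge\ep^{\delta}/4$ rather than a cosine close to $1$, so the $t_{k+1}$-integration is only restricted to length $\le 8\ep^{\gamma-\delta}$, giving $C\ep^{5\gamma-\delta-4}$. Optimising $\delta=(3\gamma-2)/2$ balances the two and produces $\EE_{x,v}[\chi_{rec}]\le C\ep^{(7\gamma-6)/2}$, whence the threshold $\gamma>6/7$. Your back-of-the-envelope count $\ep^{4\gamma-3}$ both drops one power of the collision rate $\mu\ep^{\gamma-1}$ (the $Q^3$ combinatorics together with the single extra obstacle give a fourth power, not a third) and, more fundamentally, uses the optimistic $\ep^{\gamma}$ rigidity gain that is unavailable here; the threshold $\gamma>6/7$ is not a cosmetic strengthening to absorb a tail but the actual content of the estimate. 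The rest of your proposal — the reduction to $\chi_{circ}$, $\chi_{rec}$, $\chi_{int}$ via Liouville, the treatment of $\chi_{circ}$ by the telescoping sum with rate $\mu\ep^{\gamma-1}$, and the preliminary bounds on $(1-\chi_1)$ and $\chi_{ov}$ — matches the paper.
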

\begin{proof}
Also in this case we can skip the estimates of the contributions $\EE _{x,v}[(1-\chi_1)]$ and $\EE _{x,v}[\chi_{ov}]$ since the presence of the external field does not affect the bounds in \cite{DP}. Moreover, as in proposition \ref{N:th:prop}, if we know that $\EE_{x,v}[\chi_{rec}]$ is negligible, then the Liouville theorem guarantees that also $\EE_{x,v}[\chi_{int}]$ can be disregarded in the limit. Hence, it suffices to focus on $\EE_{x,v}[\chi_{circ}]$ and $\EE_{x,v}[\chi_{rec}]$. 
So we look at
\begin{align}
\EE_{x,v}[\chi_{circ}] \leq c_{\gamma}(\ep)+ e^{-2t \mu_{\varepsilon}\varepsilon^{\gamma} }\sum_{Q\geq 1}(2 \mu_{\varepsilon}\varepsilon^{\gamma})^Q \int_0^t dt_1 \cdots \int_0^{t_{Q-1}}dt_Q \sum_{i=0}^{Q-1}\chi(t_{i+1}<t_i - T_L+C\varepsilon^{\gamma})
\end{align}
where $c_{\gamma}(\ep) = C \exp(-\frac{2 \pi \mu}{B} \ep^{\gamma-1}) $, $ \mu_{\varepsilon}=\mu\, \varepsilon^{-1}$ and $\gamma\in (0,1)$. Following the same strategy as in proposition \ref{N:th:prop}, we obtain %set 
%\begin{align*}
%I_i:= \int_0^t dt_1 \cdots \int_0^{t_{Q-1}}dt_Q \chi(t_{i}<t_{i-1} - T_L+C\varepsilon^{\gamma})
%\end{align*}
%so that 
%\begin{align}
%\EE[\chi_{circ}]=e^{-2t \mu_{\varepsilon}\varepsilon^{\gamma} | v|}\sum_{Q\geq 1}(2|v|\mu_{\varepsilon}\varepsilon^{\gamma})^Q \sum_{i=1}^{Q}I_i\,.
%\end{align}
%We observe that for $ I_i$ the following holds
%\begin{align}
%I_i &=\int_0^t dt_1 \cdots \int_0^{t_{i-1}}dt_i \chi(t_{i}<t_{i-1} - T_L+C\varepsilon^{\gamma})\frac{t_i^{Q-i}}{(Q-i)!} \nonumber \\
%&= \int_0^t dt_1 \cdots \int_0^{t_{i-2}}dt_{i-1} \frac{(t_{i-1}-T_L + C\varepsilon^{\gamma})^{Q-(i-1)}}{(Q-(i-1))!}.
%\end{align}
%Moreover
%\begin{align}
%I_{i+1} &=\int_0^t dt_1 \cdots \int_0^{t_{i-1}}dt_{i} \frac{(t_{i} - T_L+C\varepsilon^{\gamma})^{Q-i}}{(Q-i)!}  \nonumber \\
%&= \int_0^t dt_1 \cdots \int_0^{t_{i-2}}dt_{i-1} \frac{(t_{i-1}-T_L + C\varepsilon^{\gamma})^{Q-(i-1)}-(-T_L+C\varepsilon)^{Q-(i-1)}}{(Q-(i-1))!} \nonumber \\
%&= I_i -\frac{(-T_L + C\varepsilon^{\gamma})^{Q-i+1}}{(Q-i+1)!}\frac{t^{i-1}}{(i-1)!} \nonumber \\
%\end{align}
%By defining $ f_i:=\frac{(-T_L + C\varepsilon^{\gamma})^{Q-i}}{(Q-i)!}\frac{t^{i}}{(i)!}$ we have $I_{i+1}=I_i - f_{i-1}$ so that we get
\begin{align}
\EE_{x,v}[\chi_{circ}] & \leq  \, c_{\gamma}(\ep) + e^{-2t \mu_{\varepsilon}\varepsilon^{\gamma} }\sum_{Q\geq 1}(2\mu_{\varepsilon}\varepsilon^{\gamma})^Q \bigg[ \frac{2(t-T_L + C\varepsilon^{\gamma})^Q}{(Q-1)!} + \frac{(T_L-C\varepsilon^{\gamma})(t-T_L+C\varepsilon^{\gamma})^{Q-1}}{(Q-1)!}\bigg] \nonumber \\
%= & \,2e^{-2t \mu_{\varepsilon}\varepsilon^{\gamma} | v|}(2 \mu_{\varepsilon}\varepsilon^{\gamma})(t-T_L + C\varepsilon^{\gamma})\sum_{Q\geq 1}(2 \mu_{\varepsilon}\varepsilon^{\gamma})^{Q-1} \frac{(t-T_L + C\varepsilon^{\gamma})^{Q-1}}{(Q-1)!} \nonumber \\
%& + e^{-2t \mu_{\varepsilon}\varepsilon^{\gamma} | v|}(T_L-C\varepsilon^{\gamma})(2|v|\mu_{\varepsilon}\varepsilon^{\gamma})\sum_{Q\geq 1} (2|v|\mu_{\varepsilon}\varepsilon^{\gamma})^{Q-1}\frac{(t-T_L+C\varepsilon^{\gamma})^{Q-1}}{(Q-1)!} \nonumber \\
%= & \,2e^{-2(T_L-C\varepsilon^{\gamma}) \mu_{\varepsilon}\varepsilon ^{\gamma}| v|}(2|v|\mu_{\varepsilon}\varepsilon^{\gamma})(t-T_L + C\varepsilon^{\gamma}) + e^{-2(T_L-C\varepsilon^{\gamma}) \mu_{\varepsilon}\varepsilon^{\gamma} | v|}(T_L-C\varepsilon^{\gamma})(2|v|\mu_{\varepsilon}\varepsilon^{\gamma}) \nonumber \\ 
& \, =\,c_{\gamma}(\ep) + 2e^{-2(T_L-C\varepsilon^{\gamma}) \mu \varepsilon^{\gamma-1} }(2 \mu \varepsilon^{\gamma-1})(t-T_L + C\varepsilon^{\gamma}) + e^{-2(T_L-C\varepsilon^{\gamma}) \mu \varepsilon^{\gamma-1}}(2 \mu\varepsilon^{\gamma-1}) (T_L-C\varepsilon^{\gamma})
\end{align}
which vanishes as $ \varepsilon \to 0$ and $\gamma\in (0,1)$. 

To control the recollision event we can follow the strategy used in Section \ref{N:pat} and in \cite{DP}, Proposition 3.1. More precisely, as in Section \ref{N:pat}, we introduce $\chi_{arc}^{(M)}$ such that $\chi_{arc}^{(M)}=1$ if there exists an entrance time $t_i$ such that $|t_{i}-t_{i+1}|\geq {T_L}/{M}-\tau\geq {T_L}/{M}-C\varepsilon^{\gamma}$ for some $i=0,\dots Q-1$ where $ M$ is a finite constant and $ M >1$.
One can easily see that $\EE_{x,v}[\chi_{arc}^{(M)}]$ vanishes as $ \varepsilon \to 0$ when $\gamma< 1$.  
Furthermore, 
$$\EE_{x,v}[\chi_{rec}]\leq \EE_{x,v}[\big(1-\chi_{arc}^{(M)}\big)\chi_{rec}]+\EE_{x,v}[\chi_{arc}^{(M)}]$$
but $\EE_{x,v}[\chi_{arc}^{(M)}]$ is vanishing in the limit $ \varepsilon \to 0$ as we have seen before. Therefore, we can focus on $\EE_{x,v}[\big(1-\chi_{arc}^{(M)}\big)\chi_{rec}]$.
We now distinguish the collisions as
\begin{equation}\label{eq:splitchirecLR}
\begin{split}
\big(1-\chi_{arc}^{(M)}\big)\chi_{rec}&\leq\,\big(1-\chi_{arc}^{(M)}\big) \sum_{i=1}^{Q}\sum_{j>1}\chi^{i,j}_{rec}\,\chi\big(\sin\alpha_{jk}\leq \frac{ \ep^{\delta}}{4},\;\forall k=i,\dots,j-1\big)\\
&\;+\big(1-\chi_{arc}^{(M)}\big) \sum_{i=1}^{Q}\sum_{j>1}\chi^{i,j}_{rec}\,\chi\big(\sin\alpha_{jk}\geq \frac{ \ep^{\delta}}{4},\; \text{ for some } k=i,\dots,j-1\big)
\end{split}
\end{equation}
where  $\chi_{rec}^{i,j}=1$ if and only if $b_i$ (constructed via the sequence $t_1,\rho_1,\dots,t_i,\rho_i$) is recollided in the time interval $(t_j,t_{j-1})$ and $\alpha_{jk}$ (with $ i<k<j$) is the absolute value of the sum of the angles between the outgoing velocity $v_k^{+}$ from the $ k$-th obstacle and the recolliding velocity ${v}_j^{-}$, i.e. 
$$
\alpha_{jk}= |\varphi_k| + \sum_{r=k+1}^{j} |\theta_r + \varphi_r| 
$$
where $ \theta_r$ is the deflection angle due to the $ r$-th scatterer and $ \varphi_r$ is the angle covered in the time interval $(t_{r+1},t^+_{r}) $ outside the scatterers and $ \varphi_j$ is the angle covered between the $ j$-th obstacle before recolliding with $ b_i$. 
Here $\delta>0$ is a suitable parameter that we will fix later.
Note that, thanks to $1-\chi_{arc}^{(M)}$, we have $|\varphi_r|\leq 2 \pi /{M}$ for any $ r$.
 
As noticed in \cite{DP}, the constraint $\sin\alpha_{jk}\leq \ep^{\delta}/ 4$ implies that 
$|\theta_r+\varphi_r-\pi|<\ep^{\delta}$ for some $r=i,\dots,j-1$, thus we get
\begin{align}\label{eq:ricLR1}
& \EE_{x,v}\big[\big(1-\chi_{arc}^{(M)}\big)\sum_{i=1}^{Q}\sum_{j>1}\chi^{ij}_{rec}\,\chi\big(\sin\alpha_{jk}\leq \frac{ \ep^{\delta}}{4},\;\forall k=i,\dots,j-1\big)\big] \nonumber \\ 
& \leq e^{-2t \mu_{\ep}\ep^{\gamma}}\sum_{Q \geq 0} \mu_{\ep}^Q \sum_{i=1}^Q \sum_{j=i+2}^Q \sum_{k=i}^{j-1} \int_0^t dt_1 \int_0^{t_1} dt_2 \cdots \int_0^{t_{Q-1}} dt_Q \nonumber \\
& \int_{-\ep^{\gamma}}^{\ep^{\gamma}}d\rho_1 \int_{-\ep^{\gamma}}^{\ep^{\gamma}}d\rho_2 \cdots \int_{-\ep^{\gamma}}^{\ep^{\gamma}}d\rho_Q \chi({|\theta_k+\varphi_k-\pi|<\ep^{\delta}}) .
%& \leq\, C e^{-2t \mu_{\ep}\ep^{\gamma}}\sum_{Q \geq 1} \frac{(2t \mu_{\ep}\ep^{\gamma})^Q}{Q!}Q^3\ep^{1+\delta -\gamma} \leq C \ep^{2 \gamma -2 +\delta} .
\end{align}
Now we note that for $ M$ big enough
\begin{align}\label{ineq:rec}
\int_{-\ep^{\gamma}}^{\ep^{\gamma}} d\rho_k\, \chi({|\theta_k+\varphi_k-\pi|<\ep^{\delta}})= &\int_{-\pi}^{\pi} d \theta_k\check\Gamma ^{(B)}_{\ep,\gamma}(\theta_k)\chi({|\theta_k+\varphi_k-\pi|<\ep^{\delta}}) \nonumber \\
= & \, \ep \int_{-\pi}^{\pi} d \theta_k \Gamma ^{(B)}_{\ep,\gamma}(\theta_k)\chi({|\theta_k+\varphi_k-\pi|<\ep^{\delta}}) \nonumber \\
= & \, \ep \int_{\pi\left(1-\frac{2}{M}\right)-\ep^{\delta}}^{\pi\left(1-\frac{2}{M}\right)+\ep^{\delta}} d \theta_k \Gamma ^{(B)}_{\ep,\gamma}(\theta_k) \nonumber \\
\leq & \, C\ep^{1+\delta}
\end{align}
where $ \check\Gamma ^{(B)}_{\ep,\gamma}(\theta_k)$ is the differential cross section associated to the rescaled potential $ \check{\psi}_{\ep}$, while $ \Gamma ^{(B)}_{\ep,\gamma}(\theta_k)$ is differential cross section associated to the unrescaled potential $ {\psi}_{\ep}$. In the last line of \eqref{ineq:rec} we used that $\Gamma ^{(B)}_{\ep,\gamma}(\theta)$ is uniformly bounded in $\ep$ when $ \theta$ is far from 0, as shown in Appendix \ref{app:CS}.

Then from \eqref{eq:ricLR1} one gets
\begin{align}
& \EE_{x,v}\big[\big(1-\chi_{arc}^{(M)}\big)\sum_{i=1}^{Q}\sum_{j>1}\chi^{i,j}_{rec}\,\chi\big(\sin\alpha_{jk}\leq \frac{ \ep^{\delta}}{4},\; \forall k=i,\dots,j-1\big)\big] \nonumber \\ 
& \leq\, C e^{-2t \mu_{\ep}\ep^{\gamma}}\sum_{Q \geq 1} \frac{(2t \mu_{\ep}\ep^{\gamma})^Q}{Q!}Q^3\ep^{1+\delta -\gamma} \leq C \ep^{2 \gamma -2 +\delta} \,.
\end{align}

For what concerns the second term in \eqref{eq:splitchirecLR} we note that, once we fix all the variables $\{t_{\ell}\}_{\ell=1}^Q$ and $\{\rho_{\ell}\}_{\ell=1}^Q$ except $t_k$, by using the same geometrical argument as the one illustrated in Figure 5 of \cite{DP}, one gets that the integral over $t_{k+1}$ is bounded by 
\begin{align}
\frac{2 \ep^{\gamma}}{\sin\alpha_{jk}}\leq 8 \, \ep^{\gamma-\delta}\,.
\end{align}
It follows that 
\begin{align}\label{eq:ricLR2}
& \EE_{x,v}\big[\big(1-\chi_{arc}\big) \sum_{i=1}^{Q}\sum_{j>1}\chi^{ij}_{rec}\,\chi\big(\sin\alpha_{jk}\geq \frac{ \ep^{\delta}}{4},\; \text{ for some } k=i,\dots,j-1\big)\big] \nonumber \\
& \leq \, e^{-2t \mu_{\ep}\ep^{\gamma}} \sum_{Q \geq 0} \frac{(2  \mu_{\ep}\ep^{\gamma})^Q}{(Q-1)!} Q^3 t^{Q-1} 8 \,\ep^{\gamma - \delta} \nonumber \\
& \leq C T \ep^{5\gamma-\delta-4}.
\end{align}
We can now optimize the parameter $\delta$ setting $\delta=\frac{3\gamma-2}{2}$. From equations \eqref{eq:ricLR1} and \eqref{eq:ricLR2} we finally end up with
\begin{equation}
\EE_{x,v}\big[\chi_{rec}\big]\leq C \ep^{\frac{7\gamma-6}{2}} 
\end{equation}
which tends to 0 as $\ep \to 0 $ if $ \gamma \in (6/7,1)$.
\end{proof}

\bigskip

\textbf{Acknowledgments}\\
We thank Mario Pulvirenti for the proposed problem, Jens Marklof and Jani Lukkarinen for useful discussions on the topic. Finally we thank the anonymous referees for helpful suggestions to improve the manuscript. The research of M.\ Marcozzi and A.\ Nota has been supported by the 
Academy of Finland.
\vspace{10mm}
 
\appendix 
\section{The collision time}\label{app:CTime}
We want to estimate the time spent by a the test particle in the interaction disk associated to the central potential of finite range with a uniform magnetic field perpendicular to the plane. Let be $ \ep^{\alpha}\phi(r)$ with $ \alpha \in [0,1/2)$ the central potential and $ \ep B$ the modulus of the magnetic field. \\
The Lagrangian of the system is
\begin{equation*}
\mathcal{L}(r,\dot{r},\theta,\dot{\theta})=\frac{1}{2}\dot{r}^2+\frac{1}{2}r^2\dot{\theta}^2-\ep^{\alpha}\phi( r)+\varepsilon\frac{B}{2}r^2\dot{\theta}
\end{equation*}
%\begin{equation*}
%\frac{\,d}{\,dt}(\frac{1}{2}\dot{r}^2+\frac{1}{2}r^2\dot{\theta}^2+V(r))=0.     %ENERGIA 
%\end{equation*}
We observe that the energy of the system is conserved. Moreover the Lagrangian does not depend on the variable $\theta$, so we obtain the conservation of the conjugate momentum
\begin{equation*}
\frac{\,d}{\,dt}(r^2\dot{\theta}+\varepsilon\frac{B}{2}r^2)=0.
\end{equation*}
Therefore we obtain the following conserved quantities 
\begin{equation*}
\dot{r}^2+r^2\dot{\theta}^2+2\ep^{\alpha}\phi=2E,
\end{equation*}
\begin{equation*}
r^2\dot{\theta}+\varepsilon\frac{B}{2}r^2=M,
\end{equation*}
and the equations of motion are
\begin{equation*}
\left \{
\begin{array}{l}\vspace{2mm}
\dot{r}:=\frac{\,dr}{\,dt}=\sqrt{2(E-\ep^{\alpha}\phi)-\frac{M^2}{r^2}-\frac{\varepsilon^2 B^2}{4}r^2+\varepsilon MB}\\ 
\dot{\theta}:=\frac{\,d\theta}{\,dt}=\frac{M}{r^2}-\varepsilon\frac{B}{2}.\\
\end{array}
\right.
\end{equation*}
This implies that
\begin{align}\label{integral_equations}
\frac{dt}{dr}&=\bigg[{2(E-\ep^{\alpha}\phi(r))-\frac{M^2}{r^2}-\frac{\varepsilon^2 B^2}{4}r^2+\varepsilon MB}\bigg]^{-1/2} \nonumber \\
\frac{d\theta}{dr}&= \frac{\frac{M}{r^2}-\frac{\varepsilon B}{2}}{\sqrt{2(E-\ep^{\alpha}\phi(r))-\frac{M^2}{r^2}-\frac{\varepsilon^2 B^2}{4}r^2+\varepsilon MB}} .
\end{align}
We now define the effective potential
\begin{equation*}
\phi_{eff}( r)=\ep^{\alpha}\phi( r)+\frac{M^2}{2r^2}+\frac{\varepsilon^2 B^2}{8}r^2-\frac{\varepsilon MB}{2}.
\end{equation*}
and we assume that the potential has a short range, i.e. $ \phi(r):[0,1] \to \R$ and $ \phi$ is continuous on $ [0,1]$ and differentiable on $(0,1) $.

Take the modulus of the initial velocity to be $ | v|=1$. When the particle hits the obstacle of radius $r=1$ the conserved quantities read
\begin{equation*}
\left \{
\begin{array}{l}
E=\frac{1}{2}\\
M=\rho+\varepsilon\frac{B}{2}\\
\end{array}
\right.
\end{equation*}
being $\rho \in [0,1]$ is the impact parameter. The effective potential is
\begin{equation*}
\phi_{eff}( r)=\ep^{\alpha}\phi( r)+\left(\rho+\varepsilon\frac{B}{2}\right)^2\frac{1}{2r^2}+\frac{\varepsilon^2 B^2}{8}r^2-\left(\rho+\varepsilon \frac{B}{2}\right)\frac{\varepsilon B}{2}=\ep^{\alpha}\phi( r) + \frac{1}{2}\bigg[ \frac{\rho}{r}-\frac{\varepsilon B}{2}\bigg(r-\frac{1}{r} \bigg)  \bigg]^2.
\end{equation*}
By integrating the equations of motion we obtain the collision time, namely the time spent inside the obstacle:
\begin{equation}\label{coll_time}
\tau=2\int_{r_{min}}^{1}dr \bigg[1-2\ep^{\alpha}\phi( r)-\bigg( \frac{\rho}{r}-\frac{\varepsilon B}{2}\bigg(r-\frac{1}{r} \bigg)  \bigg)^2 \bigg]^{-1/2}
\end{equation}
where $r_{min}$ (the minimum distance from the centre) %$r_{min}$ to the radius of the obstacle that we are considering equal to $1$. 
 is the unique zero of the radicand, i.e.
\begin{equation*}
1=2\phi_{eff}(r_{min} ),
\end{equation*}
so we can reformulate (\ref{coll_time}) as
\begin{equation*}
\tau=\sqrt{2}\int_{r_{min}}^{1}\frac{dr}{\sqrt{2(\phi_{eff}(r_{min} )-\phi_{eff}(r ))}}
\end{equation*}
where $2\phi_{eff}(r )\leq 1$. 
The derivative of the effective potential reads 
\begin{equation*}
\phi'_{eff}( r)=\ep^{\alpha}\phi'( r)-\frac{\rho^2}{r^3}-\frac{\varepsilon^2 B^2}{4r^3}-\frac{\varepsilon B\rho}{r^3}+\frac{\varepsilon^2 B^2}{4}r.
\end{equation*}
By the mean value theorem we get 
\begin{equation*}
\vert \phi_{eff}(r_{min} )-\phi_{eff}(r )\vert =\vert r-r_{min}\vert \vert -\phi'_{eff}(r^{*}) \vert \geq \vert r-r_{min}\vert \bigg(\inf_{r \in (r_{min},1)}\vert -\phi'_{eff}(r) \vert \bigg) , \qquad r^{*}\in (r_{min},r)
\end{equation*}
and then
\begin{equation*}
\tau \leq\frac{\sqrt{2}}{(\inf_{r \in (r_{min},1)}\vert -\phi'_{eff}( r)\vert)^{1/2}}\int_{r_{min}}^{1}{\frac{1}{\sqrt{r-r_{min}}}\,dr}.
\end{equation*}
Since $\phi'_{eff}( r) < 0$ for $ r \in [0,1)$, then $ \inf_{r \in (r_{min},1)}\vert -\phi'_{eff}( r)\vert =: \kappa >0$ and it follows easily that
\begin{equation*}
\tau \leq 2\bigg(\frac{2(1-r_{min})}{\kappa}  \bigg)^{1/2} \leq 2\bigg(\frac{2}{\kappa}  \bigg)^{1/2}.
\end{equation*}                                        
For the corresponding rescaled problem the effective potential reads
\begin{align}
\phi_{eff}^{(\varepsilon)}(r)= \varepsilon^{\alpha}\phi(r/\varepsilon)+\frac{1}{2}\bigg[\frac{\varepsilon\rho}{r}+\frac{B}{2}\bigg(\frac{\varepsilon^2}{r}-r  \bigg)   \bigg]^2
\end{align} 
with $ \rho \in [0,1)$. In this way one gets $
- \phi_{eff}^{(\varepsilon)'}(r)=\frac{1}{\varepsilon}F(r/\varepsilon, \varepsilon) $ where
\begin{align}
F(y, \varepsilon)=-\varepsilon^{\alpha}\phi'(y)+(\rho +  B)\rho y^{-3}+\frac{\varepsilon^2 B^2}{4 y^3}(1-y^4) 
\end{align}
which is positive for $ y < 1$ and uniformly in $ \varepsilon$. 
The same argument as before yields the claimed estimate: 
\begin{align}
\tau_{\varepsilon} \leq \frac{(2 \varepsilon)^{1/2}}{(\inf_{y \in (y_0,\varepsilon)} F( y,\varepsilon))^{1/2}}\int_{y_0}^{\varepsilon}{\frac{dy}{\sqrt{y-y_{0}}}} \leq C \varepsilon
\end{align}
where $ y_0=y_0(\varepsilon)$ is such that $$ 1=2 \varepsilon^{\alpha}\phi(y_0)+ \bigg(\frac{\rho}{y_0}-\frac{\varepsilon^2 By_0}{2} + \frac{B}{2y_0}  \bigg)\,.$$ 
We consider now the long range unrescaled potential defined in Assumption $B1)$, i.e. $\psi_{\ep}( r)=r^{-s}$ for $ r < \varepsilon^{\gamma-1}$ and $\psi_{\ep} (r)=\varepsilon^{-s(\gamma-1)}$ for $ r \geq \varepsilon^{\gamma-1}$.                                   The same argument as for the short range case leads to the following estimate for the collision time after rescaling:
\begin{align}
\tau \leq \frac{(2 \varepsilon)^{1/2}}{(\inf_{y \in (y_0,\varepsilon^{\gamma})} \tilde{F}( y,\varepsilon))^{1/2}}\int_{y_0}^{\varepsilon^{\gamma}}{\frac{dy}{\sqrt{y-y_{0}}}} \leq C \varepsilon^{\gamma}
\end{align} 
with $$\tilde{F}(y,\varepsilon):=-\psi_{\ep}'(y)+\varepsilon^{2(\gamma-1)}\rho^2 y^{-3} + B^2 \varepsilon^{3 \gamma-2}\rho y^{-3} + B^2 y^{-3}\varepsilon^{4 \gamma-2}(1-y^4 \varepsilon^{4(1-\gamma)})/4 > 0$$
for $ y \in (y_0, \varepsilon^{\gamma})$ where $y_0=y_0(\varepsilon) $ is such that $1= 2\psi_{\ep}(y_0)+ \bigg[ \varepsilon^{\gamma-1}\rho y_0^{-1}-\frac{B}{2}(\varepsilon^{2\gamma-1}y_0^{-1}-\varepsilon y_0) \bigg]^2 $.

\section{Cross section} \label{app:CS}
\begin{proposition}\label{prop:thetaCR}
Consider the scattering angle $\theta(\rho, \ep)$ of a particle
with impact parameter $\rho$ due to a uniform magnetic field perpendicular to the plane with modulus $ \ep B$ and due to a radial potential $ \ep^{\alpha}\phi$, where $\alpha > 0$ and $ \phi$ satisfies assumptions A1, A2, A3. Consider also the scattering angle $ \tilde\theta(\rho, \ep)$ associated to the same radial potential as before, but without any magnetic field. Then, for $ \ep$ small enough one gets 
\begin{align}
\theta(\rho, \ep) = \tilde\theta(\rho, \ep) + O(\ep)\,.
\end{align}
\end{proposition}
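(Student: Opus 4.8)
The plan is to argue exactly as in the proof of Lemma \ref{N:theta}: I would compare the motion of the test particle under both the weak Lorentz force $\ep B\,v^{\perp}$ and the potential $\ep^{\alpha}\phi$ with its motion under the potential alone, show that the two outgoing velocities differ by $O(\ep)$, and then read off the bound on $|\theta-\tilde\theta|$ from the fact that both outgoing velocities are rotations of the common incoming velocity $v$ through the respective scattering angles.

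First I would recall from Appendix \ref{app:CTime} that, for an obstacle of radius $1$, central potential $\ep^{\alpha}\phi$ and magnetic field of modulus $\ep B$, the collision time $\tau$ — the time the particle spends inside the disk of radius $1$ — is bounded by a constant independent of $\ep$, and likewise for the field-free collision time $\tilde\tau$. Let $(x(t),v(t))$ solve $\dot x=v,\ \dot v=\ep B\,v^{\perp}-\ep^{\alpha}\nabla\phi(|x-c|)$, and let $(\underline x(t),\underline v(t))$ solve $\dot{\underline x}=\underline v,\ \dot{\underline v}=-\ep^{\alpha}\nabla\phi(|\underline x-c|)$ with the same initial datum $(x,v)$, $|v|=1$; conservation of energy gives $|v(t)|,|\underline v(t)|\le 1$. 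Subtracting the two equations, integrating, and using $\phi\in C^{2}$ to bound $|\nabla\phi(|x-c|)-\nabla\phi(|\underline x-c|)|\le C_{2}|x-\underline x|$, one gets
\begin{equation*}
|v(t)-\underline v(t)|\le \ep B\,t+\ep^{\alpha}C_{2}\,t\int_{0}^{t}|v(r)-\underline v(r)|\,dr ,
\end{equation*}
and Gr\"onwall's inequality on $[0,\max(\tau,\tilde\tau)]$ yields $|v(t)-\underline v(t)|\le C_{1}\,\ep\,e^{C_{3}\ep^{\alpha}}\le C\,\ep$ for $\alpha>0$ and $\ep$ small, in complete analogy with \eqref{eq:asymvel}.

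Since $\supp\phi\subset[0,1]$, as soon as the particle has left the disk of radius $1$ both velocities have unit modulus and, by definition of the scattering angles, $v'=\mathcal{R}(\theta)v$ and $\underline v'=\mathcal{R}(\tilde\theta)v$. From $\tilde\theta=O(\ep^{\alpha})$ (by \cite{DR}) and $|v'-v|\le|v'-\underline v'|+|\underline v'-v|\le C\ep+2|\sin\tfrac{\tilde\theta}{2}|$ we first get $\theta=O(\ep^{\alpha})$, so that $\theta-\tilde\theta$ is $o(1)$; then
\begin{equation*}
2\Big|\sin\tfrac{\theta-\tilde\theta}{2}\Big|=\big|\mathcal{R}(\theta)v-\mathcal{R}(\tilde\theta)v\big|=|v'-\underline v'|\le C\,\ep
\end{equation*}
forces $|\theta-\tilde\theta|\le C'\ep$, i.e. $\theta(\rho,\ep)=\tilde\theta(\rho,\ep)+O(\ep)$, uniformly in $\rho$.

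The main obstacle is the uniform-in-$\ep$ control of the collision time entering the Gr\"onwall step: one must check that switching on the weak magnetic field does not lengthen the sojourn time inside the obstacle — otherwise the leading term $\ep B\,t$ above would fail to be $O(\ep)$ — and this is precisely the effective-potential monotonicity argument of Appendix \ref{app:CTime}. Everything else is a routine perturbation estimate.
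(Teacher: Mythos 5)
Your proof is correct, but it takes a genuinely different route from the one in the paper. The paper establishes Proposition~\ref{prop:thetaCR} by writing the exact integral formulas for $\tilde\theta(\rho,\ep)$ and $\theta(\rho,\ep)$ in terms of the effective potential and the conserved (canonical) angular momentum---the familiar $\int du/\sqrt{\cdots}$ expressions in $u=\rho/r$, with the square root modified by the term $\tfrac{\ep B}{2\rho}(1-\rho^2/u^2)$---and then asserting that an expansion of the integrand in $\ep$ yields the $O(\ep)$ correction. You instead replay the trajectory-comparison argument that the paper already used in the proof of Lemma~\ref{N:theta}: subtract the two ODEs, use the global Lipschitz bound on $\nabla\phi$, apply Gr\"onwall on the bounded collision-time interval to get $|v(t)-\underline v(t)|\le C\ep$, and then translate this velocity estimate into an angle estimate. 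The new ingredient you supply is to apply the bound to $|v'-\underline v'|=2\bigl|\sin\tfrac{\theta-\tilde\theta}{2}\bigr|$ rather than to $|v'-v|$ via the triangle inequality as in Lemma~\ref{N:theta}; that upgrade is exactly what turns the weaker conclusion $\theta=O(\ep^{\alpha})$ of the lemma into the sharper $\theta-\tilde\theta=O(\ep)$ claimed here, using that both angles are already known to be $o(1)$. You correctly single out the uniform bound on the collision time from Appendix~\ref{app:CTime} as the crucial input making Gr\"onwall usable. The two approaches are complementary: yours is more elementary and bypasses the expansion of an integrand containing $\ep B/(2\rho)$ (which requires some care when $\rho$ is small), whereas the paper's explicit scattering-angle formulas are in any case needed for the cross-section estimates of Proposition~\ref{RK:c-s}, so the authors had reason to reuse them here. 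One small point you gloss over: $v'=v(\tau)$ and $\underline v'=\underline v(\tilde\tau)$ are evaluated at the respective exit times, not at a common time, so to compare them you should either apply Gr\"onwall up to $\max(\tau,\tilde\tau)$ and note that the no-field velocity is frozen while the with-field velocity only rotates by $O(\ep)$ after exiting, or otherwise account for $|\tau-\tilde\tau|$; this is routine but worth making explicit.
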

\begin{proof}
Following \cite{DR, L} we can write the exact formula for both of the scattering angles:
\begin{align}
\tilde\theta(\rho, \ep) = \pi - 2\arcsin \rho- 2 \int^{\tilde{u}_{max}(\rho, \ep)}_{\rho} \frac{du}{\sqrt{1- u^2 - 2\ep^{\alpha}\phi(\rho u^{-1})}}
\end{align}
where $ \tilde{u}_{max}(\rho, \ep)$ is the solution of the equation $ \tilde{u}^2_{max}+ 2\ep^{\alpha}\phi(\rho \tilde{u}_{max}^{-1})=1$, while
\begin{align}
\theta(\rho, \ep) = \pi - 2\arcsin \rho -2 \int^{u_{max}(\rho, \ep)}_{\rho}du \frac{1 + \frac{\ep B}{2 \rho}(1 - \frac{\rho^2}{u^2})}{\sqrt{1- 2\ep^{\alpha}\phi(\rho u^{-1})-u^2[1+\frac{\ep B}{2 \rho}(1-\frac{\rho^2}{u^2})]^2}} %\sqrt{1- u^2 - 2\ep^{\alpha}\phi(\rho u^{-1})-u^2[1+\frac{\ep B}{2 \rho}(1-\frac{\rho^2}{u^2})]^2}
\end{align}
where $ {u}_{max}(\rho, \ep)$ is the solution of the equation $ 2\ep^{\alpha}\phi(\rho u_{max}^{-1})+u_{max}^2[1+\frac{\ep B}{2 \rho}(1-\frac{\rho^2}{u_{max}^2})]^2=1$. %$u_{max}^2 + 2\ep^{\alpha}\phi(\rho u_{max}^{-1})-u_{max}^2[1+\frac{\ep B}{2 \rho}(1-\frac{\rho^2}{u_{max}^2})]^2=1$. 

Hence, an expansion of $ \theta(\rho, \ep)$ for $ \ep$ small enough yields the claimed asymptotic formula. 

\end{proof}

\begin{proposition}\label{RK:c-s}
Let $ \tilde\theta$ be the scattering angle associated to the long range potential $ \Psi(r)=r^{-s}$ with $ s>2$, $ \theta_{\ep, \gamma}$ the scattering angle due to a radial potential $ \psi_{\ep}$ defined in Assumption B1) and $ \theta_{\ep, \gamma}^{(B)}$ the scattering angle due to $ \psi_{\ep}$ and to a uniform, constant magnetic field perpendicular to the plane with modulus $ \ep B$. Then one has
\begin{itemize}
\item[a)] 
$ \theta_{\ep, \gamma}^{(B)} \to \tilde\theta$ as $\ep \to 0$.
\item[b)] 
$ \Gamma^{(B)}_{\ep, \gamma}(\theta) \to \Gamma(\theta)$ as $ \ep \to 0$, where $ \Gamma^{(B)}_{\ep, \gamma}(\theta)$ is the differential cross section associated to the radial potential $ \psi_{\ep}$ and the magnetic field, while $\Gamma(\theta)$ is the one associated to the radial potential $ \Psi$. 
\item[c)] $ \Gamma^{(B)}_{\ep, \gamma}(\theta) \leq C \theta^{-1-1/s}$ uniformly in $ \ep,B$.
\end{itemize} 
\end{proposition}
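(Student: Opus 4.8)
The plan is to reduce parts (a) and (b) to the corresponding statements for the truncated potential \emph{without} magnetic field, which are established in \cite{DP}, by showing that the magnetic force enters the scattering problem only through lower-order corrections, controlled uniformly in the impact parameter because $\gamma>6/7$. Arguing as in Proposition \ref{prop:thetaCR} and Appendix \ref{app:CTime}, one first writes the exact turning-point representation of the deflection angle: by scale invariance we may pass to the unrescaled potential, for which the magnetic field acts only through the effective potential, so that $\theta^{(B)}_{\ep,\gamma}(\rho)$ equals $\pi-2\arcsin\rho$ minus twice the integral (from the distance of closest approach to the entrance point) of the integrand of Proposition \ref{prop:thetaCR} with $\ep^\alpha\phi$ replaced by $\psi_\ep$. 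The terms carrying $B$ are of the form $\ep B$ times expressions that are bounded relative to the dominant $\psi_\ep'$ contribution (cf.\ the function $\tilde F$ in Appendix \ref{app:CTime}), and hence are negligible in the limit for $\gamma\in(6/7,1)$.

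For (a) I would split the range of impact parameters. Where the deflection is appreciable, the distance of closest approach stays of order one, away from the truncation radius $\ep^{\gamma-1}\to\infty$, so there $\psi_\ep$ coincides with $\Psi(r)=r^{-s}$; the integrand then converges pointwise to that of $\tilde\theta$, and a dominated-convergence argument (the turning-point singularity being integrable uniformly in $\ep$) together with the vanishing magnetic correction gives $\theta^{(B)}_{\ep,\gamma}(\rho)\to\tilde\theta(\rho)$, while for the remaining grazing impact parameters the deflection is already $o(1)$ uniformly. For (b), one first notes that $\rho\mapsto\theta^{(B)}_{\ep,\gamma}(\rho)$ is strictly monotone for $\ep$ small — inherited from the strict monotonicity of the deflection for the repulsive $\Psi$, the truncation and the small magnetic term being unable to create a critical point — so that $\Gamma^{(B)}_{\ep,\gamma}=|d\rho/d\theta^{(B)}_{\ep,\gamma}|$ is a genuine single-valued function; differentiating the turning-point integral (or invoking the implicit function theorem) upgrades the convergence in (a) to $C^1$ convergence on compact subsets of $(0,\pi)$, whence $\Gamma^{(B)}_{\ep,\gamma}(\theta)\to\Gamma(\theta)$ for a.e.\ $\theta$.

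For (c) the starting point is the grazing asymptotics of the inverse power law, $\tilde\theta(\rho)\sim c_s\,\rho^{-s}$ as $\rho\to\infty$, which gives $\rho(\theta)\sim(c_s/\theta)^{1/s}$ and $\Gamma(\theta)\sim c\,\theta^{-1-1/s}$ as $\theta\to 0$; for $\theta$ bounded away from $0$ the bound is immediate since $\rho$ then stays in a compact set while $d\theta^{(B)}_{\ep,\gamma}/d\rho$ is bounded below uniformly in $\ep,B$. The real work — and the point I expect to be the main obstacle — is to show that the law $\theta^{-1-1/s}$ persists, \emph{with an $\ep$- and $B$-independent constant}, down to the smallest relevant angles, i.e.\ for impact parameters running up to the truncation radius $\ep^{\gamma-1}$, where the $r^{-s}$ tail, the abrupt flattening of $\psi_\ep$ and the magnetic force all compete. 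This is carried out as in \cite{DP}: one expands the deflection integral in the grazing regime, extracts the leading $\rho^{-s}$ term, and bounds the corrections (the truncation only decreases the deflection, and the magnetic term changes it by $O(\ep B)$ and is lower order there), obtaining $|d\rho/d\theta^{(B)}_{\ep,\gamma}|\le C\theta^{-1-1/s}$ uniformly; the only new term relative to \cite{DP} is the magnetic one, which is absorbed. Everything else is a perturbation of the computation already in \cite{DP}.
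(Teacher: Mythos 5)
Your plan is essentially the paper's proof: write the turning-point integral for the scattering angle and its derivative with and without the magnetic field, observe that the field enters only as an $O(\ep B)$ perturbation of each term, and reduce all three claims to the corresponding statements for the truncated potential without field from \cite{DP} (their Proposition A.1). The only stylistic difference is that the paper parametrizes by the conserved angular momentum $M=\rho+\ep A^2 B/2$ and displays the full expression for $d\theta^{(B)}_{\ep,\gamma}/dM$ so as to isolate the $\ep$-bounded remainder in part (c), whereas you argue more qualitatively (dominated convergence for (a), monotonicity and $C^1$ convergence on compacts for (b)); the substance is the same.
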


\begin{proof} a) Let us now consider the truncated potential $ \tilde{\Psi}=r^{-s}-A^{-s}$ with $ s>2$ for $ r \leq A$ and $ \tilde{\Psi}=0$ for $ r> A$ with $ A=\varepsilon^{\gamma-1}$ and $ \gamma \in (0,1)$. Take the modulus of the initial velocity of the light particle to be $ |v|=1$.

We denote by $\rho$ the impact parameter (with $0\leq\rho\leq A$) while the scattering angle (that is the angle between the ingoing and the outgoing relative velocities) is
\begin{equation}\label{theta_rho}
{\theta}_{\ep,\gamma}(\rho)=2\int_{\arcsin(\rho/A)}^{\pi /2}\bigg(1- \frac{\sin \beta}{v + s v^{s-1}}   \bigg)d\beta.
\end{equation}
where $ v=v(\beta)$ such that $v^2 + 2((v/\rho)^s - A^{-s}) = \sin^2 \beta $ and $ v=\rho / r$ (see Appendix in \cite{DP}). 
Following \cite{DP,L}, we can write the formula for the scattering angle associated to the potential $ \tilde{\Psi}$ and with the uniform magnetic field. Due to its invariance under rescaling, the scattering angle associated to the equations of motion \ref{N:scaled2} reads
\begin{align}\label{thetaM}
\theta^{(B)}_{\ep,\gamma}(M)=& \pi -2\arcsin\left(\frac{M}{A}-\ep \frac{B A^2}{2}\right)-2\int_{r_*}^{A}\frac{\frac{1}{r^2}(M-\ep\frac{B}{2}r^2)\,dr}{\sqrt{1-2\tilde{\Psi}_{eff}(r)}} \nonumber \\
= & 2 \int_{\arcsin (\frac{M}{A}-\frac{\ep BA}{2})}^{\pi /2} d \beta \bigg[1-\frac{(1-\frac{\ep BM}{2 u^2})\sin \beta}{u + su^{s-1}M^{-s} + \frac{\ep^2 B^2 M^2}{4 u^3}} \bigg]
\end{align}
where $\tilde{\Psi}_{eff}(r)=\tilde{\Psi}(r)+\frac{1}{2}\left(\frac{M}{r}-\frac{\ep}{2}Br\right)^2$,
$M$ is the value of the conserved momentum at the hitting time, i.e. $M=\rho+\ep A^2\frac{B}{2}$,
and $r_*$ is defined as the solution of the equation $2\tilde{\Psi}_{eff}(r_*)=1$. In the second line we made the change $ r \to u \to \beta$ where $ u=u(\beta,M)=\frac{M}{r}$ and $ \sin^2 \beta= 2 \tilde{\Psi}(M/u)$. Note that the change of variable $ u \to \beta$ is well defined because $ \tilde{\Psi}(M/u)$ is non-decreasing when $ u \in [M/A,M/r^*]$ for $ \ep$ small enough.

From \eqref{theta_rho} and \eqref{thetaM} it is clear that $ {\theta}_{\ep,\gamma}$ and $ \theta_{\ep,\gamma}^{(B)}$ have the same asymptotic behaviour as $ \ep$ approaches 0. %Hence, by  applying Proposition A.1 in \cite{DP} 
%one gets the claim.  
Since ${\theta}_{\ep,\gamma}\to \tilde\theta$, one gets the claim.

b) The inverse of the differential cross section associated to $ \tilde\theta$ is
\begin{align}\label{eq:inv_cross_no_B}
\bigg\vert \frac{d \theta_{\ep,\gamma}}{d \rho}\bigg \vert = \frac{2}{\rho^{s+1}}\int_{\arcsin(\rho /A)}^{\pi /2} \frac{d \beta \sin \beta s v^{s-1}}{(v + sv^{s-1}\rho^{-s})^2} \bigg[s-\frac{1 + s(s-1) v^{s-2}\rho^{-s}}{1+s v^{s-2} \rho^{-s}}  \bigg] + \frac{2 s \rho^{-2} A^{2-s}}{1+2 s \rho^{-2} A^{2-s}} \,.
\end{align}

We want to study the limit of $ d\theta_{\ep,\gamma}^{(B)}/d\rho$. For a mere computational convenience, we prefer to look at $ d\theta_{\ep,\gamma}^{(B)}/dM$ which is related to $ d\theta_{\ep,\gamma}^{(B)}/d\rho$ via 
\begin{align}\label{theta_M-theta_rho}
 \frac{d\theta_{\ep,\gamma}^{(B)}}{dM}(M)=\frac{d\theta_{\ep,\gamma}^{(B)}}{d\rho}(\rho + \frac{\ep BA^2}{2})\,.
\end{align}
From (\ref{thetaM}) one gets
\begin{align}\label{dtheta_M}
& \frac{d \theta_{\ep,\gamma}^{(B)}}{d M}= -\frac{2}{A \sqrt{1-(\frac{M}{A}-\frac{\ep BA}{2})^2}} \bigg[ \frac{s M^{-2}A^{2-s} + \ep B A^2 M^{-1}}{1 + sM^{-2}A^{2-s} + \frac{\ep^2}{4} B^2 A^4 M^{-2}} \bigg] \nonumber \\
&- 2 \int_{\arcsin (\frac{M}{A}-\frac{\ep BA}{2})}^{\pi /2}\frac{d \beta \sin \beta}{(u + su^{s-1}M^{-s} + \frac{\ep^2 B^2 M^2}{4 u^3})^2}\big[ s^2 u^{s-1}M^{-s-1}-u'(1+s(s-1)u^{s-2}M^{-s})   \big] \nonumber \\
& - {\ep B M} \int_{\arcsin (\frac{M}{A}-\frac{\ep BA}{2})}^{\pi /2}\frac{d \beta \sin \beta} {u}\bigg[s(s+1)u^{s-2}M^{-s}(u'M-1) + 3 u'M  \nonumber \\
& + \frac{5 \ep^2 B^2 M^3 u'}{4 u} + \frac{\ep BM}{u^2} + \frac{3 \ep B M^2}{2 u^3} -1 - \frac{3 \ep^2 B^2 M^2}{4 u^4} \bigg]
\end{align}
where
\begin{align}
u' = \frac{du}{dM} = \frac{1}{M^{s+1}}\bigg(\frac{su^{s-1} + (1-\frac{\ep BM}{2 u^2})\frac{\ep B M^{s+1}}{2u}}{1+su^{s-2}M^{-s} - \frac{\ep^2 B^2 M^2}{4 u^4}}\bigg) \,.
\end{align}
As for item a), one realizes that $ d\theta_{\ep,\gamma}^{(B)}/d\rho$ and $ d\theta_{\ep,\gamma}/d\rho$ are asymptotically equivalent for any $\rho$, thus Proposition A.1 in \cite{DP} implies that $\Gamma_{\ep, \gamma}^{(B)}(\theta) \to \Gamma(\theta)$ for $ \theta \in (-\pi, \pi)$ because its inverse map converges everywhere.

c) From (\ref{dtheta_M}) for $ \ep$ small enough a tedious expansion gives 
\begin{align}
\bigg\vert \frac{d \theta_{\ep,\gamma}^{(B)}}{d M} \bigg\vert \geq \bigg\vert \frac{d \theta_{\ep,\gamma}}{d \rho}\bigg\vert - \ep \vert \pazocal{R}(B,M,\ep)\vert \geq \frac{1}{C}\bigg\vert \frac{d {\theta}_{\ep,\gamma}}{d \rho}\bigg\vert
\end{align}
where $C >1 $ is a constant, $ \pazocal{R}$ is bounded in $ \ep$. The claim follows thanks to Proposition A.1 in \cite{DP}.
\end{proof}

\end{document}